\newtheorem{theorem}{Theorem}
\newtheorem{proposition}{Proposition}
\newtheorem{lemma}{Lemma}
\newtheorem{condition}{Condition}
\newtheorem{remark}{Remark}
\newcommand{\mytrans}{\top}
\newcommand{\vzero}{\boldsymbol{\mathbf{0}}} 
\newcommand\reals{\mathbb{R}} 
\newcommand*\diffF{\mathop{}\!\mathrm{d} F} 
\newcommand{\argmin}{\mathop\mathrm{arg min}}
\newcommand{\indifun}{\mathbbm{1}} 
\newcommand{\sgn}{\mathop\mathrm{sign}} 
\newcommand{\softthre}{\mathcal{T}} 
\newcommand{\Tr}{\mathop\mathrm{tr}} 
\newcommand{\Fnorm}[1]{\| #1 \|_{\text{F}}} 
\newcommand{\Prob}{\Pr} 
\newcommand{\E}{\mathbb{E}} 
\newcommand{\ind}{\mathrel{\perp\!\!\!\perp}} 
\newcommand{\toP}{\rightarrow_p} 
\newcommand{\tod}{\rightarrow_d} 
\newcommand{\Gsn}{\mathcal{N}}
\newcommand{\vect}[1]{\boldsymbol{#1}} 
\newcommand{\vectrv}[1]{\boldsymbol{#1}}
\newcommand{\vT}{\boldsymbol{\mathbf{T}}} 
\newcommand{\vM}{\boldsymbol{\mathbf{M}}} 
\newcommand{\vY}{\boldsymbol{\mathbf{Y}}} 
\newcommand{\vE}{\boldsymbol{\mathbf{E}}} 
\newcommand{\vepsilon}{\boldsymbol{\mathbf{\epsilon}}} 
\newcommand{\vu}{\boldsymbol{\mathbf{u}}} 
\newcommand{\vunhatA}{\hat{\vu}_n^{\alpha}} 
\newcommand{\vunhatB}{\hat{\vu}_n^{\beta}} 
\newcommand{\AS}[1][]{\mathcal{A}^{\ast #1}} 
\newcommand{\ASA}[1][]{\mathcal{A}_{\alpha}^{\ast #1}} 
\newcommand{\ASB}[1][]{\mathcal{A}_{\beta}^{\ast #1}} 
\newcommand{\AShat}{\hat{\mathcal{A}}} 
\newcommand{\ASBhat}{\hat{\mathcal{A}}_{n,\beta}}
\newcommand{\BT}{\eta} 
\newcommand{\BTn}{\BT^{\ast}} 
\newcommand{\An}[1][]{\boldsymbol{\mathbf{\alpha}}^{\ast #1}} 
\newcommand{\Bn}[1][]{\boldsymbol{\mathbf{\beta}}^{\ast #1}} 
\newcommand{\AnASA}[1][]{\boldsymbol{\mathbf{\alpha}}_{ \ASA}^{\ast #1}}
\newcommand{\BnASB}[1][]{\boldsymbol{\mathbf{\beta}}_{ \ASB}^{\ast #1}}
\newcommand{\Anj}[1][j]{\alpha_{#1}^{\ast}} 
\newcommand{\Bnj}[1][j]{\beta_{#1}^{\ast}}
\newcommand{\Bnp}[1][]{\boldsymbol{\mathbf{\beta}}_{n}^{\ast '}} 
\newcommand{\Anp}[1][]{\boldsymbol{\mathbf{\alpha}}_{n}^{\ast '}} 
\newcommand{\Anhat}[1][]{\hat{\boldsymbol{\mathbf{\alpha}}}_{n #1}}
\newcommand{\Bnhat}[1][]{\hat{\boldsymbol{\mathbf{\beta}}}_{n #1}}
\newcommand{\Anjhat}[1][j]{\hat{\alpha}_{n #1}}
\newcommand{\Bnjhat}[1][j]{\hat{\beta}_{n #1}}
\newcommand{\AnhatInit}{\hat{\boldsymbol{\mathbf{\alpha}}}_n^0}
\newcommand{\BnhatInit}{\hat{\boldsymbol{\mathbf{\beta}}}_n^0}
\newcommand{\AnjhatInit}[1][j]{\hat{\alpha}_{n #1}^0}
\newcommand{\BnjhatInit}[1][j]{\hat{\beta}_{n #1}^0}
\newcommand{\BnOLS}{\hat{\boldsymbol{\mathbf{\beta}}}_{n,o}}
\newcommand{\AnjOLS}{\hat{\alpha}_{nj,o}}
\newcommand{\BnjOLS}{\hat{\beta}_{nj,o}}
\newcommand{\SigmaE}{\vect{\Sigma}} 
\newcommand{\SigmaX}{\vect{\Sigma}_{X}}
\newcommand{\lambdaA}{\lambda_{n \alpha}}
\newcommand{\lambdaB}{\lambda_{n \beta}}
\newcommand{\gAOne}{\gamma_{\alpha}}
\newcommand{\gATwo}{\eta_{\alpha}}
\newcommand{\gBOne}{\gamma_{\beta}}
\newcommand{\gBTwo}{\eta_{\beta}}
\newcommand{\wA}[1][j]{\hat{w}_{n #1, \alpha}}
\newcommand{\wB}[1][j]{\hat{w}_{n #1, \beta}}
\newcommand{\C}{C} 
\newcommand{\bSim}{\delta} 
\newcommand{\TrnjA}{l_{nj,\alpha}} 
\newcommand{\TrnjB}{l_{nj,\beta}} 
\newcommand{\CTr}{l_0} 
\newcommand{\valpha}{\boldsymbol{\mathbf{\alpha}}}
\newcommand{\vbeta}{\boldsymbol{\mathbf{\beta}}}
\newcommand{\VnA}{V_n^{\alpha}} 
\newcommand{\VA}{V^{\alpha}} 
\newcommand{\VnB}{V_n^{\beta}} 
\newcommand{\VB}{V^{\beta}} 
\newcommand{\newm}{JAP} 
\newcommand{\AL}{AL}
\newcommand{\DM}{\mathbf{D}_{M}}
\newcommand{\PDM}{\mathbf{P}_{\mathbf{D}_{M}}}
\newcommand{\PDMP}{\mathbf{P}_{\mathbf{D}_{M}}^{\perp}}
\newcommand{\X}{\mathbf{X}_{n}}
\newcommand{\PXP}{\mathbf{P}_{\mathbf{X}_{n}}^{\perp}}
\newcommand{\CA}{\sigma^2_T} 
\newcommand{\thinLineWidth}{0.08mm} 
\newcommand{\thickLineWidth}{0.4mm} 
\newcommand{\ultraThickLineWidth}{0.8mm} 
\newcommand{\tikzsize}{1.5}
\newcommand{\FigOnesingle}{\tikz[scale=\tikzsize]{
    \node (T) at (0,0) {$T$};
    \node (M) at (1.2,0) {$M_j$};
    \node (Y) at (2.4,0) {$Y$};
    \draw[->,line width=\thickLineWidth] (T) -- node[pos=0.5, above] {$\alpha_j^\ast$} (M);
    \draw[->,line width=\thickLineWidth] (M) -- node[pos=0.5, above] {$\beta_j^\ast$} (Y);
}}
\newcommand{\FigTwosingle}{\tikz[scale=\tikzsize]{
    \node (T) at (0,0) {$T$};
    \node (M) at (1.2,0) {$M_j$};
    \node (Y) at (2.4,0) {$Y$};
    \draw[->,line width=\ultraThickLineWidth] (T) -- node[pos=0.5, above] {$\alpha_j^\ast$} (M);
    \draw[->,line width=\thinLineWidth] (M) -- node[pos=0.5, above] {$\beta_j^\ast$} (Y);
}}
\newcommand{\FigThreesingle}{\tikz[scale=\tikzsize]{
    \node (T) at (0,0) {$T$};
    \node (M) at (1.2,0) {$M_j$};
    \node (Y) at (2.4,0) {$Y$};
    \draw[->,line width=\thinLineWidth] (T) --  node[pos=0.5, above] {$\alpha_j^\ast$}(M);
    \draw[->,,line width=\ultraThickLineWidth] (M) -- node[pos=0.5, above] {$\beta_j^\ast$} (Y);
}}
\newcommand{\FigOne}{\tikz[scale=\tikzsize]{
    \node (T) at (0,0) {$T$};
    \node (M) at (1.5,0) {$M_j$};
    \node (Y) at (3,0) {$Y$};
    \draw[->,line width=\thickLineWidth] (T) -- node[pos=0.5, above] { $\alpha_j^\ast$} (M);
    \draw[->,line width=\thickLineWidth] (M) -- node[pos=0.5, above] { $\beta_j^\ast$} (Y);
    \draw[->,line width=\thickLineWidth,dashed] (T) .. controls (0.2, -0.8) and (2.8, - 0.8) ..      node[pos=0.5, above] {$\eta^\ast$} (Y); 
}}
\newcommand{\FigTwo}{\tikz[scale=\tikzsize]{
    \node (T) at (0,0) {$T$};
    \node (M) at (1.5,0) {$M_j$};
    \node (Y) at (3,0) {$Y$};
    \draw[->,line width=\ultraThickLineWidth] (T) -- node[pos=0.5, above] { $\alpha_j^\ast$} (M);
    \draw[->,line width=\thinLineWidth] (M) -- node[pos=0.5, above] { $\beta_j^\ast$} (Y);
    \draw[->,line width=\thickLineWidth,dashed] (T) .. controls (0.2, -0.8) and (2.8, - 0.8) ..      node[pos=0.5, above] { $\eta^\ast$} (Y); 
}}
\newcommand{\FigThree}{\tikz[scale=\tikzsize]{
    \node (T) at (0,0) {$T$};
    \node (M) at (1.5,0) {$M_j$};
    \node (Y) at (3,0) {$Y$};
    \draw[->,line width=\thinLineWidth] (T) --  node[pos=0.5, above] { $\alpha_j^\ast$}(M);
    \draw[->,,line width=\ultraThickLineWidth] (M) -- node[pos=0.5, above] { $\beta_j^\ast$} (Y);
    \draw[->,line width=\thickLineWidth,dashed] (T) .. controls (0.2, -0.8) and (2.8, - 0.8) ..      node[pos=0.5, above] { $\eta^\ast$} (Y);
}}
\newcommand{\FigFour}{\tikz[scale=\tikzsize]{
    \node (T) at (0,0) {$T$};
    \node (M) at (1.5,0) {$M_j$};
    \node (Y) at (3,0) {$Y$};
    \node at (0.8, 0.25) {\textcolor{gray}{$\alpha_j^{\ast} = 0$}};
    \draw[->,line width=\thickLineWidth] (M) -- node[pos=0.5, above] { $\beta_j^\ast$} (Y);
    \draw[->,line width=\thickLineWidth,dashed] (T) .. controls (0.2, -0.8) and (2.8, - 0.8) ..      node[pos=0.5, above] { $\eta^\ast$} (Y);
}}
\newcommand{\FigFive}{\tikz[scale=\tikzsize]{
    \node (T) at (0,0) {$T$};
    \node (M) at (1.5,0) {$M_j$};
    \node (Y) at (3,0) {$Y$};
    \draw[->,line width=\thickLineWidth] (T) -- node[pos=0.5, above] {$\alpha_j^\ast$} (M);
    \node at (2.3, 0.25) {\textcolor{gray}{$\beta_j^\ast = 0$}};
    \draw[->,line width=\thickLineWidth,dashed] (T) .. controls (0.2, -0.8) and (2.8, - 0.8) ..      node[pos=0.5, above] { $\eta^\ast$} (Y);
}}
\newcommand{\FigSix}{\tikz[scale=\tikzsize]{
    \node (T) at (0,0) {$T$};
    \node (M) at (1.5,0) {$M_j$};
    \node (Y) at (3,0) {$Y$};
    \node at (0.8, 0.25) {\textcolor{gray}{$\alpha_j^{\ast} = 0$}};
    \node at (2.3, 0.25) {\textcolor{gray}{$\beta_j^\ast = 0$}};
    \draw[->,line width=\thickLineWidth,dashed] (T) .. controls (0.2, -0.8) and (2.8, - 0.8) ..      node[pos=0.5, above] { $\eta^\ast$} (Y);
}}
\newcommand{\JointMA}{
    \begin{tikzpicture}[node distance=3cm and 3cm, >=stealth]
        \node (T) at (0,0) {$T$};
        \node (M) at (2.5,0) {$\boldsymbol{M} = \begin{pmatrix} M_1 \\ \vdots \\ M_p \end{pmatrix}$};
        \node (Y) at (5,0) {$Y$};

        \draw[->] (M) -- (Y);
        \draw[->] (T) -- (M);
        \draw[->] (T) .. controls (1, -1.7) and (4, -1.7) .. (Y);
    \end{tikzpicture}
}
\newcommand{\MultiMA}{
    \begin{tikzpicture}[node distance=3cm and 3cm, >=stealth]
        \node (T) at (0,0) {$T$};
        \node (Y) at (5,0) {$Y$};
        \node (M1) at (2.5,1.6) {$M_1$};
        \node (M2) at (2.5,1) {$M_2$};
        \node (dots) at (2.5, 0.4)[rotate=90] {...};
        \node (MP2) at (2.5,-0.2) {$M_{p-1}$};
        \node (MP) at (2.5,-0.8) {$M_p$};
        
        \draw[->] (T) -- (M1);
        \draw[->] (T) -- (M2);
        \draw[->] (T) -- (MP2);
        \draw[->] (T) -- (MP);
        \draw[->] (M1) -- (Y);
        \draw[->] (M2) -- (Y);
        \draw[->] (MP2) -- (Y);
        \draw[->] (MP) -- (Y);
        \draw[->] (T) .. controls (1, -1.7) and (4, -1.7) .. (Y);
    \end{tikzpicture}
}
\begin{document}
\title{Joint Adaptive Penalty for Unbalanced Mediation Pathways}
\author[1]{Hanying Jiang\thanks{hjiang252@wisc.edu}}
\author[1]{Kris Sankaran\thanks{ksankaran@wisc.edu}}
\author[1]{Yinqiu He\thanks{yinqiu.he@wisc.edu}}
\affil[1]{Department of Statistics, University of Wisconsin-Madison, USA}
\date{}

\maketitle

\begin{abstract}
Mediation analysis has been widely used to investigate how a treatment influences an outcome through intermediate variables, known as mediators. 
Analyzing a mediation mechanism typically requires assessing multiple model parameters that characterize distinct pathwise effects. 
Classical methods that estimate these parameters individually can be inefficient, particularly when the underlying pathwise effects exhibit substantial imbalance. 
To address this challenge, this work proposes a new joint adaptive penalty that integrates information across entire mediation mechanisms, thereby enhancing both parameter estimation and pathway selection. 
We establish theoretical guarantees for the proposed method under an asymptotic framework and conduct extensive numerical studies to demonstrate its superior performance in scenarios with unbalanced mediation pathways. 
\end{abstract}

\textit{Keywords:} Mediation analysis, adaptive penalty,  structural equation models. 

\section{Introduction}\label{sec:introduction}

Mediation analysis explores whether and how a treatment influences an outcome through intermediate mediator variables \citep{mackinnon2012introduction,tingley2014mediation}. 
It decomposes the treatment effect on the outcome into the indirect/mediation effect through the mediators and the direct effect through other causal mechanisms. 
This decomposition can improve our understanding of complex mechanisms and inform the design of effective interventions. 
In biomedical research, for example, mediation analysis has been used to examine how exposure to fine particulate matter increases mortality through metabolic and cardiovascular diseases \citep{bai2022chronic}, and how antibiotic treatments influence asthma development through changes in the microbiome \citep{toivonen2021antibiotic}. 
Across various scientific fields, modern technological advances enable the simultaneous measurement of numerous candidate mediators, such as genomic features \citep{abrishamcar2022dna, yang2024causal}, neural signatures \citep{chen2021identifying, zhao2022bayesian}, and metabolic traits \citep{ko2023metabolic, lu2023lipid}. 
Developing efficient analysis for multiple mediators can help uncover active pathways for more targeted interventions and deepen our understanding of complicated systems.

In this work, we focus on the multi-mediator framework where the goal is to understand how an exposure/treatment $T$ influences an outcome $Y$ through $p$ potential mediators $\boldsymbol{M} = [M_1, \ldots, M_p]^\top$. 
To achieve this, classical mediation analyses model the relationship between exposure, potential mediators, and the outcome through directed acyclic graphs. 
One class of existing studies investigates multiple mediators as a group, which can be illustrated as in  Figure \ref{fig:MA_single1}. 
This class includes methods examining the overall group-level mediation effect \citep{vanderweele2014mediation,zhou2020estimation,hao2023simultaneous} or jointly transforming multiple mediators, such as to principal components \citep{huang2016hypothesis,chen2018high,bellavia2019approaches,zhao2020sparse}. 

\begin{figure}
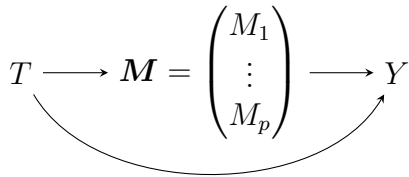
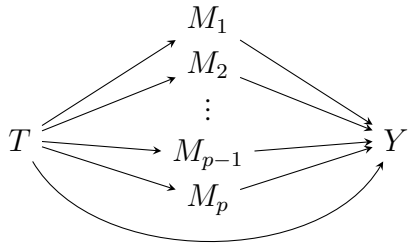

    \centering
    \begin{subfigure}{0.45\textwidth}
        \centering
        \JointMA
        \caption{Group-level mediation pathway.}
        \label{fig:MA_single1}
    \end{subfigure}
    \hfill
    \begin{subfigure}{0.45\textwidth}
        \centering
        \MultiMA
        \caption{Individual mediation pathways.}
        \label{fig:MA_multiple1}
    \end{subfigure}
    \caption{Directed acyclic graphs for mediation pathways.}
    \label{fig:MA1}
\end{figure}

To further reveal the detailed causal mechanisms through the observed mediators, various efforts have also been made to study pathways via each individual mediator \citep{imai2013identification,vansteelandt2017interventional,jerolon2021causal}, as illustrated in Figure \ref{fig:MA_multiple1}. 
In this case, a central question is to estimate the mediation effect through each individual mediator $M_j$ and assess whether it is significantly different from zero, indicating whether $M_j$ plays an active mediating role or not. 
Estimating mediation effects and identifying active pathways through individual mediators can uncover complicated causal mechanisms and guide the development of effective interventions. 
To address the above question, researchers have proposed statistical methods from both frequentist and Bayesian perspectives. 
From the frequentist perspective, pathwise effects are often modeled as fixed parameters, commonly as coefficients in structural equation models. 
Examining mediation effects is then formulated as estimating these coefficients and determining whether or not the corresponding estimands are zero. 
Within this framework, one research line focuses on fitting a joint model of $(T,\boldsymbol{M},Y)$ based on observed data, often incorporating regularizations to handle multiple mediators. 
Examples include the minimax concave penalty \citep{zhang2016estimating}, de-biased LASSO \citep{gao2019testing}, adaptive LASSO \citep{zhang2022high} and pathway-specific penalty \citep{zhao2022pathway}. 
Another research line considers scenarios where estimators for pathwise effects along $T \to M_j$ and $M_j \to Y$ have been obtained along with their asymptotic distributions. 
Then these studies primarily focus on correcting for multiple comparisons across multiple mediators \citep{dai2022multiple,liu2022large,du2023methods}. 
From the Bayesian perspective, pathwise effects are modeled as random coefficients in structural equation models. 
Estimation and variable selection can be achieved through Bayesian shrinkage estimation with appropriate priors  \citep{song2020bayesian,song2021bayesian}.

This work focuses on fitting the {joint} model of $(T, \boldsymbol{M}, Y)$ with regularization from the {frequentist} perspective. 
Within this paradigm, most existing methods apply separate regularizations for the exposure-to-mediator ($T \to \boldsymbol{M}$) and mediator-to-outcome ($\boldsymbol{M} \to Y$) paths \citep{gao2019testing,zhang2022high}. 
However, such a separate-fitting strategy does not effectively combine joint information across the two sets of paths $T \to \boldsymbol{M}$ and $\boldsymbol{M} \to Y$ in mediation analysis. 
This limitation becomes especially problematic if the pathwise effects are \textit{unbalanced}. 
For example, when the pathwise effect along $T\to M_j$ is strong but that along $M_j \to Y$ is weak, examining the two paths separately may result in overlooking this active mediation pathway $T \to M_j\to Y$ as a whole. 
A more detailed illustration under the classical parallel model is provided in Section \ref{sec:challenges}. 
In the existing literature, \cite{zhao2022pathway} introduces a pathway-specific LASSO method and jointly fits the model over $(T, \boldsymbol{M}, Y)$. 
However, it can be computationally expensive, and its accuracy of identifying active pathways can be low, as further demonstrated in the Supporting Information.

This work focuses on fitting the {joint} model of $(T, \boldsymbol{M}, Y)$ with regularization from the {frequentist} perspective. 
Within this paradigm, most existing methods apply separate regularizations for the exposure-to-mediator ($T \to \boldsymbol{M}$) and mediator-to-outcome ($\boldsymbol{M} \to Y$) paths \citep{gao2019testing,zhang2022high}.  
However, such a separate-fitting strategy does not effectively combine joint information across the two sets of paths $T \to \boldsymbol{M}$ and $\boldsymbol{M} \to Y$ in mediation analysis. 
This limitation becomes especially problematic if the pathwise effects are \textit{unbalanced}. 
For example, when the pathwise effect along $T \to M_j$ is strong but that along $M_j \to Y$ is weak, examining the two paths separately may result in overlooking this active mediation pathway $T \to M_j\to Y$ as a whole. 
A more detailed illustration under the classical parallel model is provided in Section \ref{sec:challenges}. 
In the existing literature, \cite{zhao2022pathway} introduces a pathway-specific LASSO method and jointly fits the model over $(T, \boldsymbol{M}, Y)$. 
However, it can be computationally expensive, and its accuracy of identifying active pathways can be low, as further demonstrated in Section \ref{sec:PathwayLASSO} of the appendix. 

To overcome the above challenges, this work proposes a new joint adaptive penalty that combines information across distinct and potentially unbalanced pathwise effects while maintaining low computational cost. 
The penalty is constructed by incorporating adaptive weights informed by the significance of target mediation effects. 
Theoretically, we establish asymptotic guarantees showing that the proposed penalty controls estimation errors and achieves consistent selection of active mediation pathways. 
Through extensive numerical studies, we demonstrate our method is scalable and yields superior performance across various scenarios. 
The new penalization framework will advance current mediation analysis with multiple mediators, facilitating more important scientific discoveries. 

The rest of the paper is organized as follows. 
Section \ref{sec:MA_framework} introduces the framework under which our analysis is conducted. 
Section \ref{sec:method} introduces the new proposed penalty, including its construction and asymptotic theory. 
Section \ref{sec:literature_review} reviews the comparable methods in the existing literature. 
Section \ref{sec:simulations} conducts numerical experiments to compare the proposed method and the existing methods under finite samples. 
In Section \ref{sec:application}, we demonstrate our method by investigating the effect of gastrectomy on total cholesterol level mediated by the gut microbiome. 
We conclude this paper with discussions in Section \ref{sec:discussion}. 
All the proofs are deferred to the appendix. 

We will use the following notations throughout the paper. 
For two sequences of real numbers $\left(a_n\right)$ and $\left(b_n\right)$, 
we let $a_n \gg b_n$ denote $\lim_{n \to \infty} {b_n}/{a_n} = 0$, 
let $a_n \ll b_n$ denote $\lim_{n \to \infty} {a_n}/{b_n} = 0$, 
and let $a_n \lesssim b_n$ denote that there exists a constant $C>0$ such that $|a_n| \leq C |b_n|$ for all $n$. 
For a sequence of random variables $(X_n)$ and a sequence of real numbers $(a_n)$, 
we let $X_n = O_p(a_n)$ represent that for any $\epsilon > 0$, 
there is a positive constant $C_\epsilon$ such that $\sup_n \Pr\left( \left|X_n\right| \geq C_\epsilon |a_n| \right) < \epsilon$. 
We use $\toP$ to denote convergence in probability. 
For a vector $\vect{x}=\left[ x_1,\ldots, x_p \right]^{\top} \in \reals^p$ and a positive integer $q$, 
let $\|\vect{x}\|_q = (\sum_{j = 1}^{p} \left|x_j\right|^q)^{1/q}$ represent the $\ell_q$ norm of $\vect{x}$. 
For a matrix $\vect{A} = \left[a_{ij}\right]\in \reals^{m\times n}$, 
let $\vect{A}^\top$ represent its transpose, 
and let $\Fnorm{\vect{A}} = (\Tr (\vect{A}^\top \vect{A}))^{1/2}$ represent its Frobenius norm. Let  $A \ind B \mid \mathcal{E}$ represent the independence of random variables $A$ and $B$ conditional on an event $\mathcal{E}$.

\section{Model and Setup}\label{sec:MA_framework}

We consider that the exposure $T$, $p$ potential mediators $\boldsymbol{M}$, and outcome $Y$ follow the canonical linear structural equation model \citep{mackinnon2012introduction}: 
\begin{align} 
   \vectrv{M} = \vect{\alpha}^{\ast} T + \vect{\zeta}_M^{\ast \top} \vectrv{X} + \vectrv{E}, \quad \quad 
   \ Y        = \BT^{\ast} T + \vect{\beta}^{\ast \top} \vectrv{M} + \vect{\zeta}_Y^{\ast \top} \vectrv{X} + \epsilon, \label{eq:LSEM}
\end{align}
where 
$\vect{\alpha}^{\ast} = \left[\alpha_1^{\ast}, \ldots, \alpha_p^{\ast}\right]^{\top} \in \mathbb{R}^p$, 
$\vect{\beta}^{\ast} = \left[\beta_1^{\ast}, \ldots, \beta_p^{\ast}\right]^{\top} \in \mathbb{R}^p$, 
$\vect{\zeta}_M^{\ast} = \left[\zeta_{M,ij}^{\ast}\right]_{1 \leq i \leq q, 1 \leq j \leq p} \in \mathbb{R}^{q \times p}$, and 
$\vect{\zeta}_Y^{\ast} = \left[\zeta_{Y,1}^\ast, \ldots, \zeta_{Y, q}^\ast \right]^{\top} \in \mathbb{R}^q$. 
Additionally, $\vectrv{X}$ represents a $q$-dimensional observed pre-treatment confounding variable, and we assume its first element is set to be one to allow for an intercept. 
The random errors $\vectrv{E}$ and $\epsilon$ are independent with zero mean, and $(\vectrv{E}, \epsilon)$ are independent of $(T, \vectrv{X})$. 
We emphasize that the model \eqref{eq:LSEM} is considered for the simplicity of illustration and interpretation, whereas our Joint Adaptive Penalty proposed in Section \ref{sec:method} is general and could potentially be extended under other models for mediation pathway analysis. 

Coefficients in the model \eqref{eq:LSEM} can be connected with causal estimands, particularly individual mediation/indirect effects under the counterfactual framework \citep{imai2013identification,loh2022disentangling}. 
In the existing literature, one class of works examines classical natural indirect effects through individual mediators, and identification typically requires that the causal relationships between multiple mediators be either absent or known \citep{daniel2015causal,taguri2018causal}. 
Another class of studies examines interventional indirect effects, which are defined by setting the mediator to a random draw from the distribution of the counterfactual mediator, and often do not require knowledge of the causal structure among mediators \citep{vansteelandt2017interventional}.
Under the canonical parallel path model \eqref{eq:LSEM}, the analytical forms of the natural and interventional indirect effects coincide under suitable assumptions \citep{jerolon2021causal,loh2022disentangling}. For simplicity, we only review the standard natural indirect effect and present its analytical form under \eqref{eq:LSEM}. 
Notably, the conditions for identifying interventional indirect effects are generally weaker, with further discussions available in  \cite{loh2022disentangling} and \cite{miles2023causal}. 

Let $\vectrv{M}(t) = (M_1(t), \ldots, M_p(t))^{\top}$ represent the potential value of $\vectrv{M}$ under the treatment status $t$. 
For each $j = 1, \ldots, p$, let $\vectrv{M}_{-j}$ denote entries in $\vectrv{M}$ excluding $M_j$, and similarly define $\vectrv{M}_{-j}(t)$. 
Let $Y(t, \vect{m})$ represent the potential outcome of $Y$ if $T$ and $\vectrv{M}$ were set to be $t$ and $\vect{m}$, respectively. 
To define the natural indirect effect through a mediator, we follow the framework in \cite{imai2013identification} which assumes no causal ordering between mediators. 
In this case, we can simultaneously let $(M_j(t'), \vectrv{M}_{-j}(t''))$ denote potential outcomes of $M_j$ and $\vectrv{M}_{-j}$ if $T$ were set to be $t'$ and $t''$, respectively, for $j = 1, \ldots, p$. 
We then consider the natural indirect effect through $M_j$ defined as 
$\delta_j(t'; t) = \E \left[Y(t, M_j(t'), \vectrv{M}_{-j}(t))\right] - \E\left[Y(t, M_j(t), \vectrv{M}_{-j}(t))\right]$, where $t'$ and $t$ are the treatment statuses being compared. 
To identify $\delta_j(t'; t)$, we assume the standard consistency condition (Condition \ref{cond:consistency}) and the sequential ignorability condition for multiple causally unrelated mediators (Condition \ref{cond:simma}) first introduced in \cite{jerolon2021causal}. 

\begin{condition}\label{cond:consistency}  
    For all possible values of $t$ and $\vect{m}$, 
    $\vectrv{M} = \vectrv{M}(t)$ and $Y = Y(t, \vectrv{M}(t))$ if $T = t$, and $Y = Y(t, \vect{m})$ if $T = t$ and $\vectrv{M} = \vect{m}$.
\end{condition}

\begin{condition}\label{cond:simma}   
    For $j = 1, \ldots, p$ and all possible values of $t, t', t'', m$, and $\vect{w}$, 
    assume (i) $\{Y(t, m, \vect{w}), M_j(t'), \vectrv{M}_{-j}(t'') \} \ind T \mid \{\vectrv{X} = \vect{x}\},$ \ 
    (ii) $Y(t', m, \vect{w}) \ind (M_j(t), \vectrv{M}_{-j}(t)) \mid \{T = t, \vectrv{X} = \vect{x}\},$ 
    and (iii) $ Y(t, m, \vect{w}) \ind (M_j(t'), \vectrv{M}_{-j}(t)) \mid  \{T = t, \vectrv{X} = \vect{x}\}$. 
\end{condition} 

Detailed interpretations of Condition \ref{cond:simma} can be found in \cite{jerolon2021causal}. 
Notably, Condition \ref{cond:simma} allows the mediators to be uncausally correlated after conditioning on the treatment and observed pretreatment confounders, e.g., due to unmeasured pretreatment confounders. 
Such flexibility can accommodate mediators that  covary together in real-world applications. 
We next derive an analytical formula of $\delta(t,t')$ in Lemma \ref{lemma:identifiability} below. It generalizes Corollary 3.2 in \cite{jerolon2021causal} by relaxing their assumptions on the Gaussianity and constant correlations of noise terms.

\begin{lemma}\label{lemma:identifiability}
    Under the model \eqref{eq:LSEM} and Conditions \ref{cond:consistency} and \ref{cond:simma}, 
    $\delta_j(t'; t) = \alpha_j^{\ast} \beta_j^{\ast}(t' - t).$  
\end{lemma}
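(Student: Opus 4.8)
The plan is to compute the conditional mean $\E[Y(t, M_j(t'), \vectrv{M}_{-j}(t)) \mid \vectrv{X} = \vect{x}]$ by repeatedly invoking Condition \ref{cond:simma} to replace counterfactual quantities with observed ones, then substituting the structural equations \eqref{eq:LSEM}, and finally taking the difference against the analogous expression with $t'$ replaced by $t$. First I would fix $\vect{x}$ and condition throughout on $\vectrv{X} = \vect{x}$. Using the consistency condition and part (iii) of Condition \ref{cond:simma}, I would argue that the conditional distribution of $Y(t, m, \vect{w})$ given $T = t$ does not depend on the realized mediator values, so that $\E[Y(t, M_j(t'), \vectrv{M}_{-j}(t)) \mid \vectrv{X} = \vect{x}]$ can be written as an integral of $\E[Y(t, m, \vect{w}) \mid T = t, \vectrv{X} = \vect{x}]$ against the joint law of $(M_j(t'), \vectrv{M}_{-j}(t))$ given $\vectrv{X} = \vect{x}$, where part (i) lets me drop the conditioning on $T$ inside that law.

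Next I would plug in the outcome equation of \eqref{eq:LSEM}. Under consistency, $Y(t, \vect{m}) = \BT^{\ast} t + \vect{\beta}^{\ast\top}\vect{m} + \vect{\zeta}_Y^{\ast\top}\vectrv{X} + \epsilon$, so $\E[Y(t, m, \vect{w}) \mid T = t, \vectrv{X} = \vect{x}]$ is linear in $(m, \vect{w})$ with $\E[\epsilon \mid T=t, \vectrv{X}=\vect{x}] = 0$ since $(\vectrv{E}, \epsilon) \ind (T, \vectrv{X})$ and $\epsilon$ has zero mean. Integrating this linear function against the law of $(M_j(t'), \vectrv{M}_{-j}(t))$ reduces everything to the means $\E[M_j(t') \mid \vectrv{X} = \vect{x}]$ and $\E[M_{k}(t) \mid \vectrv{X} = \vect{x}]$ for $k \neq j$. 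By part (i) of Condition \ref{cond:simma}, $M_j(t') \ind T \mid \vectrv{X} = \vect{x}$, so $\E[M_j(t') \mid \vectrv{X}=\vect{x}] = \E[M_j \mid T = t', \vectrv{X} = \vect{x}]$, which by the mediator equation in \eqref{eq:LSEM} and $\E[\vectrv{E}\mid T, \vectrv{X}] = \vzero$ equals $\alpha_j^{\ast} t' + (\vect{\zeta}_M^{\ast\top}\vect{x})_j$; similarly $\E[M_k(t) \mid \vectrv{X}=\vect{x}] = \alpha_k^{\ast} t + (\vect{\zeta}_M^{\ast\top}\vect{x})_k$.

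Assembling these pieces gives $\E[Y(t, M_j(t'), \vectrv{M}_{-j}(t)) \mid \vectrv{X}=\vect{x}] = \BT^{\ast} t + \beta_j^{\ast}(\alpha_j^{\ast} t' + (\vect{\zeta}_M^{\ast\top}\vect{x})_j) + \sum_{k\neq j}\beta_k^{\ast}(\alpha_k^{\ast} t + (\vect{\zeta}_M^{\ast\top}\vect{x})_k) + \vect{\zeta}_Y^{\ast\top}\vect{x}$. The control term $\E[Y(t, M_j(t), \vectrv{M}_{-j}(t)) \mid \vectrv{X}=\vect{x}]$ is identical except that $\beta_j^{\ast}\alpha_j^{\ast} t'$ becomes $\beta_j^{\ast}\alpha_j^{\ast} t$, so subtracting and then marginalizing over $\vectrv{X}$ leaves exactly $\alpha_j^{\ast}\beta_j^{\ast}(t' - t)$, as claimed. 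The main obstacle is the first step: justifying the iterated-expectation manipulation rigorously, i.e. showing that the three independence statements in Condition \ref{cond:simma} together with consistency genuinely license writing the composite counterfactual mean as $\int \E[Y(t,m,\vect{w})\mid T=t,\vectrv{X}=\vect{x}]\, dP_{(M_j(t'),\vectrv{M}_{-j}(t))\mid \vectrv{X}=\vect{x}}(m,\vect{w})$ — one must be careful that the linearity of the structural outcome equation is what makes only the first moments of the mediator counterfactuals enter, so that the (possibly non-Gaussian, non-constant) dependence structure among the noise terms is irrelevant, which is precisely the generalization over Corollary 3.2 of \cite{jerolon2021causal} being claimed.
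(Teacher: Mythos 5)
Your proposal is correct and follows essentially the same route as the paper's proof: iterated expectation over the joint law of $(M_j(t'), \vectrv{M}_{-j}(t))$ given $\vectrv{X}$, Condition \ref{cond:simma}(i) to insert/remove conditioning on $T$, Condition \ref{cond:simma}(iii) plus consistency to replace the counterfactual conditional mean by the observed regression function, and linearity of \eqref{eq:LSEM} to reduce everything to the first moments $\E[M_j(t')\mid\vectrv{X}]$ before differencing. The step you flag as the main obstacle is handled in the paper exactly as you outline, so no gap remains.
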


Lemma \ref{lemma:identifiability} shows that $\delta_j(t'; t)$ is proportional to the product of coefficients $\alpha_j^{\ast} \beta_j^{\ast}$ under the model \eqref{eq:LSEM}. 
Therefore, a mediator $M_j$ and its corresponding individual pathway $T \to M_j \to Y$ are referred to as active if $\alpha_j^{\ast} \beta_j^{\ast} \neq 0$. 

\begin{remark}\label{rm:identification}
    In scenarios with multiple mediators, various definitions of indirect effects have been proposed, and the associated identification conditions can differ from and even relax those discussed above, including allowing known and unknown causal orderings between mediators \citep{daniel2015causal,loh2022disentangling}.
    Despite that, the same product-of-coefficients form have been consistently  observed \citep{imai2013identification,daniel2015causal,huang2016hypothesis,loh2022disentangling}. 
    Our proposed method will be based on these products  $\alpha_j^{\ast} \beta_j^{\ast}$ and thus is inherently \textit{generalizable} beyond the particular set of definitions and assumptions reviewed above.  
\end{remark}

\section{Joint Adaptive Penalty}\label{sec:method}

\subsection{Unbalanced Mediation Pathways and Challenges}\label{sec:challenges}

We aim to fit the model \eqref{eq:LSEM} while performing efficient identification on the set of mediators $M_j$ with active mediation effects. 
Under our studied model \eqref{eq:LSEM}, an individual pathway $T \to M_j \to Y$ consists of two paths $T \to M_j$ and $M_j \to Y$ with effects characterized by the two coefficients $\alpha_j^\ast$ and $\beta_j^\ast$, respectively. 
Based on the relative magnitudes between $\alpha_j^{\ast}$ and $\beta_j^{\ast}$, we can divide active pathways into balanced and unbalanced cases. 
Specifically, there can exist three scenarios: 
(a) balanced pathway where the absolute values of $\alpha_j^\ast$ and $\beta_j^\ast$ are similar, 
(b) unbalanced pathway with the scale of $\alpha_j^\ast$ being much larger than that of $\beta_j^\ast$, and 
(c) unbalanced pathway with the scale of $\alpha_j^\ast$ being much smaller than that of $\beta_j^\ast$. 
These scenarios are visualized as in Figure \ref{fig:dag1} (a)--(c). 
As mentioned in Section \ref{sec:introduction}, 
existing methods that apply separate regularizations on the effects $\alpha_j^{\ast}$ and $\beta_j^{\ast}$ may lack power for identifying a significant mediation effect $\alpha_j^{\ast}\beta_j^{\ast}$ under unbalanced scenarios, 
as one of the coefficients can be too small to detect its significance. 

\begin{figure}[!htbp]
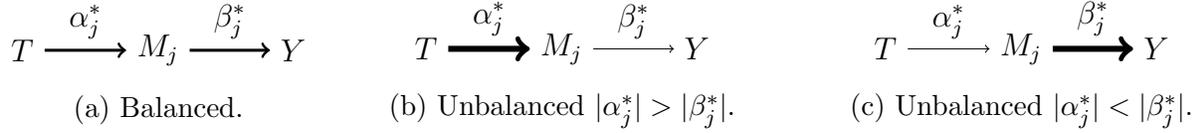

    \centering 
    \begin{subfigure}{0.26\textwidth}
        \centering
        \FigOnesingle
        \caption{Balanced.}
    \end{subfigure}
    \hfill
    \begin{subfigure}{0.35\textwidth}
        \centering
        \FigTwosingle
        \caption{Unbalanced $|\alpha_j^{\ast}|>|\beta_j^{\ast}|$.}
    \end{subfigure}
    \hfill
    \begin{subfigure}{0.35\textwidth}
        \centering
        \FigThreesingle
        \caption{Unbalanced  $|\alpha_j^{\ast}|<|\beta_j^{\ast}|$.}
    \end{subfigure}
    \caption{Visualization of balanced and unbalanced mediation pathways. Line widths of the solid links show relative magnitudes of the corresponding pathwise effects.}
    \label{fig:dag1}
\end{figure} 

\subsection{Procedure} \label{subsec:procedure}

To address the above inherent challenge of identifying effects of unbalanced pathways, we propose a new penalty that can be adaptive to the joint pathway effect of interest, referred to as Joint Adaptive Penalty (\newm{}) below. 
The overall idea is that less regularization shall be used to estimate $\alpha_j^{\ast}$ and $\beta^{\ast}_j$ if a targeted mediation effect $\alpha_j^{\ast} \beta_j^{\ast}$ is significant, and vice versa. 
As true coefficients are unknown in practice, our proposal proceeds in two stages: first, obtain suitable initial estimates of the mediation effects $\alpha_j^{\ast} \beta_j^{\ast}$'s, and second, refit the model with regularization adjusted according to the initial estimates. 
For concreteness, we next describe our proposed method based on the $\ell_1$-norm LASSO penalty of coefficients \citep{tibshirani1996regression,zou2006adaptive}. 
But we emphasize the idea is general and could potentially be extended under other regularizations.  

In particular, consider a dataset with $n$ independently and identically distributed observations $\{T_i, \boldsymbol{M}_i, Y_i, \boldsymbol{X}_i\}_{i = 1}^n$. 
Let $\vT_n = [T_{1}, \ldots, T_{n}]^{\top} \in \reals^n$, $\vM_n = \left[M_{ij} \right]_{1 \leqslant i \leqslant n, 1 \leqslant j \leqslant p} \in \reals^{n \times p}$, $\vY_n = [Y_{1}, \ldots, Y_{n}]^{\top} \in \reals^n$, and $\mathbf{X}_n = \left[X_{ij} \right]_{1 \leqslant i \leqslant n, 1 \leqslant j \leqslant q} \in \reals^{n \times q}$. 
We assume the data following the model \eqref{eq:LSEM}, i.e.,
\begin{equation}\label{eq:model_n}
\begin{aligned}
    \vM_n &= \vT_n \An[\top] + \mathbf{X}_n \boldsymbol{\zeta}_{M}^{*} + \vE_n, \\
    \vY_n &= \vT_n  \BTn + \vM_n \Bn + \mathbf{X}_n \boldsymbol{\zeta}_{Y}^{*} + \vepsilon_n,
\end{aligned}
\end{equation}
where $\vE_n = \left[E_{ij} \right]_{1 \leqslant i \leqslant n, 1 \leqslant j \leqslant p}$ and $\vepsilon_n = [\epsilon_{1}, \ldots, \epsilon_{n}]^\top$ are the error matrix/vector of the exposure-to-mediator and mediator-to-outcome models, whose rows/entries are independently and identically distributed with zero mean. 
For the simplicity of presentation, We let
\begin{align*}
    \mathbf{D}_M = &~ (\vT_n, \mathbf{X}_n) \in \mathbb{R}^{n \times (1 + q)}, \  &\boldsymbol{\theta}_M = &~(\vect{\alpha}, \vect{\zeta}_M^{ \top})^{\top} \in \mathbb{R}^{(1 + q) \times p},\\
    \mathbf{D}_Y = &~ (\vT_n, \mathbf{X}_n, \vM_n) \in \mathbb{R}^{n \times (1 + q + p)}, \  &\boldsymbol{\theta}_Y = &~(\BT, \vect{\zeta}_Y^{ \top}, \vect{\beta}^{ \top})^{\top} \in \mathbb{R}^{1 + q + p}
\end{align*}
represent design matrices and coefficients in the exposure-to-mediator and mediator-to-outcome models, respectively. 
We propose the following procedure that can estimate model coefficients and identify active mediation pathways. 

\medskip
\textbf{Step 1: Initialization.} 
Construct initial estimates 
$\AnhatInit = [\AnjhatInit[1], \ldots, \AnjhatInit[p]]^\top$ and 
$\BnhatInit = [\BnjhatInit[1], \ldots, \BnjhatInit[p]]^\top$ for $\An$ and $\Bn$, and let $\AnjhatInit \BnjhatInit$ be initial estimates for $\alpha_{nj}^{\ast} \beta_{j}^{\ast}$ across $j = 1, \ldots, p$. 

\textbf{Step 2: Joint Adaptive Penalized Regression.} 
Estimate coefficients in the model \eqref{eq:LSEM} by solving 
\begin{align}
    \hat{\boldsymbol{\theta}}_{M, n} = &~ \argmin_{ \boldsymbol{\theta}_M \in \mathbb{R}^{(1 + q) \times p} } \, 
    \ell_{M}(\boldsymbol{\theta}_M; \, \mathbf{D}_M, \mathbf{M}_n) + \lambdaA \sum_{j = 1}^p \frac{|\alpha_{j}|}{\wA}, \label{eq:fit_mediator}\\
    \hat{\boldsymbol{\theta}}_{Y,n} = &~  \argmin_{ \boldsymbol{\theta}_Y \in \mathbb{R}^{1 + q + p} } \, \ell_{Y}( \boldsymbol{\theta}_Y; \, \mathbf{D}_Y, \mathbf{Y}_n) +  \lambdaB \sum_{j = 1}^p \frac{|\beta_{j}|}{\wB}, \label{eq:fit_outcome}
\end{align}
where $\ell_{M}(\cdot)$ and $\ell_{Y}(\cdot)$ represent the loss functions that fit data without regularizations imposed on $\boldsymbol{\alpha}$ and $\boldsymbol{\beta}$, 
$\lambdaA\geqslant 0$ and $ \lambdaB\geqslant 0$ are regularization parameters for $\boldsymbol{\alpha}$ and $\boldsymbol{\beta}$, respectively, 
and we construct pathway adaptive weights 
\begin{align} \label{eq:wweights}
    \wA = |\AnjhatInit \BnjhatInit|^{\gAOne} + |\AnjhatInit|^{2 \gATwo}
    \hspace{1.7em} \text{and} \hspace{1.7em} 
    \wB = |\AnjhatInit \BnjhatInit|^{\gBOne} + |\BnjhatInit|^{2 \gBTwo}, 
\end{align} 
with prespecified constants $\gAOne > 2\gATwo > 0$ and $\gBOne > 2\gBTwo > 0$. 
The proposed joint adaptive penalized estimates $\Anhat = (\hat{\alpha}_{n1},\ldots, \hat{\alpha}_{np})^{\top}$ and $\Bnhat=(\hat{\beta}_{n1},\ldots, \hat{\beta}_{np})^{\top}$ for $\vect{\alpha}^{\ast}$ and $\vect{\beta}^{\ast}$ are constructed as taking the corresponding components in $\hat{\boldsymbol{\theta}}_{M,n} $ and $\hat{\boldsymbol{\theta}}_{Y,n}$.

\textbf{Step 3: Identification of Active Mediation Pathways.}
The proposed penalized estimates can be used to identify active mediation pathways with selection efficiency improved from combining two pathway effects. 
In particular, we construct the set
\begin{align} \label{eq:ashat}
    \AShat_n = \left\{j:\ \Anjhat \Bnjhat \neq 0, j = 1, \ldots, p  \right\}. 
\end{align}

\medskip

The proposed penalties in \eqref{eq:fit_mediator} and \eqref{eq:fit_outcome} extend the classical LASSO penalty and yield tailored estimation for mediation pathway analysis. 
When $\gATwo = \gAOne = \gBTwo = \gBOne = 0$, the weights in \eqref{eq:wweights} reduce to constants, and \eqref{eq:fit_mediator} and \eqref{eq:fit_outcome} become equivalent to fitting LASSO to the exposure-to-mediator and mediator-to-outcome models, respectively. 
Our proposed penalties achieve estimation that is adaptive to mediation pathway properties by incorporating the weights in \eqref{eq:wweights}. 
In particular, each weight in \eqref{eq:wweights} consists of two parts: 
one proportional to the exponential of $|\AnjhatInit \BnjhatInit|$, the absolute value of the initial estimate of the mediation effect, 
and the other proportional to the exponential of the magnitude of the initial estimate of a single coefficient, i.e., $|\AnjhatInit|$ or $|\BnjhatInit|$. 
Without the former, the proposed penalty reduces to the adaptive LASSO \citep{zou2006adaptive}, which adaptively assigns a smaller penalty to a coefficient if its initial estimate is significant. 
Our proposal generalizes the idea by assigning a smaller penalty to a coefficient if either its own initial estimate is significant or the initial estimate of its corresponding mediation effect is significant. 
In this way, if a single coefficient is weak but its corresponding mediation effect is significant, it is less likely to be missed using the proposed penalties compared to using the LASSO or adaptive LASSO. 
This can be especially helpful under the types of unbalanced mediation pathways illustrated in Figure \ref{fig:dag1}. 

The proposed initialization+refitting strategy is advantageous for efficient computation and flexible implementation. 
In particular, the adaptive pathway information $|\AnjhatInit \BnjhatInit|$ from initialization is fixed during the refitting stage, and the refitting optimization in \eqref{eq:fit_mediator} and \eqref{eq:fit_outcome} are convex with respect to $\boldsymbol{\alpha}$ and $\boldsymbol{\beta}$, allowing for the application of a wide range of standard solvers and flexible choice of loss functions. 
On the other hand, although it may be tempting to consider a penalty that is a function of $|\alpha_j \beta_j|$, i.e., directly examining targeted mediation effects without initialization, we note that $|\alpha_j \beta_j|$ is not a convex function with respect to $(\alpha_j, \beta_j)$ as pointed out by \cite{zhao2022pathway}. 
As a result, extra adjustments may be needed and the computation can be burdensome. 

More generally, the proposed adaptive weights may be combined not only with LASSO but also with any other penalties that one prefers. 
As an example, for another penalty $\mathcal{P}(\alpha_j)$ used in fitting the $T \to M_j$ model, we can similarly reweight the penalty as $\mathcal{P}(\alpha_j)/ \wA$ to achieve adaptive properties under unbalanced mediation pathways. 
In the above discussion, we illustrate the proposed idea using $\mathcal{P}(\alpha_j) = |\alpha_j|$ which generalizes LASSO for its popularity and simplicity of implementation. 
We will stick to this choice in the remainder of this paper for ease of presentation, but we expect that similar theoretical and numerical properties can be achieved using other forms of penalties under suitable conditions. 

\begin{remark} \label{rm:extension}
The proposed adaptive weighting strategy can be readily extended beyond the model discussed above, making it a versatile tool under a range of problems with potential unbalanced parameters. 
A key strength of our approach is its ability to borrow statistical efficiency across models through initialization, while preserving low computational cost in the refitting phase. 
This flexibility makes it particularly well-suited for examining target causal effects that depend on multiple model parameters, a common characteristic in mediation pathway models across diverse data types, including compositional data \citep{sohn2019compositional,sohn2022compositional,jiang2024multimedia}, categorical and count data \citep{hao2025class}, and survival outcome \citep{tchetgen2011causal}. 
Furthermore, while the adaptive strategy is demonstrated under two-step pathways as illustrated in Figure \ref{fig:dag1}, it can be similarly generalized to multi-step causal chains involving multiple parameters \citep{shi2022testing}, underscoring its potential as a comprehensive and scalable method for analyzing complex causal effects. 
\end{remark}

\subsection{Implementation}\label{subsec:computation} 

We next discuss the implementation of Steps 1 and 2 in the proposed procedure. 

\paragraph{Step 1:} 
To obtain reliable numerical results, the initial estimates $(\AnjhatInit, \BnjhatInit)$ should adequately approximate the true values $(\alpha_j^{\ast}, \beta_j^{\ast})$, while ensuring the stability of inverse weights in \eqref{eq:fit_mediator} and \eqref{eq:fit_outcome}. 
To this end, we will combine ordinary least squares (OLS) estimates with appropriate lower bounds. 
We find this estimate exhibits both efficient and stable performance through the extensive numerical studies in Section \ref{sec:simulations}. 
In particular, let $\AnjOLS$ and $\BnjOLS$ denote the OLS estimates of $\alpha_j^\ast$ and $\beta_j^\ast$ under the model \eqref{eq:LSEM}. Then we define $\mathcal{T}(x, l) = \sgn(x) (\max\{|x| - l, 0\} + l)$ and construct 
\begin{equation}
\label{eq:truncated_OLS}
    \AnjhatInit = \mathcal{T}(\AnjOLS,\, \TrnjA),
    \quad \quad  \text{and} \quad \quad 
    \BnjhatInit = \mathcal{T}(\BnjOLS, \TrnjB),
\end{equation}
where we set lower bounds $\TrnjA = \CTr \cdot \hat{se}(\AnjOLS)$ and $\TrnjB = \CTr \cdot \hat{se}(\BnjOLS)$ with $\hat{se}(\AnjOLS)$ and $\hat{se}(\BnjOLS)$ denoting the estimated standard error of $\AnjOLS$ and $\BnjOLS$ from the OLS regression. 
This construction ensures that $|\AnjhatInit| \geq \TrnjA$ and $|\BnjhatInit| \geq \TrnjB$, preventing the penalty weights from diverging too fast when OLS estimates $\AnjOLS$ and $\BnjOLS$ approach zero, thereby enhancing robustness to poor OLS estimates. 
A detailed theoretical investigation of the asymptotic properties will be provided in Section \ref{sec:theory}. 
Beyond this specific construction \eqref{eq:truncated_OLS}, other initializations with similar properties could also be used. 

\paragraph{Step 2:} 
In the adaptive penalized regression, loss functions $\ell_{M}(\cdot)$ and $\ell_{Y}(\cdot)$ can be specified by users and take general forms. 
One simple yet effective choice is the quadratic loss function. 
When the dimension of $\boldsymbol{X}$ becomes higher, regularization of coefficients $\boldsymbol{\zeta}_M$ and $\boldsymbol{\zeta}_Y$ may also be added to improve estimation efficiency. 
For instance, we may choose 
\begin{align}\label{eq:quadloss}
    \ell_{M}(\vect{\theta}_M;\, \mathbf{D}_M, \vM_n) = &~\Fnorm{\vM_n - \mathbf{D}_M \vect{\theta}_M}^2 + \mathcal{P}_{M}(\boldsymbol{\zeta}_M), \\
    \ell_{Y}(\vect{\theta}_Y;\, \mathbf{D}_Y, \vY_n) = &~\Fnorm{\vY_n - \mathbf{D}_Y \vect{\theta}_Y}^2 + \mathcal{P}_{Y}(\boldsymbol{\zeta}_Y, \eta), \notag
\end{align} where for $A \in \{M, Y\}$, $\mathcal{P}_{A}(\cdot)$ represents the penalty of the corresponding input coefficients. 
In \eqref{eq:quadloss}, setting $\mathcal{P}_{A}(\cdot) = 0$ gives vanilla quadratic loss functions, and $\mathcal{P}_{A}(\cdot)$ can also be a function with respect to only a subset of input parameters, e.g., the coefficient of the intercept is often not penalized in practice. 
Numerically, our extensive experiments suggest that \eqref{eq:quadloss} could yield sufficiently good and stable empirical results, as will be detailed in Section \ref{sec:simulations}. 
Computationally, under \eqref{eq:quadloss}, optimizations \eqref{eq:fit_mediator} and \eqref{eq:fit_outcome} may be transformed to examining penalized coefficients only, allowing the potential to further reduce computational complexity. 
This is shown by Proposition \ref{proposition:solution} below. 
To facilitate the subsequent presentation, we define $\mathbf{R}_M = \mathbf{M}_n$ and $\mathbf{R}_Y = \mathbf{Y}_n$. 
Moreover, for $A \in \{M,Y\}$, let $\boldsymbol{\theta}_{AP}$ and $\boldsymbol{\theta}_{AU}$ represent penalized and unpenalized coefficients in $\boldsymbol{\theta}_{A}$, respectively, and let $\mathbf{D}_{AP}$ and $\mathbf{D}_{AU}$ be corresponding columns in the design matrix $\mathbf{D}_A$, respectively. 

\begin{proposition}\label{proposition:solution} 
For $A \in \{M,Y\}$, assume $\mathbf{D}_{A}$ has full column rank, and $\mathcal{P}_A(\cdot)$ is convex. 
Solving $(\hat{\boldsymbol{\theta}}_M, \hat{\boldsymbol{\theta}}_Y)$ by \eqref{eq:fit_mediator} and \eqref{eq:fit_outcome} is  equivalent to solving that for $A \in \{M,Y\}$, 
\begin{equation}\label{eq:opt}
\begin{aligned}
    \hat{\boldsymbol{\theta}}_{AP} 
    &= \argmin_{\vect{\theta}_{AP}} \Fnorm{\mathbf{P}_{ {AU}}^{\perp} (\mathbf{R}_A - \mathbf{D}_{AP}\boldsymbol{\theta}_{AP})}^2 + \bar{\mathcal{P}}_A(\boldsymbol{\theta}_{AP}), \\   
    \hat{\boldsymbol{\theta}}_{AU} 
    &= \mathbf{D}_{AU}^{\dagger}(\mathbf{R}_A \mathbf{D}_{AP}\hat{\boldsymbol{\theta}}_{AP}), 
\end{aligned}
\end{equation}
where 
$\mathbf{D}_{AU}^{\dagger} = (\mathbf{D}_{AU}^{\top}\mathbf{D}_{AU})^{-1} \mathbf{D}_{AU}^{\top}$, $\mathbf{P}_{{AU}}^{\perp} = \mathrm{I}_{n\times n} - \mathbf{D}_{AU}\mathbf{D}_{AU}^{\dagger}$, 
$\mathrm{I}_{n\times n}$ denotes an $n\times n$ identity matrix, and 
$\bar{\mathcal{P}}_M(\boldsymbol{\theta}_{MP}) = \lambdaA \sum_{j = 1}^p {|\alpha_{j}|}/{\wA} + \mathcal{P}_M(\boldsymbol{\zeta}_M)$ and 
$\bar{\mathcal{P}}_Y(\boldsymbol{\theta}_{YP}) = \lambdaB \sum_{j = 1}^p {|\beta_{j}|}/{\wB} + \mathcal{P}_Y(\boldsymbol{\zeta}_Y, \eta)$ represent two augmented penalties. 
\end{proposition}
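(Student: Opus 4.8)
The plan is to reduce the joint minimization over $\boldsymbol{\theta}_A = (\boldsymbol{\theta}_{AP}, \boldsymbol{\theta}_{AU})$ to a ``profiled'' problem in $\boldsymbol{\theta}_{AP}$ alone, by first optimizing out the unpenalized block $\boldsymbol{\theta}_{AU}$. Since $A \in \{M,Y\}$ enters symmetrically and the two optimizations \eqref{eq:fit_mediator}--\eqref{eq:fit_outcome} have identical structure (a quadratic data-fit term $\Fnorm{\mathbf{R}_A - \mathbf{D}_A \boldsymbol{\theta}_A}^2$ plus a penalty $\bar{\mathcal{P}}_A$ that touches only $\boldsymbol{\theta}_{AP}$), I would prove the claim once in this generic form. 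Write $\mathbf{D}_A \boldsymbol{\theta}_A = \mathbf{D}_{AP}\boldsymbol{\theta}_{AP} + \mathbf{D}_{AU}\boldsymbol{\theta}_{AU}$ and fix $\boldsymbol{\theta}_{AP}$; the inner objective in $\boldsymbol{\theta}_{AU}$ is then an ordinary least-squares problem with response $\mathbf{R}_A - \mathbf{D}_{AP}\boldsymbol{\theta}_{AP}$ and design $\mathbf{D}_{AU}$.

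The key computation is the standard partitioned-regression identity. Because $\mathbf{D}_A$ has full column rank, so does $\mathbf{D}_{AU}$, hence $\mathbf{D}_{AU}^\top \mathbf{D}_{AU}$ is invertible and $\mathbf{D}_{AU}^\dagger = (\mathbf{D}_{AU}^\top \mathbf{D}_{AU})^{-1}\mathbf{D}_{AU}^\top$ is well-defined; the inner minimizer is $\hat{\boldsymbol{\theta}}_{AU}(\boldsymbol{\theta}_{AP}) = \mathbf{D}_{AU}^\dagger(\mathbf{R}_A - \mathbf{D}_{AP}\boldsymbol{\theta}_{AP})$, which is exactly the second display in \eqref{eq:opt}. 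Substituting this back, the residual of the data-fit term becomes $(\mathrm{I}_{n\times n} - \mathbf{D}_{AU}\mathbf{D}_{AU}^\dagger)(\mathbf{R}_A - \mathbf{D}_{AP}\boldsymbol{\theta}_{AP}) = \mathbf{P}_{AU}^\perp(\mathbf{R}_A - \mathbf{D}_{AP}\boldsymbol{\theta}_{AP})$, since $\mathbf{D}_{AU}\mathbf{D}_{AU}^\dagger$ is the orthogonal projector onto the column span of $\mathbf{D}_{AU}$ and $\mathbf{P}_{AU}^\perp$ is idempotent. This yields the first display in \eqref{eq:opt} once the penalty $\bar{\mathcal{P}}_A(\boldsymbol{\theta}_{AP})$, which is unaffected by the inner minimization, is carried along. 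A small bookkeeping check is that for $A = M$ the response is matrix-valued ($\mathbf{R}_M = \mathbf{M}_n$, with $p$ columns) so the argument runs column-by-column, and that the intercept/confounder coefficients $\boldsymbol{\zeta}_M$ (and $\boldsymbol{\zeta}_Y, \eta$) are split between $\boldsymbol{\theta}_{AP}$ and $\boldsymbol{\theta}_{AU}$ according to whether $\mathcal{P}_A$ penalizes them — but this only changes which columns of $\mathbf{D}_A$ land in $\mathbf{D}_{AP}$ versus $\mathbf{D}_{AU}$, not the algebra.

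To finish, I would invoke convexity to upgrade the ``profiling'' argument to a genuine equivalence of optimization problems. Since $\Fnorm{\mathbf{R}_A - \mathbf{D}_A\boldsymbol{\theta}_A}^2$ is convex in $(\boldsymbol{\theta}_{AP},\boldsymbol{\theta}_{AU})$ and $\bar{\mathcal{P}}_A$ is convex in $\boldsymbol{\theta}_{AP}$ (given $\mathcal{P}_A$ convex and the adaptive $\ell_1$ terms convex), the full objective is jointly convex; partial minimization of a jointly convex function preserves convexity and the minimizers are related by $\hat{\boldsymbol{\theta}}_{AU} = \hat{\boldsymbol{\theta}}_{AU}(\hat{\boldsymbol{\theta}}_{AP})$, giving the stated correspondence between the joint solution of \eqref{eq:fit_mediator}--\eqref{eq:fit_outcome} and the solution of \eqref{eq:opt}. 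There is no serious obstacle here; the only point requiring care is the precise statement of the equivalence when minimizers are non-unique (e.g.\ when $\mathbf{D}_{AP}$ is rank-deficient in the $\boldsymbol{\alpha}$/$\boldsymbol{\beta}$ block), which I would handle by phrasing the conclusion at the level of \emph{sets} of optimal $(\boldsymbol{\theta}_{AP},\boldsymbol{\theta}_{AU})$ pairs, or simply noting $\hat{\boldsymbol{\theta}}_{AP}$ solves \eqref{eq:opt} if and only if $(\hat{\boldsymbol{\theta}}_{AP}, \mathbf{D}_{AU}^\dagger(\mathbf{R}_A - \mathbf{D}_{AP}\hat{\boldsymbol{\theta}}_{AP}))$ solves the original.
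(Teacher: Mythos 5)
Your proposal is correct and follows essentially the same route as the paper: both profile out the unpenalized block $\boldsymbol{\theta}_{AU}$ via the partitioned least-squares (Frisch--Waugh--Lovell) identity, the paper by first splitting the quadratic loss into orthogonal components $\Fnorm{\mathbf{P}_{AU}^{\perp}(\mathbf{R}_A - \mathbf{D}_{AP}\boldsymbol{\theta}_{AP})}^2 + \Fnorm{\mathbf{P}_{AU}(\mathbf{R}_A - \mathbf{D}_{AP}\boldsymbol{\theta}_{AP}) - \mathbf{D}_{AU}\boldsymbol{\theta}_{AU}}^2$ and noting the second term is driven to zero, you by solving the inner OLS first and substituting back --- the same computation in a different order. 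Your extra remarks on the matrix-valued response for $A=M$ and on uniqueness are sensible but not needed beyond what the full-column-rank assumption already guarantees.
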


Proposition \ref{proposition:solution} shows that optimizing penalized losses only requires examining $\boldsymbol{\theta}_{AP}$, which typically has a lower dimension than $\boldsymbol{\theta}_A$. 
This reduction in dimensionality might help decrease computational complexity when iterative optimization is needed. 
Moreover, once the penalized coefficient estimate $\hat{\boldsymbol{\theta}}_{AP}$ is obtained, 
unpenalized coefficient estimate $\hat{\boldsymbol{\theta}}_{AU}$ can be computed through closed-form formulae. 
When $\mathcal{P}_A(\cdot)$ takes $\ell_1$-norm based penalty, $\hat{\boldsymbol{\theta}}_{AP}$ can be easily obtained by a standard LASSO-based solver, such as the \textsf{R} package \texttt{glmnet} \citep{friedman2010regularization}. When $\mathcal{P}_M(\cdot)=0$, a closed-form formula for penalized coefficients $\hat{\boldsymbol{\alpha}}$ can in fact be obtained; see Remark \ref{rm:alpha_sol} in Appendix. 
While we discuss the implementation under \eqref{eq:quadloss}, we emphasize that the proposed adaptive strategy in \eqref{eq:fit_mediator} and \eqref{eq:fit_outcome} is general and can be used with any other loss functions.

\section{Asymptotic Theory} \label{sec:theory}

This section establishes asymptotic guarantees for the proposed procedure in Section \ref{sec:method}, giving insights into the advantage of the joint adaptive penalty for unbalanced pathways. We begin by examining properties of the initialization and then analyze the joint adaptive estimators. 

\subsection{Initialization}\label{subsec:initialization}

As discussed in Section \ref{subsec:computation}, ideal initial estimates should accurately approximate the true coefficients with stability near zero. 
This is formalized as Condition \ref{cond:initial} below. 

\begin{condition}[Initialization]\label{cond:initial}
The constructed initial estimates $\AnhatInit$ and $\BnhatInit$ in \eqref{eq:wweights} satisfy
\begin{enumerate}
\setlength{\itemsep}{0pt} 
    \item[(i)] $\sqrt{n}(\AnjhatInit - \Anj) = O_p(1)$ and $\sqrt{n}(\BnjhatInit - \Bnj) = O_p(1)$;
    \item[(ii)] $1/\AnjhatInit = O_p(\sqrt{n})$ and $1/\BnjhatInit = O_p(\sqrt{n})$.
\end{enumerate}
\end{condition}

We will show that the initial estimates in Section \ref{subsec:computation} satisfy Condition \ref{cond:initial}. 
We require the following condition. 

\begin{condition}[Moments]\label{cond:moments}
Assume that  
\begin{itemize}
\setlength{\itemsep}{0pt} 
    \item[(i)] $\mathrm{cov}(\vectrv{E}) = \boldsymbol{\Sigma}$ and $\mathrm{cov}(\epsilon) = \sigma^2$, where $\boldsymbol{\Sigma}$ and $\sigma^2$ are fixed and positive (definite); 
    \item[(ii)] as $n \to \infty$, 
    $\X^\top \X / n \to \SigmaX$, 
    $\|\mathbf{P}_{\mathbf{X}_n}^{\perp} \mathbf{T}_n\|_2^2 / n \to \CA$, and 
    $\max_{1\leq i \leq n} \left(\PXP \vT_n\right)_i^2/n \to 0$, where 
    $\SigmaX$ and $\CA$ are fixed and positive (definite), 
    $\mathbf{P}_{\mathbf{X}_n}^{\perp} = \mathrm{I}_{n \times n} - \X(\X^{\top} \X)^{-1} \X$, and 
    $\left(\PXP \vT_n\right)_i$ represents the $i$th element of $\PXP \vT_n$. 
\end{itemize}
\end{condition}

Condition \ref{cond:moments} imposes regularity conditions on the second-order moments of the error terms and covariates in the model \eqref{eq:model_n}, which are common in the regression analysis \citep{seber2012linear}.  

\begin{proposition}\label{lemma:initial}
    Under the model \eqref{eq:model_n} and Condition \ref{cond:moments}, 
    the initial estimates $\AnhatInit$ and $\BnhatInit$ constructed in \eqref{eq:truncated_OLS} satisfy Condition \ref{cond:initial}.
\end{proposition}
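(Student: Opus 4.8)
The plan is to verify Condition \ref{cond:initial} parts (i) and (ii) directly for the truncated OLS estimators defined in \eqref{eq:truncated_OLS}, splitting the argument into (a) the asymptotic normality of the raw OLS estimates, (b) consistency of their standard errors, and (c) the effect of the soft-flooring operator $\mathcal{T}(\cdot,\cdot)$.

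\textbf{Step 1: Asymptotic behavior of the raw OLS estimators.} First I would write the OLS estimators $\AnjOLS$ and $\BnjOLS$ in terms of the design matrices $\mathbf{D}_M$ and $\mathbf{D}_Y$ from model \eqref{eq:model_n}. For $\alpha_j^\ast$, after partialling out $\mathbf{X}_n$ the relevant quantity is the regression of $\mathbf{M}_{n,j}$ on $\PXP \vT_n$, so $\AnjOLS - \Anj = (\PXP\vT_n)^\top \vE_{n,j} / \|\PXP\vT_n\|_2^2$. Dividing numerator and denominator by $n$, Condition \ref{cond:moments}(ii) gives $\|\PXP\vT_n\|_2^2/n \to \CA > 0$, while the numerator over $\sqrt n$ is a sum of independent mean-zero terms whose variance converges (using $\mathrm{cov}(\vE)=\boldsymbol{\Sigma}$ fixed) and which satisfies a Lindeberg condition by the negligibility assumption $\max_i (\PXP\vT_n)_i^2/n \to 0$. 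Hence $\sqrt n(\AnjOLS - \Anj) = O_p(1)$. The argument for $\BnjOLS$ is analogous but one extra layer of partialling is needed: in the outcome equation one must residualize $\vY_n$ and $\vM_n$ against $(\vT_n,\mathbf{X}_n)$ and also handle the other mediator columns; here I would invoke standard full-rank / convergence-of-Gram-matrix facts implied by Condition \ref{cond:moments} together with the already-established $\sqrt n$-consistency of $\Anhat^{0}$-type quantities so that the relevant partial-regression denominators converge to positive limits, again yielding $\sqrt n(\BnjOLS - \Bnj) = O_p(1)$.

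\textbf{Step 2: Consistency of the estimated standard errors and control of the floors.} Next I would show $\hat{se}(\AnjOLS)$ and $\hat{se}(\BnjOLS)$ are each $O_p(n^{-1/2})$ and bounded away from $0$ faster than that — more precisely that $\sqrt n\,\hat{se}(\AnjOLS)$ converges in probability to a positive constant (namely $\sqrt{\boldsymbol{\Sigma}_{jj}/\CA}$ for the mediator model). This follows because the usual OLS variance estimator is the residual variance times the appropriate diagonal of an inverse Gram matrix; the residual variance is consistent for $\boldsymbol{\Sigma}_{jj}$ (resp. $\sigma^2$-related quantity) by a law of large numbers, and the Gram-matrix piece converges by Condition \ref{cond:moments}. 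Consequently the floors $\TrnjA = \CTr\,\hat{se}(\AnjOLS)$ and $\TrnjB = \CTr\,\hat{se}(\BnjOLS)$ are of exact order $n^{-1/2}$, i.e. $\sqrt n\,\TrnjA \toP \CTr\sqrt{\boldsymbol{\Sigma}_{jj}/\CA} > 0$ and similarly for $\TrnjB$.

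\textbf{Step 3: Passing through the truncation operator.} Finally I would combine the pieces. For part (ii): by construction $|\AnjhatInit| \ge \TrnjA$, so $1/|\AnjhatInit| \le 1/\TrnjA = O_p(\sqrt n)$, and identically for $\beta$; this is immediate once Step 2 is in hand. For part (i): the map $x \mapsto \mathcal{T}(x,l)$ is $1$-Lipschitz in $x$ for each fixed $l$, and one also checks that for the true value, $|\Anj| > 0$ fixed means eventually (with probability tending to one) the floor $\TrnjA \to 0$ is below $|\Anj|$, so near the truth the truncation is inactive up to a perturbation of size $O(\TrnjA) = O_p(n^{-1/2})$. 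Writing $\AnjhatInit - \Anj = \mathcal{T}(\AnjOLS,\TrnjA) - \mathcal{T}(\Anj,\TrnjA) + [\mathcal{T}(\Anj,\TrnjA)-\Anj]$, the first bracket is bounded in absolute value by $|\AnjOLS - \Anj| = O_p(n^{-1/2})$ by Lipschitzness, and the second bracket is at most $\TrnjA = O_p(n^{-1/2})$; hence $\sqrt n(\AnjhatInit - \Anj) = O_p(1)$, and the same for $\beta$.

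\textbf{Main obstacle.} The routine parts are the LLN/CLT for the mediator equation; the genuinely delicate step is the outcome equation in Step 1, since $\vM_n$ appears on the right-hand side of the $Y$-model and is itself random and correlated with $\vepsilon_n$ only through the exogeneity assumptions of model \eqref{eq:LSEM}. I would need to carefully argue that after conditioning appropriately (using that $(\vE_n,\vepsilon_n)$ are independent of $(\vT_n,\mathbf{X}_n)$ and $\vepsilon_n \ind \vE_n$), the partial-regression numerator for $\BnjOLS$ is still a mean-zero, variance-converging, Lindeberg-satisfying sum and the partial-regression denominator converges to a positive limit — this is where one must lean hardest on Condition \ref{cond:moments} together with continuous-mapping arguments applied to the empirical second moments of $(\vT_n,\mathbf{X}_n,\vE_n)$, and it is the place most likely to require an auxiliary lemma on joint convergence of the relevant Gram matrices.
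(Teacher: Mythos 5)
Your proposal follows essentially the same route as the paper: Frisch--Waugh--Lovell partialling plus a CLT (with the negligibility condition $\max_i(\PXP\vT_n)_i^2/n\to 0$ supplying Lindeberg) gives $\sqrt n$-consistency of the raw OLS estimates, convergence of the Gram matrices and residual variances shows the floors $\TrnjA,\TrnjB$ are of exact order $n^{-1/2}$ in probability, and the truncation is then absorbed because it moves the OLS estimate by at most the floor. One technical slip: $\mathcal{T}(\cdot,l)$ is \emph{not} $1$-Lipschitz --- the $\sgn(x)$ factor makes it jump by $2l$ across $x=0$ --- so your first bracket should be bounded by $|\AnjOLS-\Anj|+2\TrnjA$ rather than $|\AnjOLS-\Anj|$; since $\TrnjA=O_p(n^{-1/2})$ this costs nothing, and it recovers exactly the paper's bound via $\AnjhatInit-\Anj=(\AnjOLS-\Anj)+(\TrnjA-\AnjOLS)\indifun_{\{|\AnjOLS|<\TrnjA\}}$. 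For the obstacle you flag in the outcome equation, the paper's resolution is the observation that $\PDMP\vM_n=\PDMP\vE_n$ (the projection annihilates the $\vT_n$ and $\mathbf{X}_n$ components of $\vM_n$), after which the partial-regression denominator is $\vE_n^\top\PDMP\vE_n/n\toP\boldsymbol{\Sigma}$ and the numerator $\vE_n^\top\PDMP\vepsilon_n/\sqrt n$ is asymptotically normal by independence of $\vE_n$ and $\vepsilon_n$, so no auxiliary lemma beyond Condition \ref{cond:moments} is needed.
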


\subsection{Estimation and Selection Consistency}\label{subsec:consistency}

Under the above framework, we define the targeted set of active mediators as the indices in the following set: 
\begin{equation}\label{eq:active_set}
    \AS = \left\{j: \Anj \Bnj \neq 0,  \ \  j = 1,\ldots, p \right\},
\end{equation} 
which can be interpreted as the mediators for which pathways $T \to M_j$ and $M_j \to Y$ have nonzero signals. 
We now study the asymptotics of our method. 
For ease of illustration, we consider that quadratic loss functions are used in \eqref{eq:fit_mediator} and \eqref{eq:fit_outcome}, i.e.,
\begin{equation}\label{eq:quadloss_theory}
\begin{aligned}
    \ell_{M}(\vect{\theta}_M;\, \mathbf{D}_M, \vM_n) 
    & = \Fnorm{\vM_n - \mathbf{D}_M \vect{\theta}_M }^2, \\
    \ell_{Y}(\vect{\theta}_Y;\, \mathbf{D}_Y, \vY_n) 
    &= \Fnorm{\vY_n - \mathbf{D}_Y \vect{\theta}_Y }^2.
\end{aligned}
\end{equation}
These loss functions are common in practice and clarify the essence of signal adaptation achieved by the proposed method. 
More generally, we expect that conclusions for other loss functions can be similarly established given suitable assumptions. 

\begin{theorem}\label{thm:oracle_property}
    Assume Conditions \ref{cond:initial}-\ref{cond:moments} under the model \eqref{eq:model_n}. 
    Suppose the tuning parameters satisfy $n^{1/2 - \gATwo} \ll \lambdaA \ll n^{1/2}$ and $n^{1/2 - \gBTwo} \ll \lambdaB \ll n^{1/2}$,
    where $0 < 2\gATwo < \gAOne$ and $0 < 2\gBTwo < \gBOne$ are specified in  \eqref{eq:wweights}.
    Then with the use of the quadratic loss functions in \eqref{eq:quadloss_theory}, the proposed estimates in \eqref{eq:fit_mediator} and \eqref{eq:fit_outcome} satisfy $\Fnorm{\Anhat-\An}^2 + \Fnorm{\Bnhat - \Bn}^2 \toP 0$. 
    Moreover, $\AShat_n$ constructed in \eqref{eq:ashat} satisfies 
    \begin{equation}\label{eq:selection_consistency}
        \lim_{n \to \infty} \Prob(\AShat_n = \AS) = 1. 
    \end{equation} 
\end{theorem}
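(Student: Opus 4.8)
The plan is to reduce \eqref{eq:fit_mediator}--\eqref{eq:fit_outcome} to standard adaptive LASSO problems, pin down the stochastic orders of the joint adaptive weights $\wA,\wB$ in every configuration of $(\Anj,\Bnj)$, and then run the oracle-property argument of \citet{zou2006adaptive} separately for $\valpha$ and $\vbeta$. By Proposition \ref{lemma:initial} the initial estimates satisfy Condition \ref{cond:initial}, so I may work directly from its parts (i)--(ii). Because the quadratic loss $\ell_M$ in \eqref{eq:quadloss_theory} and the penalty $\lambdaA\sum_j|\alpha_j|/\wA$ both separate across the $p$ columns of $\vM_n$, \eqref{eq:fit_mediator} splits into $p$ scalar adaptive-LASSO regressions; profiling out the unpenalized $\boldsymbol{\zeta}_M$ as in Proposition \ref{proposition:solution} (via $\PXP$), each $\Anjhat$ is the soft-thresholding of the OLS estimate $\AnjOLS$ at level $\lambdaA/(2\wA\|\PXP\vT_n\|_2^2)$. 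Profiling out the unpenalized $(\eta,\boldsymbol{\zeta}_Y)$ in \eqref{eq:fit_outcome} with $\mathbf{D}_{YU}=(\vT_n,\X)=\mathbf{D}_M$, the $\vbeta$-part becomes a single weighted LASSO with effective design $\PDMP\vM_n=\PDMP\vE_n$ and response $\PDMP\vY_n=\PDMP\vE_n\Bn+\PDMP\vepsilon_n$. Condition \ref{cond:moments} then supplies $\|\PXP\vT_n\|_2^2/n\to\CA>0$, $\AnjOLS\toP\Anj$, $\vE_n^\top\PDMP\vE_n/n\toP\boldsymbol{\Sigma}\succ 0$, and $\vE_n^\top\PDMP\vepsilon_n=O_p(n^{1/2})$.

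The crux is the order of the weights. If $\Anj\neq0$, Condition \ref{cond:initial}(i) gives $\AnjhatInit\toP\Anj$, so $\wA\toP|\Anj\Bnj|^{\gAOne}+|\Anj|^{2\gATwo}>0$, a positive constant; symmetrically $\wB$ converges to a positive constant whenever $\Bnj\neq0$. If $\Anj=0$, Condition \ref{cond:initial}(i) forces $\AnjhatInit=O_p(n^{-1/2})$, and because $\gAOne>2\gATwo$ the single-coefficient term $|\AnjhatInit|^{2\gATwo}=O_p(n^{-\gATwo})$ dominates the product term $|\AnjhatInit\BnjhatInit|^{\gAOne}=O_p(n^{-\gAOne/2})$ regardless of $\BnjhatInit$; hence $\wA=O_p(n^{-\gATwo})$ and $1/\wA$ exceeds a fixed multiple of $n^{\gATwo}$ with probability tending to one (Condition \ref{cond:initial}(ii) also yields the crude uniform bound $1/\wA=O_p(n^{\gATwo})$), and symmetrically $1/\wB$ exceeds a fixed multiple of $n^{\gBTwo}$ whenever $\Bnj=0$. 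Combined with $\lambdaA\gg n^{1/2-\gATwo}$ and $\lambdaB\gg n^{1/2-\gBTwo}$, this says the effective penalty on a truly null coefficient diverges faster than $n^{1/2}$---whatever its companion coefficient on the same pathway does---whereas on a truly nonzero coefficient the effective penalty stays of order $\lambdaA$ or $\lambdaB$, i.e.\ $o(n^{1/2})$.

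Given these orders, the oracle argument is routine. For $\valpha$: when $\Anj\neq0$ the thresholding level is $O_p(\lambdaA/n)=o_p(1)$, so $\Anjhat\toP\Anj\neq0$ and $\Prob(\Anjhat\neq0)\to1$; when $\Anj=0$ the thresholding level scaled by $n^{1/2}$ is at least a fixed multiple of $\lambdaA n^{\gATwo-1/2}\to\infty$ while $n^{1/2}\AnjOLS=O_p(1)$, so $\Prob(\Anjhat=0)\to1$. For $\vbeta$: I would apply the standard convex argmin-continuity argument to $V_n(\vu):=f_n(\Bn+\vu/n^{1/2})-f_n(\Bn)$, where $f_n$ is the profiled $\vbeta$-objective; its quadratic part converges uniformly on compacts to $\vu^\top\boldsymbol{\Sigma}\vu-2\vu^\top\vect{W}$ with $\vect{W}=O_p(1)$, the penalty terms for $j$ with $\Bnj\neq0$ vanish (since $\lambdaB/n^{1/2}\to0$ and $\wB$ is bounded away from $0$), and those for $j$ with $\Bnj=0$ blow up unless $u_j=0$ (since $\lambdaB/(n^{1/2}\wB)$ exceeds a fixed multiple of $\lambdaB n^{\gBTwo-1/2}\to\infty$); by convexity this gives $n^{1/2}(\Bnhat-\Bn)=O_p(1)$ and $\Bnjhat\toP\Bnj$ for every $j$. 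Exact elimination of the null $\beta_j$'s then follows from the stationarity (Karush--Kuhn--Tucker) conditions: if $\Bnjhat\neq0$ for some $j$ with $\Bnj=0$, then $2|(\PDMP\vE_n)_{\cdot j}^\top\PDMP(\vY_n-\vM_n\Bnhat)|=\lambdaB/\wB$; but $\PDMP(\vY_n-\vM_n\Bnhat)=\PDMP\vE_n(\Bn-\Bnhat)+\PDMP\vepsilon_n$, so using $\|\Bnhat-\Bn\|_2=O_p(n^{-1/2})$, $\vE_n^\top\PDMP\vE_n/n\toP\boldsymbol{\Sigma}$ and $\vE_n^\top\PDMP\vepsilon_n=O_p(n^{1/2})$ the left side is $O_p(n^{1/2})$, while the right side exceeds a fixed multiple of $\lambdaB n^{\gBTwo}\gg n^{1/2}$---a contradiction, so $\Prob(\Bnjhat=0)\to1$.

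It remains to assemble. Each coordinate obeys $\Anjhat-\Anj\toP0$ and $\Bnjhat-\Bnj\toP0$, which, as $p$ is fixed, yields $\Fnorm{\Anhat-\An}^2+\Fnorm{\Bnhat-\Bn}^2\toP0$. The above also gives $\Prob(\Anjhat\neq0)\to\indifun(\Anj\neq0)$ and $\Prob(\Bnjhat\neq0)\to\indifun(\Bnj\neq0)$, so a union bound over the finitely many indices shows $\{j:\Anjhat\neq0\}=\{j:\Anj\neq0\}$ and $\{j:\Bnjhat\neq0\}=\{j:\Bnj\neq0\}$ simultaneously with probability tending to one; hence $\AShat_n=\{j:\Anjhat\neq0\}\cap\{j:\Bnjhat\neq0\}$ equals $\{j:\Anj\neq0\}\cap\{j:\Bnj\neq0\}=\AS$ with probability tending to one, which is \eqref{eq:selection_consistency}. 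I expect the main obstacle to be the weight analysis of the second paragraph: one must verify that $\gAOne>2\gATwo$ together with the initialization rates makes the single-coefficient term $|\AnjhatInit|^{2\gATwo}$---not the product term $|\AnjhatInit\BnjhatInit|^{\gAOne}$---govern $\wA$ whenever $\Anj=0$, since that is exactly what guarantees a heavy enough penalty on an inactive $\alpha_j$ even if $\beta_j$ is itself small (and symmetrically for $\wB$); the remainder is bookkeeping on top of the classical adaptive-LASSO analysis.
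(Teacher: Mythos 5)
Your proposal is correct and follows essentially the same route as the paper: your weight-order analysis is the paper's Lemma \ref{lemma:third_term}, your convex argmin/epi-convergence step for $\vbeta$ is Lemma \ref{lemma:asymptotics}, your KKT contradiction for null $\beta_j$'s and your soft-thresholding argument for $\valpha$ (via the closed form of Remark \ref{rm:alpha_sol}) reproduce the two halves of the paper's selection-consistency proof, and the final union-bound assembly matches its Boole's-inequality decomposition. The only cosmetic difference is that you run the scalar soft-thresholding formula for both directions of the $\valpha$-selection, whereas the paper also invokes the epi-convergence limit for $\valpha$; the substance is identical.
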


Theorem \ref{thm:oracle_property} provides asymptotic guarantees for the joint adaptive penalty in both parameter estimation and active pathway selection, laying the theoretical foundation for its practical utility. 

To further gain insights into how the proposed method adapts to unbalanced pathways,
we compare the proposed joint adaptive penalty weights with those from a standard adaptive LASSO penalty, i.e., comparing weights $\hat{w}_{nj, \alpha}$ and $\hat{w}_{nj, \beta}$ in \eqref{eq:fit_mediator} and \eqref{eq:fit_outcome}  with the weights in adaptive LASSO: $\hat{w}_{\text{AL}, nj, \alpha} = |\AnjhatInit|^{2\eta_{\alpha}}$ and $\hat{w}_{\text{AL}, nj, \beta} = |\BnjhatInit|^{2\eta_{\beta}}$, respectively. 

\begin{proposition}\label{prop:penaltyscale}
Assume Conditions \ref{cond:initial} and \ref{cond:moments}, $0 < 2\gATwo < \gAOne$ and $0 < 2\gBTwo < \gBOne$. Then 
\begin{equation}\label{eq:weight_ratio}
\begin{aligned}
    \frac{\hat{w}_{nj, \alpha}}{\hat{w}_{\mathrm{AL},nj, \alpha}} 
    & \toP 1 + |\alpha_j^\ast|^{\gAOne - 2 \gATwo}|\beta_j^\ast|^{2 \gAOne}, \\ 
    \frac{\hat{w}_{nj,\beta}}{\hat{w}_{\mathrm{AL},nj, \beta}} 
    & \toP 1 + |\alpha_j^\ast|^{2 \gBOne}|\beta_j^\ast|^{\gBOne - 2 \gBTwo}.
\end{aligned}
\end{equation}
\end{proposition}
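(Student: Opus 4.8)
The plan is to analyze the ratio $\wA / \hat{w}_{\mathrm{AL},nj,\alpha}$ directly from its definition, reducing everything to the limiting behavior of the initial estimates $\AnjhatInit$ and $\BnjhatInit$ supplied by Condition \ref{cond:initial}, and then handle two cases according to whether the index $j$ is active or inactive. Recall from \eqref{eq:wweights} that $\wA = |\AnjhatInit \BnjhatInit|^{\gAOne} + |\AnjhatInit|^{2\gATwo}$ and by definition $\hat{w}_{\mathrm{AL},nj,\alpha} = |\AnjhatInit|^{2\gATwo}$, so the ratio equals
\begin{equation*}
    \frac{\wA}{\hat{w}_{\mathrm{AL},nj,\alpha}} = 1 + \frac{|\AnjhatInit|^{\gAOne}|\BnjhatInit|^{\gAOne}}{|\AnjhatInit|^{2\gATwo}} = 1 + |\AnjhatInit|^{\gAOne - 2\gATwo}|\BnjhatInit|^{\gAOne}|\BnjhatInit|^{\gAOne},
\end{equation*}
wait — more carefully, $|\AnjhatInit\BnjhatInit|^{\gAOne} = |\AnjhatInit|^{\gAOne}|\BnjhatInit|^{\gAOne}$, hence the added term is $|\AnjhatInit|^{\gAOne-2\gATwo}|\BnjhatInit|^{\gAOne}$; the claimed limit has exponent $2\gAOne$ on $|\beta_j^\ast|$, so one must double-check the bookkeeping, but the structure is the same: the ratio is $1$ plus a power of the initial estimates.

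First I would treat the case $j \in \AS$, i.e.\ $\alpha_j^\ast \neq 0$ and $\beta_j^\ast \neq 0$. By Condition \ref{cond:initial}(i) and the $\sqrt{n}$-rate, $\AnjhatInit \toP \alpha_j^\ast$ and $\BnjhatInit \toP \beta_j^\ast$, both nonzero limits; then by the continuous mapping theorem the power term $|\AnjhatInit|^{\gAOne-2\gATwo}|\BnjhatInit|^{2\gAOne}$ converges in probability to $|\alpha_j^\ast|^{\gAOne-2\gATwo}|\beta_j^\ast|^{2\gAOne}$, giving the stated limit. Second, for $j \notin \AS$, at least one of $\alpha_j^\ast, \beta_j^\ast$ is zero; the right-hand side of \eqref{eq:weight_ratio} then equals $1 + 0 = 1$, so I must show the added power term is $o_p(1)$. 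Here Condition \ref{cond:initial}(ii) is essential: it controls how fast $1/\AnjhatInit$ and $1/\BnjhatInit$ can diverge, i.e.\ $\AnjhatInit = \Omega_p(n^{-1/2})$ and likewise for $\BnjhatInit$, while (i) gives $\AnjhatInit = O_p(n^{-1/2})$ when $\alpha_j^\ast = 0$. Combining the upper bound on the vanishing coordinate(s) with the fact that any nonvanishing coordinate is $O_p(1)$, the added term is $O_p(n^{-c})$ for some $c>0$ determined by the exponents $\gAOne, \gATwo$ and the constraint $\gAOne > 2\gATwo$; the strict inequality is exactly what guarantees a strictly positive power of $n^{-1/2}$ survives, so the term vanishes. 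The same two-case argument applies verbatim to $\wB / \hat{w}_{\mathrm{AL},nj,\beta}$ with $\alpha$ and $\beta$ interchanged.

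The only mildly delicate point — and the one I would write out most carefully — is verifying in the inactive case that the exponent of $n^{-1/2}$ in the bound on the cross term is strictly positive in every sub-case (only $\alpha_j^\ast = 0$, only $\beta_j^\ast=0$, or both zero); in the worst sub-case one of the two factors contributes a negative power via Condition \ref{cond:initial}(ii) while the other contributes a non-positive power via (i), and one must check that the net exponent, something like $-\tfrac{1}{2}(\gAOne - 2\gATwo)$ or $-\tfrac{1}{2}\cdot 2\gAOne$ plus a nonnegative contribution, stays on the correct side of zero given $\gAOne > 2\gATwo > 0$. This is a short calculation, and once it is in place the proof is complete by the continuous mapping theorem and Slutsky's lemma. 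I expect no serious obstacle beyond this bookkeeping.
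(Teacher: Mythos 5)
Your approach is essentially the paper's: simplify the ratio algebraically and pass to the limit using Condition \ref{cond:initial}(i) together with the continuous mapping theorem (the paper's own proof is exactly this two-line argument), so the proposal is correct in substance. Two remarks on the points you left open. First, your bookkeeping is right: from \eqref{eq:wweights}, $\wA/\hat{w}_{\mathrm{AL},nj,\alpha} = 1 + |\AnjhatInit|^{\gAOne-2\gATwo}|\BnjhatInit|^{\gAOne}$, with exponent $\gAOne$ (not $2\gAOne$) on the $\beta$ factor; the exponent $2\gAOne$ appearing in \eqref{eq:weight_ratio} and in the identity asserted in the paper's proof is inconsistent with the definition of the weights, so this is a typo on the paper's side rather than an error in your computation, and the qualitative conclusion (ratio exceeding $1$ exactly on active pathways) is unaffected. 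Second, your case analysis for $j \notin \AS$, and in particular the appeal to Condition \ref{cond:initial}(ii), is unnecessary: after cancellation both exponents $\gAOne - 2\gATwo$ and $\gAOne$ are strictly positive, so the map $(a,b)\mapsto 1+|a|^{\gAOne-2\gATwo}|b|^{\gAOne}$ is continuous on all of $\reals^2$, including where a coordinate vanishes, and the stated limit follows in a single step from $(\AnjhatInit,\BnjhatInit)\toP(\alpha_j^\ast,\beta_j^\ast)$ in every sub-case at once. Condition \ref{cond:initial}(ii) enters only to guarantee that the denominator $|\AnjhatInit|^{2\gATwo}$ is nonzero with probability tending to one, so that the ratio is well defined before cancellation; there are no negative powers anywhere in the simplified expression, so the ``delicate point'' about the net exponent staying on the correct side of zero dissolves.
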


Proposition \ref{prop:penaltyscale} shows that for an active pathway $T \to M_j \to Y$ where $\Anj$ and $\Bnj$ are nonzero, the two ratios in \eqref{eq:weight_ratio} are greater than $1$. This implies that the proposed joint adaptive penalty is of smaller order compared to the adaptive LASSO penalty. 
In contrast, for a nonactive pathway where at least one of $\Anj$ and $\Bnj$ is zero, both ratios in \eqref{eq:weight_ratio} equal $1$. This implies that the joint adaptive penalty and the adaptive LASSO penalty would remain at the same order. 
Therefore, it is easier to distinguish the active pathways from the remaining ones with the joint adaptive penalty.

\section{Related Methods}  \label{sec:literature_review}

In this section, we review related methods in the literature to set the stage for the numerical comparisons in Section \ref{sec:simulations}. 
Since our proposed method emphasizes regularized estimation of the model \eqref{eq:LSEM} along with the identification of active pathways through individual mediators, we focus on methods with comparable setups and objectives, specifically those addressing the dual goals of estimation and identification. 
It is worth noting that the proposed method is flexible and can incorporate other methodological advances; see Remark \ref{rm:combine}.  
For a broader overview on modern mediation analysis with multiple mediators, we refer readers to comprehensive reviews \citep{clark2023methods,blum2020challenges,zeng2021statistical}.
In the following, we organize our discussion by separately examining frequentist and Bayesian approaches.

Under the frequentist perspective,  mediation effects, i.e., $\alpha_j^*\beta_j^*$  under the model \eqref{eq:LSEM}, are typically treated as fixed parameters. 
One research line utilizes a fitting-and-testing strategy:  first fitting the $T$-$\vectrv M$ and $\vectrv M$-$Y$ models in \eqref{eq:LSEM} separately and obtaining asymptotic distributions of the coefficient estimates; 
second, identifying active mediation pathways by testing hypotheses $H_{0,j}: \alpha_j^{*}\beta_j^*=0$ for $j=1,\ldots, p$ with appropriate adjustments for multiple comparisons.
For example, 
\citet{zhang2016estimating}  first perform an additional initial screening to reduce the number of mediators by applying sure independence screening \citep{fan2008sure} to the $\vectrv M$-$Y$ model. They then fit $T$-$\vectrv M$ and $\vectrv M$-$Y$ models and obtain asymptotic distributions of the coefficient estimates by the ordinary least squares regression and minimax concave penalty regularized regression \citep{zhang2010nearly}, respectively. 
\cite{gao2019testing} propose a similar approach  but obtain $\hat{\beta}_j$ estimates and their asymptotic distributions through debiased LASSO \citep{zhang2014confidence,van2014asymptotically} to mitigate biases. 
\citet{zhang2022high} applies the adaptive LASSO to both $T$-$\vectrv M$ and $\vectrv M$-$Y$ models with asymptotic distributions derived in \cite{zou2006adaptive}. 
As an alternative to the above methods based on separate model fitting, 
\cite{zhao2022pathway} examines the joint model \eqref{eq:LSEM} with a penalty involving $\boldsymbol{\alpha}$ and $\boldsymbol{\beta}$ simultaneously. In particular, it proposes to penalize mediation effects by $\sum_{j=1}^p |\alpha_j\beta_j|$, alongside elastic net regularizations for $\boldsymbol{\alpha}$ and $\boldsymbol{\beta}$ individually \citep{zou2005regularization}. The mediators with $\hat{\alpha}_j\hat{\beta}_j\neq 0$ are identified as active.

\begin{remark}\label{rm:combine}
    While Section \ref{sec:method} focuses on model selection through regularization, our proposed method can be seamlessly combined with other widely used strategies, such as sure independence screening and multiple hypothesis testing, similar to the methods reviewed above. 
    Notably, our theoretical analysis has derived asymptotic  distributions of coefficient estimates, providing the foundation needed to apply established testing procedures for mediation pathways \citep{dai2022multiple,liu2022large,he2024adaptive}. 
    To improve the accuracy of inference under finite sample sizes, future research could explore high-order refinements for uncertainty quantification \citep{chatterjee2013rates}. 
    In current simulations and data analysis, our proposed method in Section \ref{sec:method}  demonstrates sufficiently good performance. 
    For clarity and conciseness, we do not pursue post-selection inference in this paper and leave these intriguing directions for future investigation. 
\end{remark}

Under the Bayesian perspective,  
estimating mediation effects and 
identifying active mediation pathways can be framed as a Bayesian shrinkage estimation problem for the coefficients under \eqref{eq:LSEM}. 
In this vein, \cite{song2020bayesian} specifies separate shrinkage priors for  $\alpha_j$ and $\beta_j$, followed by model selection based on posterior inclusion probabilities. 
Alternatively, \cite{song2021bayesian} proposes joint prior distributions on $(\alpha_j, \beta_j)$, utilizing hard thresholding to specifically target non-zero mediation effects. 

\section{Simulation Studies}\label{sec:simulations}

We next evaluate the finite-sample performance of the proposed method and compare with the methods reviewed in Section \ref{sec:literature_review} through comprehensive simulations. 
As highlighted above, the proposed penalty can be advantageous for identifying active yet unbalanced mediation pathways.  
In the following, Section \ref{subsec:DGM} presents the simulation settings, encompassing both balanced and unbalanced pathways, and provides implementation details for all methods under comparison.
Section \ref{subsec:results} reports numerical results focusing on the accuracy of selecting active mediation pathways,   highlighting the unique advantages of our proposed penalty. 

\subsection{Setup}\label{subsec:DGM}

\paragraph{Data Generation.}
We simulate data under the model \eqref{eq:LSEM}, where $T$ is randomly assigned according to a Bernoulli distribution with success probability $0.5$, 
the mediator-to-outcome-model noise $\epsilon$ is independently drawn from a Gaussian distribution $\Gsn(0, \sigma^2)$, and pret-treatment covariates $\boldsymbol{X}$ are set to be empty for simplicity.  
We consider two cases of exposure-to-mediator-model noises $\boldsymbol{E}$ below to examine the impact of different dependence patterns. 
\begin{itemize}
    \item[(I)] Draw $\boldsymbol{E}$ independently from a multivariate Gaussian distribution $\Gsn\left(\vzero, \SigmaE \right)$ with  $\Sigma_{ij}= (\rho^{|i - j|})$, where $\rho\in \{0, 0.4, 0.8\}$, corresponding to uncorrelated, moderate-correlation, and high-correlation settings. 
    \item[(II)]  After drawing $n$ independent copies of $\boldsymbol{E}$ as in Case (I), randomly permute  $p$ dimensions of the $n$ copies together to allow for correlations between mediators that are non-adjacent in terms of dimension indices. Note $\rho=0$ is excluded in this case as permutation would not change the distribution. 
\end{itemize}
In each case above, we set $p=150$, $\BT^*=1$, and $\sigma^2=1$. We vary the sample size $n \in  \{500, 1000, 1500, 2000\}$ to understand its effect.

For the pairwise coefficients $\{(\alpha_{j}^*,\beta_j^*): j=1,\ldots, p\}$,
we consider six groups of patterns: for each group $k=1,\ldots, 6$, we set 
\begin{equation}\label{eq:coeffgroups}
    (\alpha_{j}^*, \beta_j^*) =\C\cdot ( a_k,  b_k), \ \text{ for } j \in \mathcal{G}_k:= \{ (k - 1)p/6 + 1, \ldots, k p/6\}, 
\end{equation}
where values of $ (a_k, b_k)$  are defined in Table \ref{table:sim_setting},  and $\C^2$ represents the effect magnitude.
\begin{table}[!htbp]
    \setlength{\tabcolsep}{10pt}
    \renewcommand{\arraystretch}{1.5}
    \begin{center}
    \begin{tabular}{c | c c c | c c c} 
     \hline\hline 
      & \multicolumn{3}{c|}{Active} & \multicolumn{3}{c}{Inactive} \\
    Group $k$ & 1 & 2 & 3 & 4 & 5 & 6 \\  
     \hline
     $a_k$ & $1$ & $1/\bSim$ & $\bSim$      & 0 & $1$ & 0\\
     $b_k$ & $1$ & $\bSim$      & $1/\bSim$ & $1$ & 0 & 0\\
     \hline\hline 
    \end{tabular}
    \caption{Patterns of six  groups of coefficient pairs $(\alpha_j^*,\beta_j^*) = C\cdot (a_k, b_k)$ with $|\delta|<1$. }
    \label{table:sim_setting}
    \end{center}
\end{table}

We aim to select the mediators $M_j$ with $j\in \cup_{k=1}^3\mathcal{G}_k$. 
These three active groups satisfy $\alpha_j^*\beta_j^* =C^2 \neq 0$, and correspond to three types of relationships between treatment-mediator and mediator-outcome effects, i.e., 
\begin{align*}
   \frac{\alpha_j^*}{\beta_j^*} =1 \ \text{ for} \ j\in \mathcal{G}_1,     \quad \quad  \frac{\alpha_j^*}{\beta_j^*} = \frac{1}{\delta^2} > 1 \ \text{ for} \ j\in \mathcal{G}_2, \quad \text{ and } \quad   \frac{\alpha_j^*}{\beta_j^*}  = \delta^2<1\ \text{ for} \ j\in \mathcal{G}_3 
\end{align*}
when $|\delta|<1$. 
When fixing the direct effect, the above relationships can give three types of directed acyclic graphs visualized as in Figure \ref{fig:dag3}. 
\begin{figure}[!htbp]
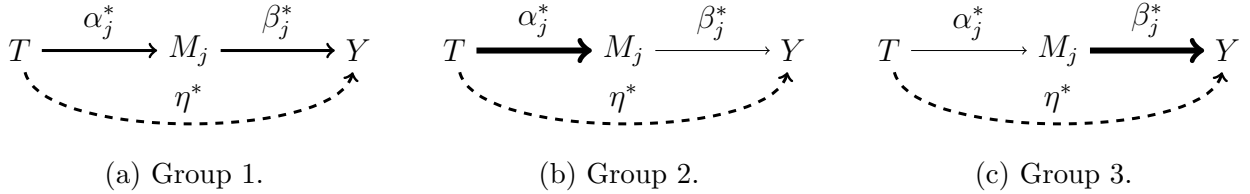

    \centering 
    \begin{subfigure}{0.3\textwidth}
        \centering
        \FigOne
        \caption{Group 1.}
    \end{subfigure}
    \hfill
    \begin{subfigure}{0.3\textwidth}
        \centering
        \FigTwo
        \caption{Group 2.}
    \end{subfigure}
    \hfill
    \begin{subfigure}{0.3\textwidth}
        \centering
        \FigThree
        \caption{Group 3.}
    \end{subfigure}
    \caption{ Directed acyclic graphs of three active groups. Line widths of the solid links indicate relative magnitudes of the corresponding pathwise effects. Dashed line represents the direct effect that is fixed in the analysis.} 
    \label{fig:dag3}
\end{figure} 
As $|\delta|$ becomes smaller, the discrepancy between $\alpha_j^*$ and $\beta_j^*$ becomes larger, which could make it more challenging for methods that examine pathwise effects $\alpha_j^*$ and $\beta_j^*$ separately. 
We aim to exclude the mediators $M_j$ with $j\in \cup_{k=4}^6\mathcal{G}_k$. 
The three inactive groups satisfy $\alpha_j^*\beta_j^* = 0 $ and correspond to three directed acyclic graphs visualized as in Figure \ref{fig:dag2}.
\begin{figure}[!htbp]
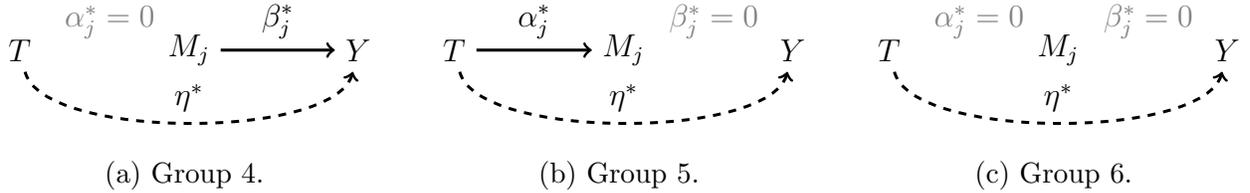

    \centering
    \begin{subfigure}{0.3\textwidth}
        \centering
        \FigFour
        \caption{Group 4.}
    \end{subfigure}
    \hfill
    \begin{subfigure}{0.3\textwidth}
        \centering
        \FigFive
        \caption{Group 5.}
    \end{subfigure}
    \hfill
    \begin{subfigure}{0.3\textwidth}
        \centering
        \FigSix
        \caption{Group 6.}
    \end{subfigure}
    \caption{ Directed acyclic graphs of three inactive groups. The absence of an edge indicates no causal effects.} 
    \label{fig:dag2}
\end{figure}
In our simulations below, we set $C=1$.

\paragraph{Implementation Details.}
We implement the proposed method as discussed in Section \ref{subsec:computation}. For $\CTr$ used in the initialization, our experiments suggest that the results are not sensitive to the choice of $\CTr$,  and we fix $\CTr = 5$ below. Moreover, we use vanilla quadratic loss functions in  \eqref{eq:fit_mediator} and  \eqref{eq:fit_outcome}. 
For the hyperparameters in \eqref{eq:wweights}, 
we consider $\gAOne, \gBOne\in \{0.75, 1, \ldots, 3\}$ and $\gATwo, \gBTwo\in\{0.25, 0.5, 0.75, 1, 1.25\}$, subject to the constraints $\gAOne > 2 \gATwo$ and $\gBOne > 2\gBTwo$. 
For $\lambdaA$ in  \eqref{eq:fit_mediator} and $\lambdaB$ in \eqref{eq:fit_outcome}, we consider exponentially spaced values ranging from $e^0$ to $e^5$ and from $e^3$ to $e^8$, respectively, using a step size of $0.1$ in the exponent. 
We select the hyperparameters $(\gAOne, \gATwo, \lambdaA)$ and $(\gBOne, \gBTwo, \lambdaB)$ by balancing variable selection stability (VSS, \citet{sun2013consistent}) and mean squared error (MSE) in the following sense. 
Given each candidate pair $\boldsymbol{\gamma}'=(\gAOne, \gATwo )$, we first identify the smallest $\lambdaA$ value that achieves the highest VSS computed via $5$-fold cross-validation and denote it as $\lambdaA(\boldsymbol{\gamma}')$. 
We then fit the model for each $(\boldsymbol{\gamma}', \lambdaA(\boldsymbol{\gamma}'))$ and compute MSE on the full data. The final hyperparameters are chosen to minimize MSE. This two-stage procedure aims to achieve good model fit and robust variable selection simultaneously.  
More details on the implementation can be found in \cite{AdaptMLASSOcodes}. 

We compare with the methods reviewed in Section \ref{sec:literature_review}. 
For the ease of reference, we adopt abbreviations of methods as used in \cite{clark2023methods}. 
In particular, HIMA \citep{zhang2016estimating}, HDMA \citep{gao2019testing}, and MedFix \citep{zhang2022high} are conducted using codes 
 by \citet{clark2023methods}. 
BSLMM \citep{song2020bayesian} and PTG \citep{song2021bayesian} are conducted by \texttt{R} package \texttt{bama}. For the frequentist methods involving multiple testing, we choose $0.05$ as the significance level and apply Bonferroni correction. 
For Bayesian methods, we use $0.5$ as the cutoff of posterior inclusion probability. 
All other hyperparameters remain at their default values. 
As Pathway LASSO incurs prohibitive computational costs and demonstrates low selection accuracy in our settings, we present a separate analysis of it in Appendix \ref{sec:PathwayLASSO}. 

In addition, to demonstrate the unique feature of the proposed penalty for mediation pathway selection,
we also compare with two established regularized regression methods for model selection: 
LASSO \citep{tibshirani1996regression} and adaptive LASSO  \cite[][abbreviated as \AL{}]{zou2006adaptive}. 
Under \eqref{eq:fit_mediator} and \eqref{eq:fit_outcome}, 
LASSO corresponds to $\wA =1$ and  $\wB = 1$, and AL   corresponds to  $\wA =  |\AnjhatInit|^{2 \gATwo}$, $\wB = |\BnjhatInit|^{2 \gBTwo}$.
For a fair comparison, we tune the hyperparameters of LASSO and \AL{} using similar strategy to that above for our proposed method and construct selection set same as in \eqref{eq:ashat}. 

\subsection{Numerical Results on Selection Accuracy} \label{subsec:results}

Figures \ref{fig:tuned_unreordered} and  \ref{fig:tuned_reordered} display estimated probabilities of correctly recovering the active set $\AS$ over $100$ Monte Carlo replicates under Cases (I) and (II) of mediator-noise simulations, respectively.
For clear presentation and discussion, we group methods into three classes: penalty-based (LASSO, \AL{}, \newm{}), testing-based (HIMA, HDMA, MedFix), and Bayesian methods (BSLMM, PTG).

The empirical results show that \newm{} achieves the highest accuracy across most regimes. 
In fact, it was only outperformed by HIMA, LASSO, and  \AL{} when $n=500$, $\rho=0.8$, and $\delta$ is close to zero. 
For a fixed $n$, the accuracy of \newm{} increases as $\rho$ becomes smaller. 
For a fixed $\rho$, the accuracy of \newm{} increases as $n$ increases. 
These observations suggest that \newm{} might be advantageous under scenarios with small correlations and large sample sizes.  
Fixing $(n,\rho)$, the selection accuracies of all methods, except PTG, decrease as $\delta$ becomes smaller.
This is reasonable since Table \ref{table:sim_setting} implies that as $\delta$ decreases, $\beta_j^*$ in Group 2 and $\alpha_j^*$ in Group 3 would become smaller, making it more challenging to identify these small nonzero $\alpha_j^*$ and $\beta_j^*$ separately. 
Indeed, all the methods tend to have more false negatives under smaller $\delta$ in our numerical studies.  
Our proposed \newm{} leverages the information on the product effect  $\alpha_j^*\beta_j^*$ and thus can achieve higher accuracy than the other methods do in the challenging small-$\delta$ scenarios. 

For the other two penalty-based methods, LASSO and \AL{}, their relative performances vary across scenarios. 
In Case (I), LASSO tends to outperform \AL{} under a large $n$ and a larger $\rho$, whereas the relationship is at times reversed in Case (II).
The results suggest that LASSO might gain from larger within-group correlations but not between-group correlations. 
For the three testing-based methods,
we observe that HIMA consistently performs the best, while HDMA is close to HIMA under no-correlation and low-correlation regimes but worse than HIMA in high-correlation regimes. MedFix has the lowest selection accuracy among the three in all scenarios. 
Both Bayesian methods, PTG and BSLMM, exhibit relatively low selection accuracies and hardly improve as $n$ or $\bSim$ increases. The underperformance of PTG may be attributed to its design for sparse settings. The accuracy of BSLMM, unlike the other methods, shows non-monotonicity with respect to $\delta$, which may arise from our fixed choice of hyperparameters;  it may perform better if hyperparameters can be chosen in an appropriate data-driven way.

\begin{figure}
    \centering
    \includegraphics[width=\textwidth]{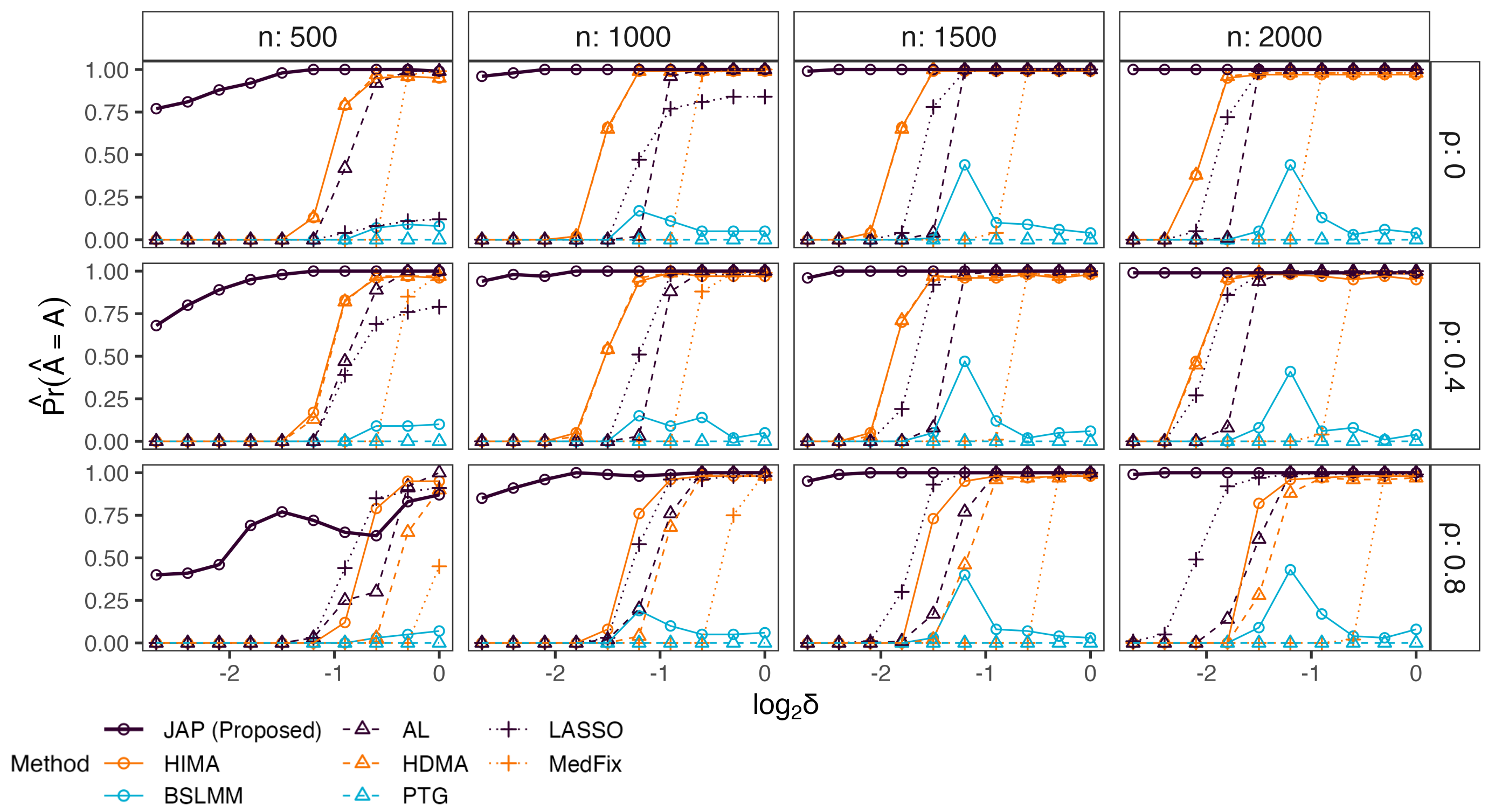}
    \caption{The empirical probability of selecting the active set $\AS$ correctly. The rows correspond to different correlations $\rho$ and the columns correspond to different sample sizes $n$. We use three colors to represent three method classes (penalty-based, testing-based, and Bayesian), while employing distinct linetypes and point shapes to differentiate within each method class. \newm{} is highlighted with a wider line. }
    \label{fig:tuned_unreordered}
\end{figure}

\begin{figure}
    \centering
    \includegraphics[width=\textwidth]{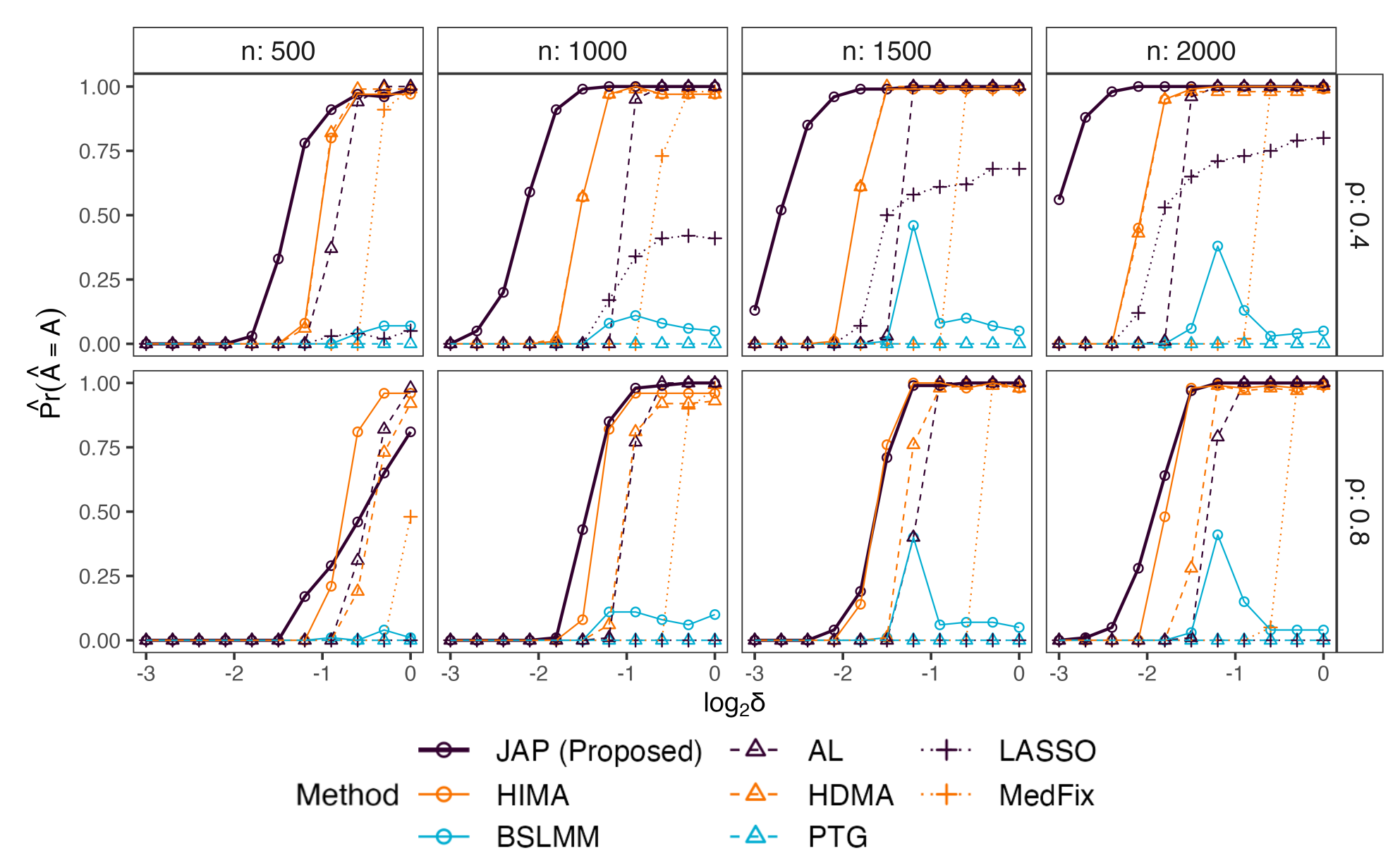}
    \caption{The probability of selecting the active set $\AS$ correctly with randomly reordered mediator model noises $\vE_n$.}
    \label{fig:tuned_reordered}
\end{figure}

\section{Data Analysis}\label{sec:application}

We demonstrate the use of the proposed method by analyzing a gastrectomy dataset from the Curated Gut Microbiome-Metabolome Data Resource \citep{muller2022gut}. 
Our goal is to investigate how the relationships between gastrectomy and the total cholesterol (TC) levels in patients may be mediated through gut microbiome. 
Gastrectomy, the surgical removal of all or part of the stomach, is commonly performed to treat conditions such as gastric cancer \citep{penna2013new}, peptic ulcer \citep{maki1967pylorus}, and morbid obesity \citep{bennett2007surgery}. 
TC is a crucial health indicator and is known to be associated with cardiovascular disease risks, including acute myocardial infarction and stroke \citep{jeong2018effect}. 
Previous studies indicate that TC levels tend to reduce after gastrectomy \citep{tromba2017role,lee2015changes}, and both gastrectomy and TC are related to the gut microbiome \citep{deng2023effect, vourakis2021role, erawijantari2020influence}. 
Mediation analysis can further reveal the role the gut microbiome plays in the process of gastrectomy influencing TC, enhancing our understanding towards the fundamental biological mechanisms underlying the observed effect. 
Specifically, we model gastrectomy as the treatment, TC levels as the outcome, and observed gut microbiome as potential mediators and aim to identify microbiome genera with significant mediation effects. 

In particular, the studied dataset involves patients undergoing total colonoscopy at the National Cancer Center Hospital, Tokyo, Japan. 
Their demographic information, clinical parameters, and faecal samples have been collected. 
More detailed information of data collection can be found in \citet{erawijantari2020influence}. After eliminating the records with missing data, our sample contains $82$ subjects, consisting of $42$ participants who have undergone gastrectomy for gastric cancer and $40$ controls. 
For the gut microbiome, we first filter out genera with a prevalence lower than $0.9$ to concentrate on the core microbiome shared across study subjects.
To accommodate the compositional nature of the count data \citep{gloor2017microbiome}, we apply a centered log-ratio (CLR) transformation with a pseudocount of $1$. 
We then remove the genera with average transformed abundance smaller than $5$, resulting in a total of $25$ genera. 
Throughout the analysis, we adjust for covariates age and gender with $\ell_1$-penalty for their coefficients in the outcome model when fitting the three penalty-based methods. 

\begin{figure*}
    \centering
    \includegraphics[width=1\textwidth]{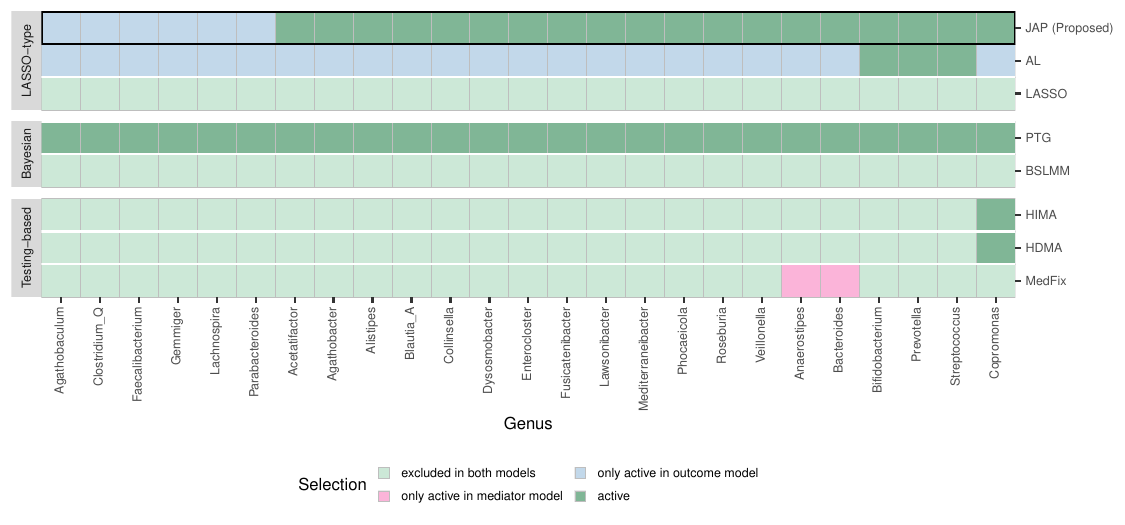}
    \caption{The genera selected by different methods. We use different colors to indicate the four selection results: excluded by both the mediator and outcome models, only included by the mediator model, only included by the outcome model, and active mediators that are included in both models.} 
    \label{fig:gastrectomy}
\end{figure*}

Figure \ref{fig:gastrectomy} displays the results of the eight methods, implemented and organized in three classes as in Section \ref{sec:simulations}. 
For each method (rows) and each microbiome genus $j$ (columns), we use four colors to represent four  types of results: 
(a) both treatment-to-mediator effect $\alpha_j$ and mediator-to-outcome effect $\beta_j$ are significant, indicating the existence of mediation effect $\alpha_j\beta_j$ through the corresponding microbiome genera, (b) only $\alpha_j$ is significant, (c) only $\beta_j$ is significant, and (d) both $\alpha_j$ and $\beta_j$ are not significant. 

In the results, MedFix, BSLMM and LASSO do not identify any significant effects. 
On the other hand, PTG suggests all genera have significant mediation effects, which may be overly optimistic as it tends to give more false positives in our simulations. 
HIMA and HDMA are similar and only identify one genus, \textit{Copromonas}, with significant mediation effect. 
\AL{} suggests all mediator-to-outcome effects $\beta_j$'s are significant, whereas only three genera have significant mediation effects $\alpha_j \beta_j$. 
For those three genera, \textit{Bifidobacterium} and \textit{Prevotella} have been discovered to be associated with the gastrectomy and TC \citep{lin2018long,sanchez2019gut,jia2023gut,deng2023effect,vourakis2021role,yu2020fecal}, and the fluctuation of \textit{Streptococcus} abundance after gastrectomy has also been observed before \citep{yu2020fecal}. 

The proposed method \newm{} identifies $19$ genera with significant mediation effects, including \textit{Roseburia} and \textit{Bacteroides}, whose relationships with gastrectomy and TC have been reported in the literature \citep{yu2020fecal, kissmann2024increase, deng2023effect, wu2022gut, jia2023gut}. 
Compared to the other methods, it can yield more discoveries while maintaining some discernment. 
Also, \newm{} is similar to \AL{} in terms of identifying all mediator-to-outcome effects $\beta_j$'s, which is understandable by the connections between their constructions. 
But \newm{} identifies more treatment-to-mediator effects $\alpha_j$'s than \AL{} does, which could be attributed to the effective use of joint pathwise information in \newm{}. 
For example, existing studies have shown that the abundances of \textit{Bacteroides} and \textit{Veillonella} changed after gastrectomy \citep{yu2020fecal, kissmann2024increase, deng2023effect, wu2022gut}. 
Their treatment-to-mediator effects $\alpha_j$'s have been identified by \newm{} but not \AL{}. 
The results suggest that the proposed \newm{} can be a powerful method in practice. 

\section{Discussion}
\label{sec:discussion}

In this work, we propose a novel joint adaptive penalty for regularized mediation analysis.
Our approach incorporates adaptive weights informed by the significance of mediation effects to improve statistical efficiency in estimating and identifying unbalanced mediation pathways. 
Theoretically, we establish rigorous asymptotic guarantee for controlling the estimation error and consistent selection of active mediation pathways. 
Numerically, we demonstrate the adaptability and scalability of the proposed method across diverse scenarios.  
Our proposed strategy provides a flexible and powerful framework for analyzing mediation effects, opening several new avenues for future  research.

First,  as noted in Remark \ref{rm:extension}, the adaptive weights can be extended to mediation analysis models for  diverse data types and causal chains \citep{jiang2024multimedia,hao2025class,tchetgen2011causal}. 
While our core idea of incorporating the significance of target causal effects remains applicable across different models and targets, 
the constructions and performance of extensions would require case-by-case investigation, presenting important questions and opportunities for future exploration. 

Second, although this paper considers scenarios with fixed-dimensional mediators for the ease of
understanding and illustration, we anticipate that our theoretical results can be extended to accommodate high-dimensional mediators under appropriate assumptions \citep{huang2008adaptive}. 
Methodologically, the proposed method could potentially be enhanced by combining with other widely used strategies, such as sure independence screening \citep{fan2008sure}. 
Developing appropriate refinements and comprehensive understanding for the proposed adaptive weights in high-dimensional scenarios would be important areas for future research. 

Third, this paper focuses on model estimation and consistent selection, where achieving the latter typically requires strong assumptions on signal strengths. More generally, alternative measures of variable selection performance, such as false discovery rate, could also be considered. 
This may be achieved by developing post-selection inference tools for the proposed adaptive penalty and combining them with multiple testing methods \citep{liu2022large}, as discussed in Remark \ref{rm:combine}. 
Understanding how the improved estimation efficiency influences the final performance of pathway identification would be an interesting future research direction.  

\section*{Funding}

This research was supported by the NSF Grant DMS-2515523 and Wisconsin Alumni Research Foundation to YH, and the National Institute of General Medical Sciences of the National Institutes of Health under award number R01GM152744 to KS. 

\section*{Acknowledgement}

Computational resources were provided by the Center for High Throughput Computing 
at the University of Wisconsin-Madison \citep{https://doi.org/10.21231/gnt1-hw21}.

\bibliographystyle{apalike}
\bibliography{references}

\begin{thebibliography}{}

\bibitem[Abrishamcar et~al., 2022]{abrishamcar2022dna}
Abrishamcar, S., Chen, J., Feil, D., Kilanowski, A., Koen, N., Vanker, A., Wedderburn, C.~J., Donald, K.~A., Zar, H.~J., Stein, D.~J., et~al. (2022).
\newblock Dna methylation as a potential mediator of the association between prenatal tobacco and alcohol exposure and child neurodevelopment in a south african birth cohort.
\newblock {\em Translational psychiatry}, 12(1):418.

\bibitem[Bai et~al., 2022]{bai2022chronic}
Bai, L., Benmarhnia, T., Chen, C., Kwong, J.~C., Burnett, R.~T., van Donkelaar, A., Martin, R.~V., Kim, J., Kaufman, J.~S., and Chen, H. (2022).
\newblock Chronic exposure to fine particulate matter increases mortality through pathways of metabolic and cardiovascular disease: insights from a large mediation analysis.
\newblock {\em Journal of the American Heart Association}, 11(22):e026660.

\bibitem[Bellavia et~al., 2019]{bellavia2019approaches}
Bellavia, A., James-Todd, T., and Williams, P.~L. (2019).
\newblock Approaches for incorporating environmental mixtures as mediators in mediation analysis.
\newblock {\em Environment international}, 123:368--374.

\bibitem[Bennett et~al., 2007]{bennett2007surgery}
Bennett, J.~M., Mehta, S., and Rhodes, M. (2007).
\newblock Surgery for morbid obesity.
\newblock {\em Postgraduate medical journal}, 83(975):8--15.

\bibitem[Blum et~al., 2020]{blum2020challenges}
Blum, M.~G., Valeri, L., Fran{\c{c}}ois, O., Cadiou, S., Siroux, V., Lepeule, J., and Slama, R. (2020).
\newblock Challenges raised by mediation analysis in a high-dimension setting.
\newblock {\em Environmental health perspectives}, 128(5):055001.

\bibitem[{Center for High Throughput Computing}, 2006]{https://doi.org/10.21231/gnt1-hw21}
{Center for High Throughput Computing} (2006).
\newblock Center for high throughput computing.

\bibitem[Chatterjee and Lahiri, 2013]{chatterjee2013rates}
Chatterjee, A. and Lahiri, S.~N. (2013).
\newblock Rates of convergence of the adaptive lasso estimators to the oracle distribution and higher order refinements by the bootstrap.
\newblock {\em The Annals of Statistics}, 41(3):1232 -- 1259.

\bibitem[Ch{\'e}n et~al., 2021]{chen2021identifying}
Ch{\'e}n, O.~Y., Cao, H., Phan, H., Nagels, G., Reinen, J.~M., Gou, J., Qian, T., Di, J., Prince, J., Cannon, T.~D., et~al. (2021).
\newblock Identifying neural signatures mediating behavioral symptoms and psychosis onset: High-dimensional whole brain functional mediation analysis.
\newblock {\em NeuroImage}, 226:117508.

\bibitem[Ch{\'e}n et~al., 2018]{chen2018high}
Ch{\'e}n, O.~Y., Crainiceanu, C., Ogburn, E.~L., Caffo, B.~S., Wager, T.~D., and Lindquist, M.~A. (2018).
\newblock High-dimensional multivariate mediation with application to neuroimaging data.
\newblock {\em Biostatistics}, 19(2):121--136.

\bibitem[Clark-Boucher et~al., 2023]{clark2023methods}
Clark-Boucher, D., Zhou, X., Du, J., Liu, Y., Needham, B.~L., Smith, J.~A., and Mukherjee, B. (2023).
\newblock Methods for mediation analysis with high-dimensional dna methylation data: Possible choices and comparisons.
\newblock {\em Plos Genetics}, 19(11):e1011022.

\bibitem[Dai et~al., 2022]{dai2022multiple}
Dai, J.~Y., Stanford, J.~L., and LeBlanc, M. (2022).
\newblock A multiple-testing procedure for high-dimensional mediation hypotheses.
\newblock {\em Journal of the American Statistical Association}, 117(537):198--213.

\bibitem[Daniel et~al., 2015]{daniel2015causal}
Daniel, R.~M., De~Stavola, B.~L., Cousens, S.~N., and Vansteelandt, S. (2015).
\newblock Causal mediation analysis with multiple mediators.
\newblock {\em Biometrics}, 71(1):1--14.

\bibitem[Deng et~al., 2023]{deng2023effect}
Deng, C., Pan, J., Zhu, H., and Chen, Z.-Y. (2023).
\newblock Effect of gut microbiota on blood cholesterol: A review on mechanisms.
\newblock {\em Foods}, 12(23):4308.

\bibitem[Du et~al., 2023]{du2023methods}
Du, J., Zhou, X., Clark-Boucher, D., Hao, W., Liu, Y., Smith, J.~A., and Mukherjee, B. (2023).
\newblock Methods for large-scale single mediator hypothesis testing: Possible choices and comparisons.
\newblock {\em Genetic epidemiology}, 47(2):167--184.

\bibitem[Erawijantari et~al., 2020]{erawijantari2020influence}
Erawijantari, P.~P., Mizutani, S., Shiroma, H., Shiba, S., Nakajima, T., Sakamoto, T., Saito, Y., Fukuda, S., Yachida, S., and Yamada, T. (2020).
\newblock Influence of gastrectomy for gastric cancer treatment on faecal microbiome and metabolome profiles.
\newblock {\em Gut}, 69(8):1404--1415.

\bibitem[Fan and Lv, 2008]{fan2008sure}
Fan, J. and Lv, J. (2008).
\newblock Sure independence screening for ultrahigh dimensional feature space.
\newblock {\em Journal of the Royal Statistical Society Series B: Statistical Methodology}, 70(5):849--911.

\bibitem[Friedman et~al., 2010]{friedman2010regularization}
Friedman, J., Hastie, T., and Tibshirani, R. (2010).
\newblock Regularization paths for generalized linear models via coordinate descent.
\newblock {\em Journal of statistical software}, 33(1):1.

\bibitem[Gao et~al., 2019]{gao2019testing}
Gao, Y., Yang, H., Fang, R., Zhang, Y., Goode, E.~L., and Cui, Y. (2019).
\newblock Testing mediation effects in high-dimensional epigenetic studies.
\newblock {\em Frontiers in genetics}, 10:1195.

\bibitem[Geyer, 1994]{geyer1994asymptotics}
Geyer, C.~J. (1994).
\newblock On the asymptotics of constrained {M}-estimation.
\newblock {\em The Annals of statistics}, pages 1993--2010.

\bibitem[Gloor et~al., 2017]{gloor2017microbiome}
Gloor, G.~B., Macklaim, J.~M., Pawlowsky-Glahn, V., and Egozcue, J.~J. (2017).
\newblock Microbiome datasets are compositional: and this is not optional.
\newblock {\em Frontiers in microbiology}, 8:2224.

\bibitem[Hao et~al., 2025]{hao2025class}
Hao, W., Chen, C., and Song, P. X.-K. (2025).
\newblock A class of directed acyclic graphs with mixed data types in mediation analysis.
\newblock {\em Canadian Journal of Statistics}, page e70016.

\bibitem[Hao and Song, 2023]{hao2023simultaneous}
Hao, W. and Song, P.~X. (2023).
\newblock A simultaneous likelihood test for joint mediation effects of multiple mediators.
\newblock {\em Statistica Sinica}, 33(4):2305--2326.

\bibitem[He et~al., 2024]{he2024adaptive}
He, Y., Song, P.~X., and Xu, G. (2024).
\newblock Adaptive bootstrap tests for composite null hypotheses in the mediation pathway analysis.
\newblock {\em Journal of the Royal Statistical Society Series B: Statistical Methodology}, 86(2):411--434.

\bibitem[Huang et~al., 2008]{huang2008adaptive}
Huang, J., Ma, S., and Zhang, C.-H. (2008).
\newblock Adaptive lasso for sparse high-dimensional regression models.
\newblock {\em Statistica Sinica}, pages 1603--1618.

\bibitem[Huang and Pan, 2016]{huang2016hypothesis}
Huang, Y.-T. and Pan, W.-C. (2016).
\newblock Hypothesis test of mediation effect in causal mediation model with high-dimensional continuous mediators.
\newblock {\em Biometrics}, 72(2):402--413.

\bibitem[Imai and Yamamoto, 2013]{imai2013identification}
Imai, K. and Yamamoto, T. (2013).
\newblock Identification and sensitivity analysis for multiple causal mechanisms: Revisiting evidence from framing experiments.
\newblock {\em Political Analysis}, 21(2):141--171.

\bibitem[Jeong et~al., 2018]{jeong2018effect}
Jeong, S.-M., Choi, S., Kim, K., Kim, S.~M., Lee, G., Park, S.~Y., Kim, Y.-Y., Son, J.~S., Yun, J.-M., and Park, S.~M. (2018).
\newblock Effect of change in total cholesterol levels on cardiovascular disease among young adults.
\newblock {\em Journal of the American Heart Association}, 7(12):e008819.

\bibitem[J{\'e}rolon et~al., 2021]{jerolon2021causal}
J{\'e}rolon, A., Baglietto, L., Birmel{\'e}, E., Alarcon, F., and Perduca, V. (2021).
\newblock Causal mediation analysis in presence of multiple mediators uncausally related.
\newblock {\em The International Journal of Biostatistics}, 17(2):191--221.

\bibitem[Jia et~al., 2023]{jia2023gut}
Jia, B., Zou, Y., Han, X., Bae, J.-W., and Jeon, C.~O. (2023).
\newblock Gut microbiome-mediated mechanisms for reducing cholesterol levels: implications for ameliorating cardiovascular disease.
\newblock {\em Trends in Microbiology}, 31(1):76--91.

\bibitem[Jiang, 2025]{AdaptMLASSOcodes}
Jiang, H. (2025).
\newblock \url{https://github.com/hhhanying/AdaptMLASSO}.

\bibitem[Jiang et~al., 2024]{jiang2024multimedia}
Jiang, H., Miao, X., Thairu, M.~W., Beebe, M., Grupe, D.~W., Davidson, R.~J., Handelsman, J., and Sankaran, K. (2024).
\newblock Multimedia: multimodal mediation analysis of microbiome data.
\newblock {\em Microbiology Spectrum}, 0(0):e01131--24.

\bibitem[Kissmann et~al., 2024]{kissmann2024increase}
Kissmann, A.-K., Pa{\ss}, F., Ruzicka, H.-M., Dorst, I., Stieger, K.~R., Weil, T., Gihring, A., Elad, L., Knippschild, U., and Rosenau, F. (2024).
\newblock An increase in prominent probiotics represents the major change in the gut microbiota in morbidly obese female patients upon bariatric surgery.
\newblock {\em Women}, 4(1):86--104.

\bibitem[Ko et~al., 2023]{ko2023metabolic}
Ko, J., Sequeira, I.~R., Skudder-Hill, L., Cho, J., Poppitt, S.~D., and Petrov, M.~S. (2023).
\newblock Metabolic traits affecting the relationship between liver fat and intrapancreatic fat: a mediation analysis.
\newblock {\em Diabetologia}, 66(1):190--200.

\bibitem[Lee et~al., 2015]{lee2015changes}
Lee, J.~W., Kim, E.~Y., Yoo, H.~M., Park, C.~H., and Song, K.~Y. (2015).
\newblock Changes of lipid profiles after radical gastrectomy in patients with gastric cancer.
\newblock {\em Lipids in Health and Disease}, 14:1--9.

\bibitem[Lin et~al., 2018]{lin2018long}
Lin, X.-H., Huang, K.-H., Chuang, W.-H., Luo, J.-C., Lin, C.-C., Ting, P.-H., Young, S.-H., Fang, W.-L., Hou, M.-C., and Lee, F.-Y. (2018).
\newblock The long term effect of metabolic profile and microbiota status in early gastric cancer patients after subtotal gastrectomy.
\newblock {\em PLoS One}, 13(11):e0206930.

\bibitem[Liu et~al., 2022]{liu2022large}
Liu, Z., Shen, J., Barfield, R., Schwartz, J., Baccarelli, A.~A., and Lin, X. (2022).
\newblock Large-scale hypothesis testing for causal mediation effects with applications in genome-wide epigenetic studies.
\newblock {\em Journal of the American Statistical Association}, 117(537):67--81.

\bibitem[Loh et~al., 2022]{loh2022disentangling}
Loh, W.~W., Moerkerke, B., Loeys, T., and Vansteelandt, S. (2022).
\newblock Disentangling indirect effects through multiple mediators without assuming any causal structure among the mediators.
\newblock {\em Psychological Methods}, 27(6):982.

\bibitem[Lu et~al., 2023]{lu2023lipid}
Lu, S., Xie, Q., Kuang, M., Hu, C., Li, X., Yang, H., Sheng, G., Xie, G., and Zou, Y. (2023).
\newblock Lipid metabolism, bmi and the risk of nonalcoholic fatty liver disease in the general population: evidence from a mediation analysis.
\newblock {\em Journal of Translational Medicine}, 21(1):192.

\bibitem[MacKinnon, 2012]{mackinnon2012introduction}
MacKinnon, D. (2012).
\newblock {\em Introduction to statistical mediation analysis}.
\newblock Routledge.

\bibitem[Maki et~al., 1967]{maki1967pylorus}
Maki, T., Shiratori, T., Hatafuku, T., and Sugawara, K. (1967).
\newblock Pylorus-preserving gastrectomy as an improved operation for gastric ulcer.
\newblock {\em Surgery}, 61(6):838--845.

\bibitem[Miles, 2023]{miles2023causal}
Miles, C.~H. (2023).
\newblock On the causal interpretation of randomised interventional indirect effects.
\newblock {\em Journal of the Royal Statistical Society Series B: Statistical Methodology}, 85(4):1154--1172.

\bibitem[Muller et~al., 2022]{muller2022gut}
Muller, E., Algavi, Y.~M., and Borenstein, E. (2022).
\newblock The gut microbiome-metabolome dataset collection: a curated resource for integrative meta-analysis.
\newblock {\em npj Biofilms and Microbiomes}, 8(1):79.

\bibitem[Penna and Allum, 2013]{penna2013new}
Penna, M. and Allum, W. (2013).
\newblock New treatments for gastric cancer: are they changing clinical practice?
\newblock {\em Clinical Practice}, 10(5):649.

\bibitem[S{\'a}nchez-Alcoholado et~al., 2019]{sanchez2019gut}
S{\'a}nchez-Alcoholado, L., Guti{\'e}rrez-Repiso, C., G{\'o}mez-P{\'e}rez, A.~M., Garc{\'\i}a-Fuentes, E., Tinahones, F.~J., and Moreno-Indias, I. (2019).
\newblock Gut microbiota adaptation after weight loss by roux-en-y gastric bypass or sleeve gastrectomy bariatric surgeries.
\newblock {\em Surgery for Obesity and Related Diseases}, 15(11):1888--1895.

\bibitem[Seber and Lee, 2003]{seber2012linear}
Seber, G.~A. and Lee, A.~J. (2003).
\newblock {\em Linear regression analysis}.
\newblock John Wiley \& Sons.

\bibitem[Shi and Li, 2022]{shi2022testing}
Shi, C. and Li, L. (2022).
\newblock Testing mediation effects using logic of boolean matrices.
\newblock {\em Journal of the American Statistical Association}, 117(540):2014--2027.

\bibitem[Sohn and Li, 2019]{sohn2019compositional}
Sohn, M.~B. and Li, H. (2019).
\newblock Compositional mediation analysis for microbiome studies.
\newblock {\em The Annals of Applied Statistics}, 13(1):661--681.

\bibitem[Sohn et~al., 2022]{sohn2022compositional}
Sohn, M.~B., Lu, J., and Li, H. (2022).
\newblock A compositional mediation model for a binary outcome: application to microbiome studies.
\newblock {\em Bioinformatics}, 38(1):16--21.

\bibitem[Song et~al., 2021]{song2021bayesian}
Song, Y., Zhou, X., Kang, J., Aung, M.~T., Zhang, M., Zhao, W., Needham, B.~L., Kardia, S.~L., Liu, Y., Meeker, J.~D., et~al. (2021).
\newblock Bayesian sparse mediation analysis with targeted penalization of natural indirect effects.
\newblock {\em Journal of the Royal Statistical Society Series C: Applied Statistics}, 70(5):1391--1412.

\bibitem[Song et~al., 2020]{song2020bayesian}
Song, Y., Zhou, X., Zhang, M., Zhao, W., Liu, Y., Kardia, S.~L., Roux, A. V.~D., Needham, B.~L., Smith, J.~A., and Mukherjee, B. (2020).
\newblock Bayesian shrinkage estimation of high dimensional causal mediation effects in omics studies.
\newblock {\em Biometrics}, 76(3):700--710.

\bibitem[Sun et~al., 2013]{sun2013consistent}
Sun, W., Wang, J., and Fang, Y. (2013).
\newblock Consistent selection of tuning parameters via variable selection stability.
\newblock {\em The Journal of Machine Learning Research}, 14(1):3419--3440.

\bibitem[Taguri et~al., 2018]{taguri2018causal}
Taguri, M., Featherstone, J., and Cheng, J. (2018).
\newblock Causal mediation analysis with multiple causally non-ordered mediators.
\newblock {\em Statistical methods in medical research}, 27(1):3--19.

\bibitem[Tchetgen~Tchetgen, 2011]{tchetgen2011causal}
Tchetgen~Tchetgen, E.~J. (2011).
\newblock On causal mediation analysis with a survival outcome.
\newblock {\em The international journal of biostatistics}, 7(1):0000102202155746791351.

\bibitem[Tibshirani, 1996]{tibshirani1996regression}
Tibshirani, R. (1996).
\newblock Regression shrinkage and selection via the lasso.
\newblock {\em Journal of the Royal Statistical Society Series B: Statistical Methodology}, 58(1):267--288.

\bibitem[Tingley et~al., 2014]{tingley2014mediation}
Tingley, D., Yamamoto, T., Hirose, K., Keele, L., and Imai, K. (2014).
\newblock Mediation: R package for causal mediation analysis.
\newblock {\em Journal of Statistical Software}.

\bibitem[Toivonen et~al., 2021]{toivonen2021antibiotic}
Toivonen, L., Schuez-Havupalo, L., Karppinen, S., Waris, M., Hoffman, K.~L., Camargo~Jr, C.~A., Hasegawa, K., and Peltola, V. (2021).
\newblock Antibiotic treatments during infancy, changes in nasal microbiota, and asthma development: population-based cohort study.
\newblock {\em Clinical Infectious Diseases}, 72(9):1546--1554.

\bibitem[Tromba et~al., 2017]{tromba2017role}
Tromba, L., Tartaglia, F., Carbotta, S., Sforza, N., Pelle, F., Colagiovanni, V., Carbotta, G., Cavaiola, S., and Casella, G. (2017).
\newblock The role of sleeve gastrectomy in reducing cardiovascular risk.
\newblock {\em Obesity surgery}, 27:1145--1151.

\bibitem[van~de Geer et~al., 2014]{van2014asymptotically}
van~de Geer, S., B{\"u}hlmann, P., Ritov, Y., and Dezeure, R. (2014).
\newblock On asymptotically optimal confidence regions and tests for high-dimensional models.
\newblock {\em The Annals of Statistics}, pages 1166--1202.

\bibitem[VanderWeele and Vansteelandt, 2014]{vanderweele2014mediation}
VanderWeele, T. and Vansteelandt, S. (2014).
\newblock Mediation analysis with multiple mediators.
\newblock {\em Epidemiologic methods}, 2(1):95--115.

\bibitem[Vansteelandt and Daniel, 2017]{vansteelandt2017interventional}
Vansteelandt, S. and Daniel, R.~M. (2017).
\newblock Interventional effects for mediation analysis with multiple mediators.
\newblock {\em Epidemiology}, 28(2):258--265.

\bibitem[Vourakis et~al., 2021]{vourakis2021role}
Vourakis, M., Mayer, G., and Rousseau, G. (2021).
\newblock The role of gut microbiota on cholesterol metabolism in atherosclerosis.
\newblock {\em International journal of molecular sciences}, 22(15):8074.

\bibitem[Wu et~al., 2022]{wu2022gut}
Wu, C., Zhao, Y., Zhang, Y., Yang, Y., Su, W., Yang, Y., Sun, L., Zhang, F., Yu, J., Wang, Y., et~al. (2022).
\newblock Gut microbiota specifically mediates the anti-hypercholesterolemic effect of berberine (bbr) and facilitates to predict bbr’s cholesterol-decreasing efficacy in patients.
\newblock {\em Journal of advanced research}, 37:197--208.

\bibitem[Yang et~al., 2024]{yang2024causal}
Yang, H., Liu, Z., Wang, R., Lai, E.-Y., Schwartz, J., Baccarelli, A.~A., Huang, Y.-T., and Lin, X. (2024).
\newblock Causal mediation analysis for integrating exposure, genomic, and phenotype data.
\newblock {\em Annual Review of Statistics and Its Application}, 12.

\bibitem[Yu et~al., 2020]{yu2020fecal}
Yu, D., Shu, X.-O., Howard, E.~F., Long, J., English, W.~J., and Flynn, C.~R. (2020).
\newblock Fecal metagenomics and metabolomics reveal gut microbial changes after bariatric surgery.
\newblock {\em Surgery for Obesity and Related Diseases}, 16(11):1772--1782.

\bibitem[Zeng et~al., 2021]{zeng2021statistical}
Zeng, P., Shao, Z., and Zhou, X. (2021).
\newblock Statistical methods for mediation analysis in the era of high-throughput genomics: current successes and future challenges.
\newblock {\em Computational and structural biotechnology journal}, 19:3209--3224.

\bibitem[Zhang, 2010]{zhang2010nearly}
Zhang, C.-H. (2010).
\newblock Nearly unbiased variable selection under minimax concave penalty.
\newblock {\em The Annals of Statistics}, pages 894--942.

\bibitem[Zhang and Zhang, 2014]{zhang2014confidence}
Zhang, C.-H. and Zhang, S.~S. (2014).
\newblock Confidence intervals for low dimensional parameters in high dimensional linear models.
\newblock {\em Journal of the Royal Statistical Society Series B: Statistical Methodology}, 76(1):217--242.

\bibitem[Zhang et~al., 2016]{zhang2016estimating}
Zhang, H., Zheng, Y., Zhang, Z., Gao, T., Joyce, B., Yoon, G., Zhang, W., Schwartz, J., Just, A., Colicino, E., et~al. (2016).
\newblock Estimating and testing high-dimensional mediation effects in epigenetic studies.
\newblock {\em Bioinformatics}, 32(20):3150--3154.

\bibitem[Zhang, 2022]{zhang2022high}
Zhang, Q. (2022).
\newblock High-dimensional mediation analysis with applications to causal gene identification.
\newblock {\em Statistics in biosciences}, 14(3):432--451.

\bibitem[Zhao et~al., 2022]{zhao2022bayesian}
Zhao, Y., Chen, T., Cai, J., Lichenstein, S., Potenza, M.~N., and Yip, S.~W. (2022).
\newblock Bayesian network mediation analysis with application to the brain functional connectome.
\newblock {\em Statistics in medicine}, 41(20):3991--4005.

\bibitem[Zhao et~al., 2020]{zhao2020sparse}
Zhao, Y., Lindquist, M.~A., and Caffo, B.~S. (2020).
\newblock Sparse principal component based high-dimensional mediation analysis.
\newblock {\em Computational statistics \& data analysis}, 142:106835.

\bibitem[Zhao and Luo, 2022]{zhao2022pathway}
Zhao, Y. and Luo, X. (2022).
\newblock Pathway lasso: pathway estimation and selection with high-dimensional mediators.
\newblock {\em Statistics and its interface}, 15(1):39--50.

\bibitem[Zhou et~al., 2020]{zhou2020estimation}
Zhou, R.~R., Wang, L., and Zhao, S.~D. (2020).
\newblock Estimation and inference for the indirect effect in high-dimensional linear mediation models.
\newblock {\em Biometrika}, 107(3):573--589.

\bibitem[Zou, 2006]{zou2006adaptive}
Zou, H. (2006).
\newblock The adaptive lasso and its oracle properties.
\newblock {\em Journal of the American statistical association}, 101(476):1418--1429.

\bibitem[Zou and Hastie, 2005]{zou2005regularization}
Zou, H. and Hastie, T. (2005).
\newblock Regularization and variable selection via the elastic net.
\newblock {\em Journal of the Royal Statistical Society Series B: Statistical Methodology}, 67(2):301--320.

\end{thebibliography}

\appendix

\begin{center}
   \huge \textbf{Appendix} \normalsize 
\end{center}

We provide the proofs of Lemma \ref{lemma:identifiability}, Propositions \ref{proposition:solution}--\ref{prop:penaltyscale}, 
and Theorem \ref{thm:oracle_property} in Sections \ref{sec:proof_identifiability}--\ref{sec:pf_penaltyscale}, respectively. The supplementary simulation results of Pathway LASSO is presented in Appendix \ref{sec:PathwayLASSO}.

\section{Proof of Lemma \ref{lemma:identifiability}}
\label{sec:proof_identifiability}

To prove Lemma \ref{lemma:identifiability}, we first state conclusions that will be needed in the following derivation.  
In particular, under the model \eqref{eq:LSEM}, we have
\begin{align}
    \E[M_j \mid T = t, \vectrv{X}= \vect{x}]    &= \alpha_j^{\ast} t  + \vect{\zeta}_{Mj}^{\ast\top}\vect{x}, \quad  j = 1, \ldots, p, \label{eq:expt_M}\\
    \E[Y \mid T = t, \vectrv{M} = \vect{m}, \vectrv{X}= \vect{x}] 
    &= \BT^\ast t + \vect{\beta}^{\ast\top} \vect{m} + \vect{\zeta}_Y^{\ast \top} \vect{x}, \label{eq:expt_Y}
\end{align}
where $\vect{\zeta}_{Mj}^{\ast}$ represents the $j$th column of $\vect{\zeta}_{M}^{\ast}$. 
For $ j = 1, \ldots, p$, we have
\begin{align}
        \E[M_j(t) \mid \vectrv{X}= \vect{x}] 
    & = \E[M_j(t) \mid T = t, \vectrv{X}= \vect{x}] 
        \hspace{2em} \text{(by Condition \ref{cond:simma} (i))} \label{eq:expt_Mt}\\
    & = \E[M_j \mid T = t, \vectrv{X}= \vect{x}]  
        \hspace{2em} \text{(by Condition \ref{cond:consistency})} \nonumber\\
    & = \alpha_j^{\ast} t + \vect{\zeta}_{Mj}^{\ast\top}\vect{x}. 
        \hspace{2em} \text{(by \eqref{eq:expt_M})} \nonumber
\end{align}

To derive $\delta_j(t';,t)$,  note that $\E\{Y\left(t, M_j(t'), \vectrv{M}_{-j}(t)\right)] = \E[\E[Y\left(t, M_j(t'), \vectrv{M}_{-j}(t)\right) \mid \boldsymbol{X}]\}$.
We first examine  
\begin{align}  
   &~ \E[Y\left(t, M_j(t'), \vectrv{M}_{-j}(t)\right) \mid \boldsymbol{X}=\boldsymbol{x}]\notag\\
=&~\int_{\reals^p} \E[Y(t, m, \vect{w}) \mid M_j(t') = m, \vectrv{X}= \vect{x}, \vectrv{M}_{-j}(t) = \vect{w}] \diffF_{t',t\mid \vect{x}}(m, \vect{w}), \label{eq:expt_Ytm_1}
\end{align} 
where $F_{t',t\mid \vect{x}}$ represents the distribution  of $(M_j(t'), \vectrv{M}_{-j}(t))$ conditional on $\vectrv{X} = \vect{x}$. 
Since Condition \ref{cond:simma} (i) implies $ Y(t, m, \vect{w}) \ind T \mid \left\{M_j(t') = m', \vectrv{M}_{-j}(t'')= \vect{w}'', \vectrv{X}= \vect{x} \right\}$, 
\begin{align}
    \eqref{eq:expt_Ytm_1} =&~   \int_{\reals^p} \E[Y(t, m, \vect{w}) \mid  M_j(t') = m,   \vectrv{M}_{-j}(t) = \vect{w}, T=t , \, \vectrv{X}= \vect{x}] \diffF_{t',t\mid \vect{x}}(m, \vect{w}) \notag\\
    =&~   \int_{\reals^p} \E[Y(t, m, \vect{w}) \mid  M_j(t) = m,    \vectrv{M}_{-j}(t) = \vect{w}, T=t, \, \vectrv{X}= \vect{x}] \diffF_{t',t\mid \vect{x}}(m, \vect{w}),\label{eq:expt_Ytm_2} 
\end{align}
where the last equation follows by  Condition \ref{cond:simma} (iii). 
By Condition \ref{cond:consistency}, 
\begin{align}
  \eqref{eq:expt_Ytm_2} =  &~ \int_{\reals^p} \E[Y \mid M_j = m, \vectrv{M}_{-j} = \vect{w}, T = t, \vectrv{X}= \vect{x}] \diffF_{t',t\mid \vect{x}}(m, \vect{w}) \\
      =&   ~  \int_{\reals^p} \left(\BT^{\ast} t + \beta_j^{\ast} m +  \vbeta_{-j}^{\ast\top}\vect{w} + \vect{\zeta}_Y^{\ast \top} \vect{x}\right) \diffF_{t',t\mid \vect{x}}(m, \vect{w})  
       \hspace{2em} \text{(by \eqref{eq:expt_Y})} \nonumber\\
   =  &  ~  \BT^{\ast} t + \alpha_j^{\ast} \beta_j^{\ast} t' + \vect{\beta}^{\ast \top}_{-j}\vect{\alpha}^{\ast}_{-j}  t + (\vect{\beta}^{\ast \top} \vect{\zeta}_M^{\ast \top} + \vect{\zeta}_Y^{\ast \top}) \vect{x},  \hspace{2em} \text{(by \eqref{eq:expt_Mt})} \label{eq:expt_Ytm_3}  
\end{align}
where $\vect{\alpha}^{\ast}_{-j}$ and $\vect{\beta}^{\ast}_{-j}$ represent the vectors $\vect{\alpha}^{\ast}$  and $\vect{\beta}^{\ast}$ excluding their $j$-th elements, respectively. 
Combining \eqref{eq:expt_Ytm_1}--\eqref{eq:expt_Ytm_3}, we obtain   $ \E \left[Y(t, M_j(t'), \vectrv{M}_{-j}(t))\right] = \BT^{\ast} t + \alpha_j^{\ast} \beta_j^{\ast} t' + \vect{\beta}^{\ast \top}_{-j}\vect{\alpha}^{\ast}_{-j}  t + (\vect{\beta}^{\ast \top} \vect{\zeta}_M^{\ast \top} + \vect{\zeta}_Y^{\ast \top})\E(\vect{X})$. Similar conclusion can be derived for $\E\left[Y(t, M_j(t), \vectrv{M}_{-j}(t)) \right]$ by letting $t'=t$. In summary, 
 $     \delta_j(t'; t) 
    =\E \left[Y(t, M_j(t'), \vectrv{M}_{-j}(t))\right]  - \E\left[Y(t, M_j(t), \vectrv{M}_{-j}(t)) \right] =   \alpha_j^{\ast} \beta_j^{\ast} (t'-t)$ is obtained. 

\section{Proof of Proposition \ref{proposition:solution}}\label{sec:proof_proposition}  

When applying the loss functions defined in \eqref{eq:quadloss}, we can rewrite the optimizations \eqref{eq:fit_mediator} and \eqref{eq:fit_outcome} as
\begin{equation}
\label{eq:opt_general}
    \left( \hat{\boldsymbol{\theta}}_{AP},  \hat{\boldsymbol{\theta}}_{AU} \right) = \argmin_{\boldsymbol{\theta}_{AP}, \boldsymbol{\theta}_{AU}} \Fnorm{
    \mathbf{R}_A - \mathbf{D}_{AP}\boldsymbol{\theta}_{AP} - \mathbf{D}_{AU}\boldsymbol{\theta}_{AU}
    }^2 + \bar{\mathcal{P}}_A(\boldsymbol{\theta}_{AP})
\end{equation}
for $A\in \{M,Y\}$. 
Since $\mathbf{P}_{{AU}}^{\perp}$ is a projection matrix onto the column space orthogonal to that of $\mathbf{D}_{AU}$, by  properties of projection matrices \citep{seber2012linear}, we can decompose 
$\Fnorm{\mathbf{R}_A - \mathbf{D}_{AP}\boldsymbol{\theta}_{AP} - \mathbf{D}_{AU}\boldsymbol{\theta}_{AU}}^2 = \ell_{A,1}(\boldsymbol{\theta}_{AP})+ \ell_{A,2}( \boldsymbol{\theta}_{AP}, \boldsymbol{\theta}_{AU})$, where $\mathbf{P}_{{AU}}=\mathrm{I}_{n\times n}-\mathbf{P}_{{AU}}^{\perp}$, 
\begin{align*}
    \ell_{A,1}(\boldsymbol{\theta}_{AP})= &~  \Fnorm{\mathbf{P}_{ {AU}}^{\perp} (\mathbf{R}_A-\mathbf{D}_{AP}\boldsymbol{\theta}_{AP}) }^2, \\
    \ell_{A,2}( \boldsymbol{\theta}_{AP}, \boldsymbol{\theta}_{AU}) = &~  \Fnorm{\mathbf{P}_{{AU}}( \mathbf{R}_A - \mathbf{D}_{AP}\boldsymbol{\theta}_{AP}) - \mathbf{D}_{AU}\boldsymbol{\theta}_{AU}}^2, 
\end{align*}
and $ \ell_{A,1}(\cdot)$ does not depend on $\boldsymbol{\theta}_{AU}$. 
Plugging the above decomposition into \eqref{eq:opt_general}, we obtain
\begin{align}
    \left( \hat{\boldsymbol{\theta}}_{AP},  \hat{\boldsymbol{\theta}}_{AU} \right)  
    = &~\argmin_{\boldsymbol{\theta}_{AP}, \boldsymbol{\theta}_{AU}} \ell_{A,1}(\boldsymbol{\theta}_{AP}) + \ell_{A,2}( \boldsymbol{\theta}_{AP}, \boldsymbol{\theta}_{AU})  + \bar{\mathcal{P}}_A(\boldsymbol{\theta}_{AP}), 
\end{align}
which is unique by the convexity of the penalized loss function. 

By equivalently optimizing over $\hat{\boldsymbol{\theta}}_{AU}$ and $\hat{\boldsymbol{\theta}}_{AP}$ sequentially, we have
\begin{align}
     \hat{\boldsymbol{\theta}}_{AP} =&~\argmin_{\boldsymbol{\theta}_{AP}} \,   \ell_{A,1}(\boldsymbol{\theta}_{AP})  + \bar{\mathcal{P}}_A(\boldsymbol{\theta}_{AP}) + \min_{\boldsymbol{\theta}_{AU} }\ell_{A,2}( \boldsymbol{\theta}_{AP}, \boldsymbol{\theta}_{AU}) \notag\\
     =&~ \argmin_{\boldsymbol{\theta}_{AP}} \,   \ell_{A,1}(\boldsymbol{\theta}_{AP})  + \bar{\mathcal{P}}_A(\boldsymbol{\theta}_{AP}),  \label{eq:thetahatp_pf}
\end{align}
where the second equation follows by the property of ordinary least squares regression that 
$\min_{\boldsymbol{\theta}_{AU} }\ell_{A,2}( \boldsymbol{\theta}_{AP}, \boldsymbol{\theta}_{AU})=0$ for each given $\boldsymbol{\theta}_{AP}$. 
Therefore, the first equation in  \eqref{eq:opt} is proved. 
Similarly, by sequential optimization, we know 
\begin{align*}
    \hat{\boldsymbol{\theta}}_{AU} =&~\argmin_{\boldsymbol{\theta}_{AU}} \,   \ell_{A,1}(\hat{\boldsymbol{\theta}}_{AP})  + \bar{\mathcal{P}}_A(\hat{\boldsymbol{\theta}}_{AP}) +  \ell_{A,2}( \hat{\boldsymbol{\theta}}_{AP}, \boldsymbol{\theta}_{AU}) \\
    =&~\argmin_{\boldsymbol{\theta}_{AU}} \ell_{A,2}( \hat{\boldsymbol{\theta}}_{AP}, \boldsymbol{\theta}_{AU}) = \mathbf{D}_{AU}^{\dagger}(\mathbf{R}_A-\mathbf{D}_{AP}\hat{\boldsymbol{\theta}}_{AP}),
\end{align*} 
which follows by the solution of ordinary least squares regression. 

\begin{remark} \label{rm:alpha_sol}
When  $\mathcal{P}_M(\cdot)=0$, $\Anhat$ is given by
\begin{equation}
\label{eq:anjhat_form}
    \Anjhat = \softthre\left(
    \frac{\vM_{nj}^\mytrans \mathbf{P}_{\vect{X}_n}^{\perp}\vT_n} {\|\mathbf{P}_{\vect{X}_n}^{\perp}\vT_n\|_2^2}; 
    \ \frac{\lambdaA}{2 \wA \|\mathbf{P}_{\vect{X}_n}^{\perp}\vT_n\|_2^2}\right), j =  1,\ldots, p,
\end{equation}
where 
$\mathbf{M}_{nj}$ is the $j$th column of $\vM_n$ 
and $\mathbf{P}_{\vect{X}_n}^{\perp} = \mathrm{I}_{n\times n}- \vect{X}_n(\vect{X}_n^\top \vect{X}_n)^{-1}\vect{X}_n^{\top}$.
\end{remark}
\begin{proof}
    By Proposition \ref{proposition:solution}, optimization \eqref{eq:fit_mediator} can be written as
    \begin{align*}
        \Anhat 
        &= \argmin_{\vect{\alpha}\in \reals^p} 
        \Fnorm{\mathbf{P}_{\vect{X}_n}^{\perp} (\vM_n - \vT_n \vect{\alpha}^\top)}^2 +
        \lambdaA \sum_{j = 1}^p \frac{|\alpha_j|}{\wA}\\
        &= \argmin_{\vect{\alpha}\in \reals^p} \sum_{j}^p \left\{\|\mathbf{P}_{\vect{X}_n}^{\perp}\vM_{nj} - \alpha_j \mathbf{P}_{\vect{X}_n}^{\perp}\vT_n\|^2_2 + \lambdaA \frac{|\alpha_j|}{\wA} \right\}.
    \end{align*}
    Hence, each element $\Anjhat$ in $\Anhat$ is the solution to the following optimization problem:
    \begin{equation*}
        \Anjhat = \argmin_{\alpha_j\in \reals} \{\|\mathbf{P}_{\vect{X}_n}^{\perp}\vM_{nj} - \alpha_j \mathbf{P}_{\vect{X}_n}^{\perp}\vT_n\|^2_2 + \lambdaA {|\alpha_j|}/{\wA}\},
    \end{equation*}
    which has a closed form given by \eqref{eq:anjhat_form}.
\end{proof}

\section{Proof of Proposition \ref{lemma:initial}}\label{sec:proof_initial}

First, to show $\AnjhatInit$ constructed in \eqref{eq:truncated_OLS} satisfies Condition \ref{cond:initial}(ii), note that $|\AnjhatInit| \leq \TrnjA = \CTr \cdot \hat{se}(\AnjOLS)$. Thus, $1/\AnjhatInit = O_p(1/\hat{se}(\AnjOLS)) = O_p(\sqrt{n})$, which follows from the third term in \eqref{eq:seAn} in Lemma \ref{lemma:OLS_se} below. 

\begin{lemma} \label{lemma:OLS_se}
    Under the conditions of Proposition \ref{lemma:initial}, OLS estimators $\AnjOLS$ and $\BnjOLS$ satisfy 
    \begin{align}
          \sqrt{n}(\AnjOLS - \Anj) &= O_p(1), &\sqrt{n}\hat{se}(\AnjOLS) &= O_p(1), & \frac{1}{\sqrt{n}\hat{se}(\AnjOLS) }&= O_p(1), \label{eq:seAn}\\
        \sqrt{n}(\BnjOLS - \Bnj)& = O_p(1), &  \sqrt{n}\hat{se}(\BnjOLS) &= O_p(1), & \frac{1}{\sqrt{n} \hat{se}(\BnjOLS)} &= O_p(1). \label{eq:seBn}
    \end{align}
\end{lemma}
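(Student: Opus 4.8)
The plan is to handle the two regressions separately, exploiting that the exposure-to-mediator regression has a fixed design $(\vT_n,\X)$, whereas the mediator-to-outcome regression has a partially random design because $\vM_n$ enters as a regressor. Throughout, only $O_p$ rates are needed (not asymptotic normality), so Chebyshev-type bounds suffice and no moment assumptions beyond Condition \ref{cond:moments} are required.

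For the $\alpha$-claims in \eqref{eq:seAn} I would apply the Frisch--Waugh--Lovell decomposition to the OLS fit of $\vM_{nj}$ on $(\vT_n,\X)$: since $\vM_{nj}=\Anj\vT_n+\X\boldsymbol{\zeta}_{Mj}^{\ast}+\vE_{nj}$ and $\PXP\X=\mathbf{0}$, one obtains $\AnjOLS-\Anj=(\PXP\vT_n)^{\top}\vE_{nj}/\|\PXP\vT_n\|_2^2$. The numerator has conditional mean zero and conditional variance $\Sigma_{jj}\|\PXP\vT_n\|_2^2=O(n)$ by Condition \ref{cond:moments}, hence is $O_p(\sqrt n)$ by Chebyshev; as $\|\PXP\vT_n\|_2^2/n\to\CA>0$, dividing gives $\sqrt n(\AnjOLS-\Anj)=O_p(1)$. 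For the standard error, $\hat{se}(\AnjOLS)^2=\hat\sigma_{Mj}^2/\|\PXP\vT_n\|_2^2$ with $\hat\sigma_{Mj}^2$ the OLS residual variance; writing the residual sum of squares as $\vE_{nj}^{\top}\vE_{nj}-\vE_{nj}^{\top}\DM(\DM^{\top}\DM)^{-1}\DM^{\top}\vE_{nj}$ and noting the second term is $O_p(1)$ (because $\DM^{\top}\vE_{nj}=O_p(\sqrt n)$ and $(\DM^{\top}\DM)^{-1}=O(1/n)$), one gets $\hat\sigma_{Mj}^2\toP\Sigma_{jj}$ by the weak law of large numbers. Therefore $n\,\hat{se}(\AnjOLS)^2\toP\Sigma_{jj}/\CA\in(0,\infty)$, which yields both $\sqrt n\,\hat{se}(\AnjOLS)=O_p(1)$ and $1/(\sqrt n\,\hat{se}(\AnjOLS))=O_p(1)$.

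For the $\beta$-claims in \eqref{eq:seBn} the design $\mathbf{D}_Y=(\vT_n,\X,\vM_n)$ is random, so the core step is a law of large numbers for the Gram matrix. Substituting $\vM_n=(\vT_n,\X)\boldsymbol{\theta}_M^{\ast}+\vE_n$ and using that the rows of $\vE_n$ are i.i.d.\ with mean zero and covariance $\boldsymbol{\Sigma}$, I would show $n^{-1}\vE_n^{\top}\vE_n\toP\boldsymbol{\Sigma}$, $n^{-1}(\vT_n,\X)^{\top}\vE_n\toP\mathbf{0}$, and $n^{-1}(\vT_n,\X)^{\top}(\vT_n,\X)\to\mathbf{Q}$, where $\mathbf{Q}$ is positive definite by Condition \ref{cond:moments} (its Schur complement relative to the $\X$-block is $\|\PXP\vT_n\|_2^2/n\to\CA>0$, and $\SigmaX\succ0$). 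These combine to give $n^{-1}\mathbf{D}_Y^{\top}\mathbf{D}_Y\toP\mathbf{S}$, where $\mathbf{S}$ has blocks $\mathbf{Q}$, $\mathbf{Q}\boldsymbol{\theta}_M^{\ast}$, and $\boldsymbol{\theta}_M^{\ast\top}\mathbf{Q}\boldsymbol{\theta}_M^{\ast}+\boldsymbol{\Sigma}$. Positive definiteness of $\mathbf{S}$ follows from the identity $\vect{z}^{\top}\mathbf{S}\vect{z}=(\vu+\boldsymbol{\theta}_M^{\ast}\vect{v})^{\top}\mathbf{Q}(\vu+\boldsymbol{\theta}_M^{\ast}\vect{v})+\vect{v}^{\top}\boldsymbol{\Sigma}\vect{v}$ for $\vect{z}=(\vu^{\top},\vect{v}^{\top})^{\top}$, which vanishes only if $\vect{v}=\mathbf{0}$ (as $\boldsymbol{\Sigma}\succ0$) and then $\vu=\mathbf{0}$ (as $\mathbf{Q}\succ0$).

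Given this, the remainder is standard. Since $\vepsilon_n$ is independent of $(\vT_n,\X,\vE_n)$ and hence of $\mathbf{D}_Y$, the vector $n^{-1/2}\mathbf{D}_Y^{\top}\vepsilon_n$ has conditional mean zero and conditional covariance $\sigma^2 n^{-1}\mathbf{D}_Y^{\top}\mathbf{D}_Y=O_p(1)$, so $n^{-1/2}\mathbf{D}_Y^{\top}\vepsilon_n=O_p(1)$; then $\sqrt n(\thetaYOLS-\boldsymbol{\theta}_Y^{\ast})=(n^{-1}\mathbf{D}_Y^{\top}\mathbf{D}_Y)^{-1}\,n^{-1/2}\mathbf{D}_Y^{\top}\vepsilon_n=O_p(1)$ by the continuous mapping theorem and invertibility of $\mathbf{S}$, and reading off the $\beta_j$-component gives $\sqrt n(\BnjOLS-\Bnj)=O_p(1)$. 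The same ingredients give $\hat\sigma_Y^2\toP\sigma^2$ and $n\,\hat{se}(\BnjOLS)^2=\hat\sigma_Y^2\big[(n^{-1}\mathbf{D}_Y^{\top}\mathbf{D}_Y)^{-1}\big]_{\beta_j\beta_j}\toP\sigma^2[\mathbf{S}^{-1}]_{\beta_j\beta_j}\in(0,\infty)$, yielding the two remaining bounds in \eqref{eq:seBn}. The main obstacle is precisely this random-design analysis of the outcome regression: establishing the Gram-matrix law of large numbers and, above all, verifying that its limit $\mathbf{S}$ is positive definite, so that $(n^{-1}\mathbf{D}_Y^{\top}\mathbf{D}_Y)^{-1}=O_p(1)$ and the residual-variance and sandwich arguments go through; by contrast the fixed-design mediator regression is routine.
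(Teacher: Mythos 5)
Your argument is correct, and for the exposure-to-mediator part it is essentially the paper's proof: both use the Frisch--Waugh--Lovell identity $\AnjOLS-\Anj=(\vT_n^{\top}\PXP\vT_n)^{-1}\vT_n^{\top}\PXP\vE_{nj}$ together with $\|\PXP\vT_n\|_2^2/n\to\CA$ and $\hat{\Sigma}_{jj}\toP\Sigma_{jj}$ (the paper invokes a CLT for the numerator where you use Chebyshev, which is immaterial since only an $O_p$ rate is needed). For the outcome regression, however, you take a genuinely different route. The paper applies FWL a second time and never touches the full Gram matrix: it writes $\BnOLS-\Bn=(\vE_n^{\top}\PDMP\vE_n)^{-1}\vE_n^{\top}\PDMP\vepsilon_n$ (using $\PDMP\vM_n=\PDMP\vE_n$), then shows $\vE_n^{\top}\PDMP\vE_n/n\toP\boldsymbol{\Sigma}$ and $\vE_n^{\top}\PDMP\vepsilon_n/\sqrt{n}=O_p(1)$ via the decomposition $\mathbf{P}_{\mathbf{D}_M}=\mathbf{P}_{\X}+\mathbf{P}_{\PXP\vT_n}$, and reads off $n\,\hat{se}^2(\BnjOLS)\toP\sigma^2(\boldsymbol{\Sigma}^{-1})_{jj}$. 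You instead establish $n^{-1}\mathbf{D}_Y^{\top}\mathbf{D}_Y\toP\mathbf{S}$ and prove $\mathbf{S}\succ0$ via the quadratic-form identity; that identity is correct, and by block inversion $[\mathbf{S}^{-1}]_{\vM\vM}=\boldsymbol{\Sigma}^{-1}$, so your limits agree with the paper's. The trade-off is that your route requires the convergence $n^{-1}(\vT_n,\X)^{\top}(\vT_n,\X)\to\mathbf{Q}$, which Condition \ref{cond:moments} does not literally supply: it only fixes the limits of $\X^{\top}\X/n$ and of the Schur complement $\|\PXP\vT_n\|_2^2/n$, not of $\vT_n^{\top}\X/n$ and $\vT_n^{\top}\vT_n/n$ individually (these are of course available under i.i.d.\ sampling with second moments, but that is an extra step you should acknowledge, or else you should note that the $\beta_j$-component of $(n^{-1}\mathbf{D}_Y^{\top}\mathbf{D}_Y)^{-1}$ depends only on $(\vE_n^{\top}\PDMP\vE_n/n)^{-1}$, which collapses your argument back to the paper's). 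The paper's double-FWL route buys exactly this economy of hypotheses; your route buys a self-contained, textbook-style treatment of the random-design regression and makes the structure of the limiting covariance explicit.
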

\begin{proof}
    See Section \ref{subsec:proof_lemma_OLS_se}.
\end{proof}

Second, we show that $\AnjhatInit$ satisfies Condition \ref{cond:initial}(i). 
For $\AnjhatInit$ in \eqref{eq:truncated_OLS}, 
we can equivalently rewrite it as $ \AnjhatInit = \AnjOLS \indifun_{\{|\AnjOLS|\geq \TrnjA\}} + \TrnjA \indifun_{\{|\AnjOLS|< \TrnjA\}},$ where $\indifun$ represents the indicator function. 
Therefore,
\begin{align}
 & ~\AnjhatInit - \Anj \notag\\
    =&  ~ (\AnjOLS - \Anj)\indifun_{\{|\AnjOLS|\geq \TrnjA\}} + \{\TrnjA - \AnjOLS + (\AnjOLS - \Anj)\} \indifun_{\{|\AnjOLS|< \TrnjA\}}\notag\\
    =&~ \AnjOLS - \Anj + (\TrnjA - \AnjOLS ) \indifun_{\{|\AnjOLS|< \TrnjA\}}. \label{eq:n_consist1}
\end{align} 
Thus, 
\begin{equation}
\label{eq:n_consist2}
    |\sqrt{n}(\AnjhatInit - \Anj)| 
    \leq  \sqrt{n}|\AnjOLS - \Anj|  + 2 \sqrt{n}\TrnjA 
    = \sqrt{n}|\AnjOLS - \Anj|  + 2 \CTr\sqrt{n} \hat{se}(\AnjOLS).
\end{equation}
By the first and second terms in \eqref{eq:seAn} in Lemma \ref{lemma:OLS_se}, we can conclude that $\eqref{eq:n_consist2} = O_p(1)$. 
We can similarly prove that $\BnjhatInit$ satisfies Condition \ref{cond:initial}.

\subsection{Proof of Lemma \ref{lemma:OLS_se}}
\label{subsec:proof_lemma_OLS_se}

\paragraph{Proof of \eqref{eq:seAn}.}
We first show $\sqrt{n}(\AnjOLS - \Anj) = O_p(1)$. 
We use $\vE_{nj}$ to denote the $j$th column of $\vE_{n}$. 
Note that $ \AnjOLS $ is the OLS for the coefficient of $\vT_n$ under the linear regression  $\vM_{nj}\sim (\vT_n, \mathbf{X}_n)$. 
By Frisch–Waugh–Lovell Theorem, we know
\begin{align} 
     \AnjOLS 
    = 
    &~  (\vT_n^{\top} \PXP \vT_n)^{-1} \vT_n^{\top} \PXP \vM_{nj}
    =\Anj + (\vT_n^{\top} \PXP \vT_n)^{-1} \vT_n^{\top} \PXP \vE_{nj}, \label{eq:AnjOLS}
\end{align}
where the second equation follows by the model  $\mathbf{M}_{nj} = \vT_n \Anj + \mathbf{X}_n 
\boldsymbol{\zeta}_{M,j}+ \vE_{nj}$ in  \eqref{eq:model_n}. 
By the independence between $\vE_n$ and $\DM$, and Condition \ref{cond:moments}, we have that
\begin{equation} \label{eq:conv_E_PXP_T_rootn}
    \frac{\vE_n^{\top}\PXP \vT_n}{\sqrt{n}} \tod \Gsn\left(\vzero_p, \sigma^2_T\SigmaE
    \right),
\end{equation}
where $\tod$ denotes convergence in distribution.
Then we have
\begin{equation*}
    \sqrt{n}(\AnjOLS - \Anj) = \left(\frac{\|\mathbf{P}_{\mathbf{X}_n}^{\perp} \mathbf{T}_n\|_2^2}{n}\right)^{-1}  \frac{\vE_{nj}^{\top}\PXP \vT_n}{\sqrt{n}} = O_p(1).
\end{equation*}

We next prove the second and third terms in \eqref{eq:seAn}. 
Let $\hat{\Sigma}_{jj}$ denote the estimator of $\Sigma_{jj}$ under OLS. 
By the property of OLS \citep{seber2012linear} and Condition \ref{cond:moments}, we have $\hat{\Sigma}_{jj}\to_p \Sigma_{jj}$ as $n\to \infty$ and 
\begin{align} \label{eq:sealphaj}
     \hat{se}^2(\AnjOLS) =  (\vT_n^{\top} \PXP \vT_n)^{-1} \hat{\Sigma}_{jj}.
\end{align} 
Then by Slutsky's theorem and Condition \ref{cond:moments},
\begin{align*}
     n \hat{se}^2(\AnjOLS) \toP 
      \Sigma_{jj}/\sigma^2_T
     \quad \text{and} \quad
    \frac{1}{{n} \hat{se}^2(\AnjOLS)} \toP \frac{\sigma^2_T}{ \Sigma_{jj}}, 
\end{align*} 
where $\Sigma_{jj}/\sigma^2_T>0$ is fixed. 
Therefore, the second and third terms in \eqref{eq:seAn} can be obtained. 

\paragraph{Proof of \eqref{eq:seBn}.}
We first show $\sqrt{n}(\BnjOLS - \Bnj) = O_p(1)$. 
Note $ \BnOLS$ is the regression coefficient vector of $\vM_n$ under the linear regression $\vY_n \sim \vM_n+ \mathbf{D}_M$. 
By Frisch–Waugh–Lovell Theorem, we know
\begin{align} 
     \BnOLS 
    = 
    &~  (\vM_n^{\top} \PDMP \vM_n)^{-1} \vM_n^{\top} \PDMP \vY_n =\Bn + (\vE_n^{\top} \PDMP \vE_n)^{-1} \vE_n^\top \PDMP \vepsilon_n, \label{eq:BnOLS}
\end{align}
where the second equation follows by the model $\vY_n = \vM_n \Bn + \mathbf{D}_M (\eta^*, \boldsymbol{\zeta}_{Y}^{*\top})^{\top} + \vepsilon_n$ by \eqref{eq:model_n}. 
To finish the proof, it suffices to show 
\begin{align}
   \vE_n^{\top} \PDMP \vE_n/n \toP \boldsymbol{\Sigma}\quad \quad \text{ and }\quad \quad \vE_n^\top \PDMP \vepsilon_n/\sqrt{n}\tod \Gsn(\vzero_p, \sigma^2\SigmaE),   \label{eq:conv_E_PDMP_E_n_0}
\end{align}
where the latter is $O_p(1)$. 
By the law of large numbers and the central limit theorem, we know $ \vE_n^\top  \vE_n /n \toP \boldsymbol{\Sigma}$ and $\vE_n^\top\vepsilon_n/\sqrt{n}\tod \Gsn(\vzero_p, \sigma^2\SigmaE)$. 
Thus, by $\PDMP=\mathrm{I}_{n\times n}-\PDM$ and Slutsky's theorem, it remains to show 
\begin{align}
    \frac{\vE_n^\top \PDM \vE_n  }{n}  
=&~ \frac{\vE_n^\top  \X (\X^{\top}\X)^{-1}\X^{\top} \vE_n}{n}  +  \frac{\vE_n^{\top}\mathbf{P}_{\X}^{\perp}\vT_n (\vT_n^{\top} \mathbf{P}_{\X}^{\perp}\vT_n)^{-1} \vT_n^{\top} \mathbf{P}_{\X}^{\perp} \vE_n}{n}   \notag   \\
\frac{\vE_n^\top \PDM \vepsilon_n}{\sqrt{n}}=&~  \frac{\vE_n^\top  \X (\X^{\top}\X)^{-1}\X^{\top} \vepsilon_n}{\sqrt{n}}  +  \frac{\vE_n^{\top}\mathbf{P}_{\X}^{\perp}\vT_n (\vT_n^{\top} \mathbf{P}_{\X}^{\perp}\vT_n)^{-1} \vT_n^{\top}\mathbf{P}_{\X}^{\perp}  \vepsilon_n}{\sqrt{n}}   \notag 
\end{align}
are $o_p(1)$, 
where the equations hold by  $ \PDM=   \mathbf{P}_{ (\vT_n, \mathbf{X}_n) }  = \mathbf{P}_{\X} + \mathbf{P}_{\mathbf{P}_{\X}^{\perp}\vT_n}$.  
In particular, the conclusion follows because by Condition \ref{cond:moments} and Markov's inequality, we have 
$\mathbf{X}_n^{\top}\mathbf{X}_n/n  \to \boldsymbol{\Sigma}_X$,  $\vT_n^{\top}\PXP\vT_n /n\to \sigma_T^2$ ,   $\mathbf{X}_n^{\top}\vE_n=O_p(\sqrt{n})$,
$\mathbf{X}_n^{\top}\vepsilon_n/n=O_p(\sqrt{n})$, $ \vT_n^{\top}\PXP \vE_{n}= O_p(\sqrt{n})$, and $ \vT_n^{\top}\PXP \vepsilon_{n}  = O_p(\sqrt{n})$. 

We next prove the second and third terms in \eqref{eq:seBn}. 
Let $\hat{\sigma}^2$ denote the estimator of $\sigma^2$ under OLS. By the properties of OLS and Condition \ref{cond:moments}, we have  $\hat{\sigma}^2\to_p \sigma^2$ as $n\to \infty$ and 
\begin{align} \label{eq:sebeta}
     \hat{se}^2(\BnjOLS) =  \left((\vE_n^{\top} \PDMP \vE_n)^{-1}\right)_{jj} \hat{\sigma}^2.
\end{align} 
Then by \eqref{eq:conv_E_PDMP_E_n_0} and \eqref{eq:sebeta}, 
\begin{align*}
     n\hat{se}^2(\BnjOLS) \toP \sigma^2{(\SigmaE^{-1})_{jj}} \quad \text{and} \quad
    \frac{1}{{n} \hat{se}^2(\BnjOLS)} \toP \frac{1}{\sigma^2{(\SigmaE^{-1})_{jj}}},
\end{align*}
where  $(\SigmaE^{-1})_{jj}$, the $j$th diagonal element of $\SigmaE^{-1}$, is positive and fixed by Condition \ref{cond:moments}. 
Therefore, the second and third terms in \eqref{eq:seBn} can be obtained. 

\section{Proof of Theorem \ref{thm:oracle_property}}
\label{sec:proof_oracle}

This section is organized as follows. 
Section \ref{subsec:prelemma} defines notation and lemmas to be used in the proof of Theorem \ref{thm:oracle_property}. 
Section \ref{subsec:proof_theorem} provides the main proof of Theorem \ref{thm:oracle_property}. 
Sections \ref{subsec:pflm1} and \ref{subsec:pflm2} prove Lemmas \ref{lemma:third_term} and \ref{lemma:asymptotics} given in Section \ref{subsec:prelemma}. 

\subsection{Notation and Preliminary Lemmas} \label{subsec:prelemma}

\paragraph{Notation.} 
For an index set, we use a superscript $c$ to denote its complement, and we use the index set itself as a subscript to a vector or matrix to represent its corresponding subvector or submatrix. For example, $\AS[c]$ represents the complement of $\AS\subseteq \{1, \ldots, p\}$, $\boldsymbol{\mathbf{\alpha}}_{\AS}^{\ast}$ 
represents the subvector of $\An$ consisting of the elements with indices in $\AS$, and $\SigmaE_{\AS}$ represents the submatrix of $\SigmaE$ consisting of the elements with both row and column indices in $\AS$. 
Additionally, we define 
\begin{equation*}
    \ASA = \left\{j: \Anj \neq 0,  \ \  j=1,\ldots, p \right\} \quad\text{and}\quad
    \ASB = \left\{j: \Bnj \neq 0,  \ \  j=1,\ldots, p \right\}
\end{equation*}
as the index sets of mediators with nonzero exposure-to-mediator effects and with nonzero mediator-to-outcome effects.

\begin{lemma}
\label{lemma:third_term}
    Under the conditions in Theorem \ref{thm:oracle_property}, as the sample size $n\to \infty$, 
    \begin{equation}
    \label{eq:convergence_penal}
    \frac{\lambdaA}{\sqrt{n}\wA} \toP 
    \begin{cases}
        0, &\text{for $j\in\ASA$}\\
        \infty, &\text{for $j\in\ASA[c]$}
    \end{cases}
    \quad \text{and} \quad
\frac{\lambdaB}{\sqrt{n}\wB} \toP
    \begin{cases}
        0, &\text{for $j\in\ASB$}\\
        \infty, &\text{for $j\in\ASB[c]$}
    \end{cases}.
    \end{equation}
\end{lemma}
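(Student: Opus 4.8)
The statement decomposes into two symmetric halves, one for $\wA$ and one for $\wB$, and each half splits further into the active and inactive cases; I will lay out the argument for $\wA$, the case of $\wB$ being identical after replacing $(\gAOne,\gATwo,\lambdaA,\Anj)$ by $(\gBOne,\gBTwo,\lambdaB,\Bnj)$ and swapping the role of the single coefficient $\AnjhatInit$ with $\BnjhatInit$. The only inputs are Condition \ref{cond:initial}(i), which yields $\AnjhatInit\toP\Anj$ and $\BnjhatInit\toP\Bnj$ (so both are $O_p(1)$, and when the limit is nonzero they stay bounded away from $0$ with probability tending to one), together with the rate constraints $n^{1/2-\gATwo}\ll\lambdaA\ll n^{1/2}$ and the exponent constraint $\gAOne>2\gATwo$ from \eqref{eq:wweights}.

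\textbf{Active case $j\in\ASA$.} Here $\Anj\neq0$, so by the continuous mapping theorem $|\AnjhatInit|^{2\gATwo}\toP|\Anj|^{2\gATwo}>0$, and hence $|\AnjhatInit|^{-2\gATwo}=O_p(1)$. Since both summands in $\wA=|\AnjhatInit\BnjhatInit|^{\gAOne}+|\AnjhatInit|^{2\gATwo}$ are nonnegative, we have $\wA\ge|\AnjhatInit|^{2\gATwo}$ and therefore $1/\wA\le|\AnjhatInit|^{-2\gATwo}=O_p(1)$. Writing $\lambdaA/(\sqrt n\,\wA)=(\lambdaA/\sqrt n)\cdot(1/\wA)$, the first factor is a deterministic $o(1)$ sequence by $\lambdaA\ll n^{1/2}$ and the second is $O_p(1)$, so the product is $o_p(1)$, which is the claimed convergence to $0$.

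\textbf{Inactive case $j\in\ASA[c]$.} Now $\Anj=0$, so Condition \ref{cond:initial}(i) gives $\AnjhatInit=O_p(n^{-1/2})$; consequently $|\AnjhatInit|^{2\gATwo}=O_p(n^{-\gATwo})$ and $|\AnjhatInit\BnjhatInit|^{\gAOne}=|\AnjhatInit|^{\gAOne}|\BnjhatInit|^{\gAOne}=O_p(n^{-\gAOne/2})$, using $|\BnjhatInit|=O_p(1)$. Because $\gAOne>2\gATwo$ we have $n^{-\gAOne/2}=o(n^{-\gATwo})$, so the single-coefficient term dominates the product term in stochastic order and $\wA=O_p(n^{-\gATwo})$. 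Hence for any $\varepsilon>0$ there is $C_\varepsilon$ with $\Pr(\wA\le C_\varepsilon n^{-\gATwo})>1-\varepsilon$ for all large $n$, and on that event $\lambdaA/(\sqrt n\,\wA)\ge\lambdaA/(C_\varepsilon n^{1/2-\gATwo})$; the right-hand side diverges since $\lambdaA\gg n^{1/2-\gATwo}$, so it eventually exceeds any prescribed level, and as $\varepsilon$ is arbitrary this gives $\lambdaA/(\sqrt n\,\wA)\toP\infty$.

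\textbf{Main obstacle.} There is no substantive difficulty; the only points requiring care are, first, the bookkeeping that converts the one-sided bound $\wA=O_p(n^{-\gATwo})$ in the inactive case into a genuine divergence in probability of $\lambdaA/(\sqrt n\,\wA)$, and second, the observation—used in both cases—that the product term $|\AnjhatInit\BnjhatInit|^{\gAOne}$ is always of no larger stochastic order than the single-coefficient term $|\AnjhatInit|^{2\gATwo}$ thanks to $\gAOne>2\gATwo$, so that only the distinction between $\AnjhatInit$ having a nonzero limit ($j\in\ASA$) and a vanishing limit ($j\in\ASA[c]$) drives the dichotomy.
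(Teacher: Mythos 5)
Your proposal is correct and follows essentially the same route as the paper's proof: in the inactive case you show $\wA = O_p(n^{-\gATwo})$ using $\sqrt{n}\AnjhatInit = O_p(1)$ and $\gAOne > 2\gATwo$ (the paper phrases this identically as $n^{\gATwo}\wA = O_p(1)$), and in the active case you combine $\lambdaA \ll n^{1/2}$ with the fact that $\wA$ stays bounded away from zero in probability. The only cosmetic difference is that in the active case the paper applies Slutsky's theorem to the full weight $\wA$ while you lower-bound it by its second summand; both are valid and the bookkeeping for the divergence in the inactive case is handled correctly.
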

\begin{proof}
    See Section \ref{subsec:pflm1}.
\end{proof}

\begin{lemma}
\label{lemma:asymptotics}
    Under the conditions in Theorem \ref{thm:oracle_property}, the \newm{} estimators of $\An$ satisfy:
    \begin{equation}
    \label{eq:asym_mediator}
        \sqrt{n}\left(\Anhat[,\ASA] - \AnASA\right) \tod \Gsn\left(\vzero, \sigma_T^{-2} \SigmaE_{\ASA}\right),
        \quad\text{and}\quad 
        \sqrt{n}\Anhat[{,\ASA[c]}] \toP \vzero,
    \end{equation}
    and the \newm{} estimators of $\Bn$ satisfy:
    \begin{equation}
    \label{eq:asym_outcome}
        \sqrt{n}\left(\Bnhat[,\ASB] - \BnASB\right) \tod \Gsn\left(\vzero, \sigma^2 \SigmaE_{\ASB}^{-1}\right)
        \quad\text{and}\quad 
        \sqrt{n}\Bnhat[{,\ASB[c]}]  \toP \vzero.
    \end{equation} 
\end{lemma}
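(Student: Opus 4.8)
The plan is to treat the exposure-to-mediator and mediator-to-outcome estimators separately, exploiting the fact that under the quadratic losses \eqref{eq:quadloss_theory} the optimizations \eqref{eq:fit_mediator}--\eqref{eq:fit_outcome} decouple into penalized regressions for $\boldsymbol{\alpha}$ and $\boldsymbol{\beta}$ after projecting out the unpenalized covariates (Proposition \ref{proposition:solution} and Remark \ref{rm:alpha_sol}). I would follow the classical Knight--Fu / Zou argument for adaptive LASSO, recentering at the truth and rescaling by $\sqrt n$. For the $\boldsymbol{\alpha}$ part, write $\boldsymbol{\alpha} = \An + \vect{u}/\sqrt{n}$ and define the objective $V_n^{\alpha}(\vect{u})$ as the difference of the penalized least-squares criterion at $\An + \vect{u}/\sqrt n$ and at $\An$. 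Expanding the quadratic term using \eqref{eq:AnjOLS}-type identities gives a term converging to $\vect{u}^\top (\sigma_T^2 \mathrm{I}_p)\vect{u}$ minus a linear term $2\vect{u}^\top W$ with $W \sim \Gsn(\vzero, \sigma_T^2 \boldsymbol{\Sigma})$ by \eqref{eq:conv_E_PXP_T_rootn} (here I use that $\PXP\vT_n$ acts componentwise and $\|\PXP\vT_n\|_2^2/n \to \sigma_T^2$). The penalty contribution is $\lambdaA \sum_j (|\Anj + u_j/\sqrt n| - |\Anj|)/\wA = \sum_j (\lambdaA/(\sqrt n \wA)) \cdot \sqrt n(|\Anj + u_j/\sqrt n| - |\Anj|)$; by Lemma \ref{lemma:third_term}, for $j \in \ASA$ the coefficient tends to $0$ and the increment is bounded, so that term vanishes, while for $j \in \ASA[c]$ we have $\Anj = 0$ so $\sqrt n(|u_j/\sqrt n| - 0) = |u_j|$ is bounded and the coefficient $\lambdaA/(\sqrt n\wA)\to\infty$, forcing $u_j = 0$ in the limit (otherwise the objective blows up). Thus $V_n^{\alpha}(\vect{u}) \tod V^{\alpha}(\vect{u})$ where $V^{\alpha}$ is finite only on $\{u_j = 0, j\in\ASA[c]\}$ and is a strictly convex quadratic there; epi-convergence of convex functions then transfers the argmin, giving $\sqrt n(\Anhat - \An) \tod \argmin V^{\alpha}$, which has the stated block structure: the $\ASA$-block is $\Gsn(\vzero, \sigma_T^{-2}\boldsymbol{\Sigma}_{\ASA})$ and the $\ASA[c]$-block is degenerate at $\vzero$. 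The second displayed claim $\sqrt n \Anhat[{,\ASA[c]}]\toP\vzero$ follows from convergence in distribution to a point mass.

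For the $\boldsymbol{\beta}$ part the argument is parallel but one works in the regression $\vY_n \sim \vM_n + \DM$, projecting out $\DM$. Here the relevant design cross-product is $\vE_n^\top\PDMP\vE_n/n \toP \boldsymbol{\Sigma}$ and the noise term is $\vE_n^\top\PDMP\vepsilon_n/\sqrt n \tod \Gsn(\vzero,\sigma^2\boldsymbol{\Sigma})$, both established inside the proof of Lemma \ref{lemma:OLS_se} (see \eqref{eq:BnOLS}--\eqref{eq:conv_E_PDMP_E_n_0}). The same reparametrization $\boldsymbol{\beta} = \Bn + \vect{u}/\sqrt n$, expansion, and application of Lemma \ref{lemma:third_term} to $\lambdaB/(\sqrt n\wB)$ yields a limiting objective $V^{\beta}(\vect{u}) = \vect{u}^\top\boldsymbol{\Sigma}\vect{u} - 2\vect{u}^\top W_\beta$ on $\{u_j = 0, j\in\ASB[c]\}$ with $W_\beta\sim\Gsn(\vzero,\sigma^2\boldsymbol{\Sigma})$; minimizing over the $\ASB$-block gives $\Gsn(\vzero,\sigma^2\boldsymbol{\Sigma}_{\ASB}^{-1})$, and the $\ASB[c]$-block collapses to $\vzero$. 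One subtlety: the response $\vY_n$ is regressed on the \emph{observed} mediators $\vM_n$, not on $\vE_n$, but since $\PDMP$ annihilates the columns of $\vT_n$ and $\X$, and $\vM_n = \vT_n\An[\top] + \X\boldsymbol{\zeta}_M^* + \vE_n$, we get $\PDMP\vM_n = \PDMP\vE_n$, so the effective design is exactly $\PDMP\vE_n$ as used above; this is the identity already exploited in \eqref{eq:BnOLS}.

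The main obstacle is making the epi-convergence / argmin-continuity step rigorous in the presence of the diverging penalty coefficients on the inactive coordinates: one must argue that the limiting functional, though taking the value $+\infty$ off the subspace $\{u_j = 0 : j \notin \ASA\}$ (resp.\ $\ASB$), is still the epi-limit of the $V_n$'s, so that Theorem 2.7 of Knight and Fu (or the standard convexity lemma, e.g.\ Pollard) applies and argmins converge. This requires a uniform-on-compacts handle showing that for large $n$ the minimizer cannot have a diverging inactive component — which follows from the lower bound $\lambdaA/(\sqrt n\wA)\cdot|u_j|\to\infty$ whenever $u_j\neq 0$ and $j\in\ASA[c]$ — combined with tightness of the active-block minimizer, inherited from the nondegenerate quadratic and the $O_p(1)$ Gaussian score. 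The remaining steps are routine: the $O_p(1)$ rates for the initial estimates (Condition \ref{cond:initial}, verified in Proposition \ref{lemma:initial}) feed into Lemma \ref{lemma:third_term}, the moment conditions (Condition \ref{cond:moments}) give the design-matrix limits and CLTs, and Slutsky's theorem assembles the pieces.
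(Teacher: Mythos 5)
Your proposal is correct and follows essentially the same route as the paper's proof: the Knight--Fu reparametrization $\vu = \sqrt{n}(\cdot - \cdot^\ast)$ applied separately to the two penalized regressions after projecting out unpenalized covariates, the use of Lemma \ref{lemma:third_term} to kill the penalty on active coordinates and force inactive ones to zero, the identity $\PDMP\vM_n = \PDMP\vE_n$, and epi-convergence of convex objectives (the paper cites \citet{geyer1994asymptotics} where you cite Knight--Fu/Pollard) to transfer the argmin. The limiting distributions you compute match the paper's exactly.
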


\begin{proof}
    See Section \ref{subsec:pflm2}.
\end{proof}

\subsection{Proof of Theorem \ref{thm:oracle_property}} 
\label{subsec:proof_theorem}

First, to prove $\Fnorm{\Anhat-\An}^2 + \Fnorm{\Bnhat - \Bn}^2 \toP 0$, 
note that Lemma \ref{lemma:asymptotics} implies $\Anjhat - \Anj\toP 0$ and $\Bnjhat - \Bnj\toP 0$ for $j = 1, \ldots, p$. Then the conclusion follows from Slutsky's theorem. 

We next prove \eqref{eq:selection_consistency}. As
$\{\AShat_n = \AS\}^c=   \{\AShat_n \supseteq \AS\}^c\cup  \{\AShat_n \subseteq \AS\}^c,$ by Boole's inequality, to prove \eqref{eq:selection_consistency}, it suffices to show
\begin{align}
   \lim_{n\to \infty} \Pr( \{\AShat_n \supseteq \AS\}^c) = 0, \label{eq:selection_concsistency_included}\\
   \lim_{n\to \infty} \Pr( \{\AShat_n \subseteq \AS\}^c ) = 0. \label{eq:target3_1}
\end{align} 

\paragraph{(i) Proof of \eqref{eq:selection_concsistency_included}.} 
Note 
\begin{align*}
    \Pr(\{\AShat_n \supseteq \AS\}^c)=&~ \Pr( \cup_{j\in \AS}\{ j \not \in \AShat_n \} )  
    \leqslant \sum_{j \in \AS} \Pr (j \not \in \AShat_n) \\
   \leqslant &~ \sum_{j \in \AS}\{\Pr(\Anjhat=0)+\Pr(\Bnjhat=0) \}, 
\end{align*}
where the last inequality follows as $j\not\in \AShat_n$ is equivalent to 
$\{\Anjhat=0\}\cup \{\Bnjhat=0\}$ based on our construction. 
Note
\begin{align}
    \Pr(\Bnjhat=0) = \Pr\left(\Bnj = - (\Bnjhat - \Bnj) \right) 
    \leq \Pr\left(\sqrt{n}\left|\Bnj\right| = \sqrt{n}\left|\Bnjhat - \Bnj \right| \right). \label{eq:pr_j_ASBc}  
\end{align}
By the first part of \eqref{eq:asym_outcome}, $\eqref{eq:pr_j_ASBc} \to 0$ as $n \to \infty$ for any $j \in\AS$.  Similarly, we can use \eqref{eq:asym_mediator} to show that $\lim_{n\to\infty}\Pr(\Anjhat =0) = 0$ for all $j\in \AS$. In summary,  $\lim_{n\to \infty}  \Pr(\{\AShat_n \supseteq \AS\}^c) =0$, and \eqref{eq:selection_concsistency_included} is proved.

\paragraph{(ii) Proof of \eqref{eq:target3_1}.}
Since $  \{\AShat_n \subseteq \AS\}^c=\{\text{There exists } j\not \in \AS \text{ such that } j\in \AShat_n\} = \cup_{j \not\in \AS} \{j\in \AShat_n\},$ and $\{j\in  \AShat_n\}=\{\Anjhat \neq 0 \}\cap \{\Bnjhat\neq 0\}$ by our construction, 
we have 
\begin{align}
    \Pr( \{\AShat_n \subseteq \AS\}^c )\leqslant &~ \sum_{j \not\in \AS }\Pr(\{\Anjhat \neq 0 \}\cap \{\Bnjhat\neq 0\} )\notag \\
    = &~  \sum_{j \in (\ASA\cap \ASB)^c} \Pr(\{\Anjhat \neq 0 \}\cap \{\Bnjhat\neq 0\} ) \quad \quad \text{(by } \AS=\ASA\cap \ASB) \notag\\
    \leqslant &~ \sum_{j \in \ASA[c]}\Pr(\Anjhat \neq 0) + \sum_{j \in \ASB[c] } \Pr(\Bnjhat\neq 0). \label{eq:sumprobtwo}
\end{align}

First, $\Anjhat \neq 0 $, by Remark \ref{rm:alpha_sol}, is equivalent to 
\begin{equation*}
    \frac{|\vM_{nj}^\mytrans \mathbf{P}_{\vect{X}_n}^{\perp}\vT_n|} {\|\mathbf{P}_{\vect{X}_n}^{\perp}\vT_n\|_2^2} >  \frac{\lambdaA}{2 \wA \|\mathbf{P}_{\vect{X}_n}^{\perp}\vT_n\|_2^2}.
\end{equation*}
Multiplying both sides of the above inequality by $\|\PXP \vT_n\|_2^2/\sqrt{n}$ and substituting the model of $\vM_{nj}$ in  \eqref{eq:model_n}, we obtain
\begin{equation}
\label{eq:ineq_ASAc}
    \left|\frac{\|\PXP \vT_n\|_2^2}{\sqrt{n}} \Anj + \frac{\vE_{nj}\PXP \vT_n}{\sqrt{n}}
    \right|> \frac{1}{2} \frac{\lambdaA}{\sqrt{n} \wA}.
\end{equation}
For $j \in \ASA[c]$, $\Anj=0$, thus the left hand side of \eqref{eq:ineq_ASAc} is $O_p(1)$ by \eqref{eq:conv_E_PXP_T_rootn}, while the right hand side of \eqref{eq:ineq_ASAc} $\toP \infty$ as $n\to\infty$ by \eqref{eq:convergence_penal}.

Second, by the Karush–Kuhn–Tucker (KKT) condition and \eqref{eq:opt}, we know the estimate $\Bnhat$ satisfies
\begin{align*}
    -2\DM^\top \PDMP(\vY_n-\vM_n\Bnhat) + \lambdaB \sum_{j=1}^p \frac 1 {\wB} \left.\frac{\partial |\beta|}{\partial\beta}\right|_{\beta = \Bnjhat} = 0.
\end{align*}
For $\Bnjhat\neq 0$, it reduces to 
\begin{equation*}
    2\vM_{nj}^\mytrans \PDMP (\vY_n - \vM_n \Bnhat)= \frac{\lambdaB}{\wB}\sgn(\Bnjhat).
\end{equation*}
Dividing both sides of the equation by $2\sqrt{n}$ and substituting $\vY_n$ and $\vM_n$ using \eqref{eq:model_n}, we obtain
\begin{equation}
\label{eq:KKT}
    \frac{\vE_{nj}^\mytrans \PDMP\vect{\epsilon}_n}{\sqrt{n}} + \frac{\vE_{nj}^\mytrans \PDMP\vE_n}{n} \sqrt{n} (\Bn - \Bnhat) = \frac{1}{2}\frac{\lambdaB}{\sqrt{n}\wB}\sgn(\Bnjhat).
\end{equation}
For any $j \in\ASB[c]$,  the left hand side of \eqref{eq:KKT} is $O_p(1)$ by \eqref{eq:conv_E_PDMP_E_n_0} and the first part of \eqref{eq:asym_outcome}, while the absolute value of the right hand side of \eqref{eq:KKT} $\toP \infty$ as $n\to\infty$ by \eqref{eq:convergence_penal}. Therefore, for any $j \in\ASB[c]$,
\begin{align*}
    \Pr\left(j \in \ASBhat\right)
    & \leq \Pr \left(\left| \frac{\vE_{nj}^\mytrans \PDMP\vect{\epsilon}_n}{\sqrt{n}} + \frac{\vE_{nj}^\mytrans \PDMP\vE_n}{n} \sqrt{n} (\Bn - \Bnhat)\right| = \frac{1}{2}\frac{\lambdaB}{\sqrt{n}\wB}
    \right) \\
     &\to 0 \quad\text{as}\quad n \to \infty.
\end{align*}

In summary, we obtain $ \eqref{eq:sumprobtwo} \to 0$ as $n\to \infty$, which finishes the proof. 

\subsection{Proof of Lemma \ref{lemma:third_term}}
\label{subsec:pflm1}

First, to prove $\frac{\lambdaA}{\sqrt{n}\wA} \toP \infty$ for $j \in \ASA[c]$, since $\lambdaA \gg n^{1/2-\gATwo}$, it suffices to show $n^{\gATwo}\wA = O_p(1)$ for $j\in\ASA[c]$. 
Note that for $j\in\ASA[c]$, $\sqrt{n}\AnjhatInit = \sqrt{n}(\AnjhatInit - \Anj) = O_p(1)$ by Condition \ref{cond:initial}(i), then
\begin{align*}
    n^{\gATwo}\wA 
    &= n^{\frac{2\gATwo - \gAOne}{2}} |\sqrt{n}\AnjhatInit|^{\gAOne}|\BnjhatInit|^{\gAOne} + |\sqrt{n}\AnjhatInit|^{2\gATwo} = O_p(1),
\end{align*}
which follows by $\gAOne > 2\gATwo$.

Second, we show $\frac{\lambdaA}{\sqrt{n}\wA} \toP 0$ for $j \in \ASA$. Following Condition \ref{cond:initial}(i) and Slutsky's theorem, as $n\to \infty$,
\begin{equation*}
    \wA = |\AnjhatInit\BnjhatInit|^{\gAOne} + |\AnjhatInit|^{2\gATwo} \toP |\Anj\Bnj|^{\gAOne} + |\Anj|^{2\gATwo} > 0\quad\text{for $j\in\ASA$}.
\end{equation*}
Since $\lambdaA \ll n^{1/2}$, $\frac{\lambdaA}{\sqrt{n}\wA} \toP 0$ for $j \in \ASA$.

Following similar arguments, we can prove the conclusions for $\frac{\lambdaB}{\sqrt{n}\wB}$.

\subsection{Proof of Lemma \ref{lemma:asymptotics}}
\label{subsec:pflm2}

\paragraph{(i) Proof of \eqref{eq:asym_outcome}.}
To prove \eqref{eq:asym_outcome}, we use the formula for penalized coefficients in Proposition \ref{proposition:solution}.
In particular, under the loss function in \eqref{eq:quadloss_theory}, 
we have $\boldsymbol{\theta}_{YP} = \vbeta$,  $\bar{\mathcal{P}}_Y(\vbeta) =  \lambdaB \sum_{j = 1}^p {|\beta_{j}|}/{\wB}$, and 
\begin{align*}
    \ell_{Y,1}(\vbeta) =&~ \Fnorm{\PDMP(\vY_n - \vM_{n} \vbeta)}^2\\
    =&~ \Fnorm{ \PDMP\{\vM_n(\vbeta^*-\vbeta)+ \vepsilon_n\} }^2 =  \Fnorm{ \PDMP\vE_n(\vbeta^*-\vbeta)+ \PDMP\vepsilon_n }^2,
\end{align*}
where the second and third equations follow by the model of $\mathbf{Y}_n$ and $\mathbf{M}_n$ in \eqref{eq:model_n}. 

Define $\vunhatB = \sqrt{n}(\Bnhat - \Bn)$. 
By \eqref{eq:thetahatp_pf}, we have $\vunhatB = \argmin_{\vu \in \reals^p} \VnB \left(\vu\right)$, where we define 
\begin{align}
    \VnB(\vu) 
    \equiv &~ \left\{\ell_{Y,1}(\Bn + \frac{\vu}{\sqrt{n}}) + 
    \bar{\mathcal{P}}_Y(\Bn + \frac{\vu}{\sqrt{n}})\right\} - \{\ell_{Y,1}(\Bn) + \bar{\mathcal{P}}_Y(\Bn)\}\notag\\
    =&~ \vu^\mytrans\left(\frac {\vE_n^{\top} \PDMP \vE_n}{n}\right) \vu
        -2\vu^{\top} 
        \frac{\vE_n^{\top}\PDMP\vepsilon_n}{\sqrt{n}}\notag\\
    &+  \sum_{j=1}^p \frac{\lambdaB}{\sqrt{n}\wB} \sqrt{n}\left(\left|\Bnj + \frac{u_j}{\sqrt{n}}\right| - \left|\Bnj\right|\right), \label{eq:VnBvu}
\end{align}
which follows by plugging the formulae of $\ell_{Y,1}(\cdot)$ and $\bar{\mathcal{P}}_Y(\cdot)$. 
To obtain the asymptotics of $\vunhatB$, we derive the limit of $\VnB(\cdot)$.

First, in  $\VnB(\vu)$, $ \frac{\lambdaB}{\sqrt{n}\wB} \sqrt{n}\left(\left|\Bnj + \frac{u_j}{\sqrt{n}}\right| - \left|\Bnj\right|\right)=0$ if $u_j=0$. 
When $u_j \neq0$, its convergence can be discussed in two cases. For any $j \in \ASB$ and fixed $u_j\neq 0$, by \eqref{eq:convergence_penal}, 
\begin{equation}
\label{eq:Vn_third_A}
    \left|\frac{\lambdaB}{\sqrt{n}\wB} \sqrt{n}\left(\left|\Bnj + \frac{u_j}{\sqrt{n}}\right| - \left|\Bnj\right|\right)\right| \leq \frac{\lambdaB}{\sqrt{n}\wB} |u_j| \toP 0 \quad \text{as} \quad n\to\infty.
\end{equation}
For any $j \in \ASB[c]$ and fixed $u_j\neq 0$,
\begin{equation}
\label{eq:Vn_third_Ac}
    \frac{\lambdaB}{\sqrt{n}\wB} \sqrt{n}\left(\left|\Bnj + \frac{u_j}{\sqrt{n}}\right| - \left|\Bnj\right|\right) = \frac{\lambdaB}{\sqrt{n}\wB} |u_j|\toP \infty \quad \text{as} \quad n\to\infty.
\end{equation}
By \eqref{eq:conv_E_PDMP_E_n_0} and Slutsky's theorem, we have that for any fixed $\vu$, as $n\to\infty$,
\begin{align} 
 \VnB(\vu) \tod \VB(\vect{u}) \equiv &~ 
    \begin{cases}
        \vu^\mytrans \SigmaE\vu - 2 \vu^\mytrans \vectrv{W} &  \hspace{0.5em}\text{if $u_j = 0 \text{ for all } j \in \ASB[c]$,} \\
        \infty & \hspace{0.5em} \text{otherwise,}
    \end{cases} \label{eq:VnBLimit}\\
    =&~  \begin{cases}
        \vu_{\ASB}^\mytrans \SigmaE_{\ASB}\vu_{\ASB} - 2 \vu_{\ASB}^\mytrans \vectrv{W}_{\ASB}  & \text{if $u_j = 0 \text{ for all } j \in \ASB[c]$,} \\
        \infty & \text{otherwise,}
    \end{cases}  \notag
\end{align} 
where $\vectrv{W} \sim \Gsn\left(\vzero, \sigma^2 \SigmaE\right)$,  
and $\vectrv{W}_{\ASB} $ denotes  its subvector satisfying $\vectrv{W}_{\ASB} \sim \Gsn(\vzero, \sigma^2 \SigmaE_{\ASB})$. 
Since $\SigmaE_{\ASB}$ is positive definite, $\VB(\vect{u})$ has a unique minimizer. 
Following the epi-convergence results in \citet{geyer1994asymptotics}, we have
\begin{align}\label{eq:asym_u_outcome}
   \vunhatB = \argmin_{\vu} \VnB(\vu) \tod \argmin_{\vu} \VB(\vu) \ \quad \Rightarrow \quad \ \begin{cases}
   \hat{\vu}_{n, \ASB}^{\beta} \tod \SigmaE_{\ASB}^{-1} \vectrv{W}_{\ASB},\\
   \hat{\vu}_{n,\ASB[c]}^{\beta} \tod \vzero.
   \end{cases}
\end{align}
Therefore, \eqref{eq:asym_outcome} is proved.

\paragraph{(ii) Proof of \eqref{eq:asym_mediator}.}
The proof is similar to the proof of \eqref{eq:asym_outcome}. 
To prove \eqref{eq:asym_mediator}, we use the simplified formula for penalized coefficients in Proposition \ref{proposition:solution}.
Specifically, under the loss function in \eqref{eq:quadloss} with $\mathcal{P}_M(\cdot) = 0$, we have $\boldsymbol{\theta}_{MP} = \valpha$,   $\bar{\mathcal{P}}_M(\valpha) =  \lambdaA \sum_{j = 1}^p {|\alpha_{j}|}/{\wA}$, and
\begin{align*}
    \ell_{M,1}(\valpha) = \Fnorm{\PXP(\vM_n - \vT_{n} \valpha^\top)}^2 = \Fnorm{\PXP \vE_n +\PXP\vT_n(\valpha^{*\top} - \valpha^{\top}) }^2,
\end{align*}
where the second equation follows by the model of $\vM_n$ in \eqref{eq:model_n}. 

Define $\vunhatA = \sqrt{n}(\Anhat - \An)$. 
By \eqref{eq:thetahatp_pf}, we have  $\vunhatA = \argmin_{\vu \in \reals^p} \VnA \left(\vu\right)$, where we define
\begin{align}
    \VnA(\vu) 
    \equiv & \left(\ell_{M,1}(\An + \frac{\vu}{\sqrt{n}}) + 
    \bar{\mathcal{P}}_M(\An + \frac{\vu}{\sqrt{n}})\right) - (\ell_{M,1}(\An) + \bar{\mathcal{P}}_M(\An))\notag\\
    =& \frac{ \|\PXP \vT_n\|_2^2}{n}\vu^{\top} \vu 
        - 2 \vu^{\top} \frac{\vE_n^{\top}\PXP \vT_n}{\sqrt{n}}
    + \sum_{j=1}^p  \frac{\lambdaA}{\sqrt{n}\wA} \sqrt{n}\left(\left|\Anj + \frac{u_j}{\sqrt{n}}\right| - \left|\Anj\right|\right),\label{eq:VnAvu}
\end{align}
which follows by plugging the formulae of $\ell_{M,1}(\cdot)$ and $\bar{\mathcal{P}}_M(\cdot)$. 
We derive the limit of $\VnA(\cdot)$ in \eqref{eq:VnAvu} to obtain the asymptotics in \eqref{eq:asym_mediator}.

For the third term in \eqref{eq:VnAvu}, 
following analogous reasoning to the discussion above, we can show that for any fixed $u_j\neq 0$, as $n\to\infty$,
\begin{align*}
     \frac{\lambdaA}{\sqrt{n}\wA} \sqrt{n}\left(\left|\Anj + \frac{u_j}{\sqrt{n}}\right| - \left|\Anj\right|\right)\toP
     \begin{cases}
         0 \quad &\text{if}\quad j\in\AS,\\
         \infty \quad &\text{if}\quad j\in\AS[c].
     \end{cases}
\end{align*}
Combining with Condition \ref{cond:moments} and \eqref{eq:conv_E_PXP_T_rootn}, by Slutsky's theorem, we have that for any fixed $\vu$, as $n\to\infty$,
\begin{align}
     \VnA(\vu) \tod  \VA(\vu) \equiv &~ 
     \begin{cases}
         \sigma_T^2 \vu^\top\vu - 2 \vu^\top \vectrv{W}' & \hspace{3.2em}\text{if $u_j = 0 \text{ for all } j \in \ASA[c]$,} \\
        \infty & \hspace{3.2em} \text{otherwise,}
     \end{cases}\label{eq:VnALimit}\\
      =&~  \begin{cases}
        \sigma_T^2 \vu_{\ASA}^\mytrans\vu_{\ASA} - 2 \vu_{\ASA}^\mytrans \vectrv{W}_{\ASA}' & \ \text{if $u_j = 0 \text{ for all } j \in \ASA[c]$,} \\
        \infty & \ \text{otherwise,}
    \end{cases}\notag
\end{align}
where $\vectrv{W}' \sim \Gsn\left(\vzero, \sigma_T^2 \SigmaE\right)$, and $\vectrv{W}'_{\ASA}$ denotes its subvector satisfying $\vectrv{W}'_{\ASA} \sim \Gsn\left(\vzero, \sigma_T^2 \SigmaE_{\ASA}\right)$. 
Since $\SigmaE_{\ASA}$ is positive definite, $\VA(\vect{u})$ has a unique minimizer. 
Following the epi-convergence results in \citet{geyer1994asymptotics}, we have
\begin{align}\label{eq:asym_u_mediator}
   \vunhatA = \argmin_{\vu} \VnA(\vu) \tod \argmin_{\vu} \VA(\vu) \ \quad \Rightarrow \quad \ \begin{cases}
        \hat{\vu}_{n, \ASA}^{\alpha} \tod \sigma_T^{-2} \vectrv{W}'_{\ASA},\\
        \hat{\vu}_{n,\ASA[c]}^{\alpha} \tod \vzero.
   \end{cases}
\end{align}
Therefore, \eqref{eq:asym_mediator} is proved. 

\section{Proof of Proposition prop:penaltyscale}\label{sec:pf_penaltyscale}

We have that
\begin{equation}
    \frac{\hat{w}_{nj, \alpha}}{\hat{w}_{\mathrm{AL}, nj, \alpha}} = 1 + |\AnjhatInit|^{\gAOne - 2 \gATwo} |\BnjhatInit|^{2 \gAOne} 
    \quad \text{and} \quad 
    \frac{\hat{w}_{nj, \beta}}{\hat{w}_{\mathrm{AL}, nj, \beta}} = 1 + |\AnjhatInit|^{2 \gBOne} |\BnjhatInit|^{\gBOne - 2 \gBTwo}.
\end{equation}
When $0 < 2 \gATwo < \gAOne$ and $0 < 2 \gBTwo < \gBOne$, \eqref{eq:weight_ratio} follows from Condition 1 and Slutsky's Theorem.

\section{Pathway LASSO}\label{sec:PathwayLASSO}

In this section, we present supplementary simulation studies on Pathway LASSO. 

\emph{Setup.} We adopt a data generation mechanism that is a special case of that in Section \ref{sec:simulations} with a focus on Case (I) of $\vectrv{E}$ with $\rho = 0$. 
We set $n=2000$ and $\BT^\ast = 1$. 
For the pairwise coefficients $\{(\alpha_{j}^*,\beta_j^*): j=1,\ldots, p\}$, we still consider the six groups of patterns as defined in \eqref{eq:coeffgroups} and Table \ref{table:sim_setting} with $\bSim = 2^{-1.5}$. Under each setting, we simulate $100$ independent datasets.

\begin{figure}
    \centering
    \includegraphics[width=\textwidth]{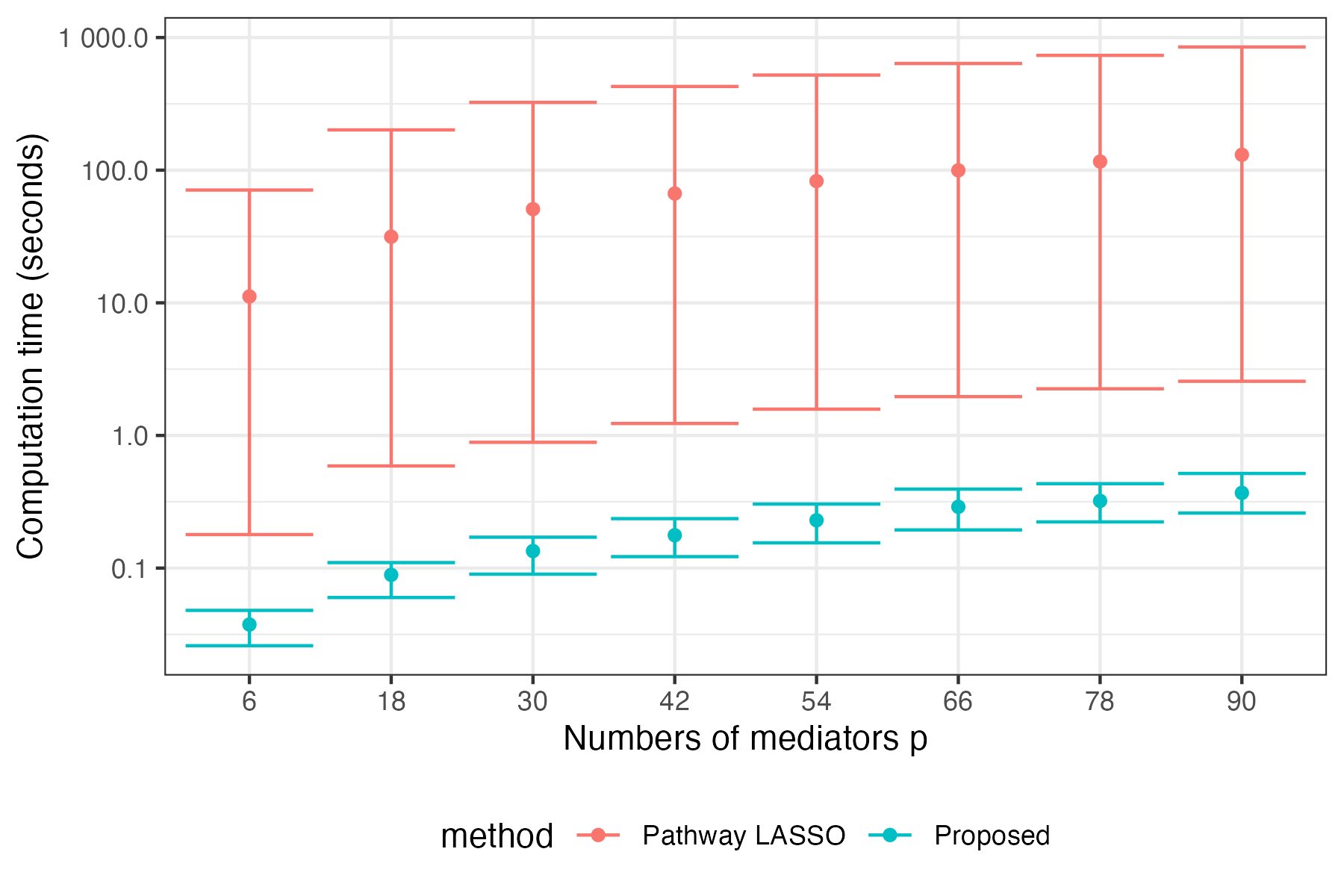}
    \caption{Empirical computation time  for fitting Pathway LASSO and \newm{} once across $100$ repetitions. The dots represent the average computation time, and the error bars indicate the $0.05$ and $0.95$ quantiles.}
    \label{fig:Pathway_runtime}
\end{figure}

\medskip 
\emph{Computation time.} 
The computation time of Pathway LASSO grows rapidly in $p$, making it prohibitive for the $p = 150$ setting considered in Section \ref{sec:simulations}. To illustrate, we next consider a smaller range of $p \in \{ 6, 18, 30, 42, 54, 66, 78, 90\}$. 
Figure \ref{fig:Pathway_runtime} compares the empirical computation time between Pathway LASSO and \newm{} when they are fitted once.   
Notably, Pathway LASSO is almost always $300$ times more time-consuming than \newm{}. 
 For our simulations in Section \ref{sec:simulations}, 
including Pathway LASSO is even more prohibitive as a larger $p=150$ is used, and cross-validation is applied for optimal penalty parameter $\lambda$.   
As an example, when $p = 90$, selecting the optimal $\lambda$ from $40$ candidates using $5$-fold cross-validation takes Pathway LASSO over $14$ hours, whereas \newm{} completes the same task in under $3$ minutes. 

\medskip 
\emph{Selection accuracy.} 
Due to the computational burden discussed above, we evaluate the selection accuracy of Pathway LASSO for  $p\in \{6,\ldots, 90\}$, which are smaller than $p=150$ in Section \ref{sec:simulations}, and their default range of hyperparameters is considered. 
Specifically, their penalty hyperparameter $\lambda$ is chosen from values 
ranging from $0.0014$ to $100000$, formed by combining two uniformly logarithmically spaced sequences, featuring a denser grid for smaller magnitudes and a sparser grid for larger magnitudes. 
When fitting JAP for comparison, we let $\lambdaA$ and $\lambdaB$ in \eqref{eq:fit_mediator} and \eqref{eq:fit_outcome} take the same range of forty values as $\lambda$ in Pathway LASSO,  
while fixing $\gAOne = \gATwo = \gATwo = \gBTwo = 1$ in \eqref{eq:wweights}.
Across all $p$ values, 
the empirical accuracy of selecting $\AS$ of JAP under the tuned hyperparameters is 1, and that of Pathway LASSO under all values of $\lambda$ are 0. 
In future studies, 
it could be of interest to improve the accuracy of Pathway LASSO  by exploring a larger range of candidate hyperparameters.

\end{document}